\def\doi{8 (1:24) 2012}
\newif\ifnocomment
\newif\iflongversion
\newcommand{\eg}{\emph{e.g.}}
\newcommand{\ie}{\emph{i.e.}}
\newcommand{\recvar}{X}
\newcommand{\varY}{Y}
\newcommand{\role}{\roleP}
\newcommand{\varrole}{\roleQ}
\newcommand{\ba}{\ensuremath{\mathtt{buyer1}}}
\newcommand{\bb}{\ensuremath{\mathtt{buyer2}}}
\newcommand{\sel}{\ensuremath{\mathtt{seller}}}
\newcommand{\roleP}{\mathtt{p}}
\newcommand{\roleQ}{\mathtt{q}}
\newcommand{\roleR}{\mathtt{r}}
\newcommand{\roleS}{\mathtt{s}}
\newcommand{\buy}{\ensuremath{\mathtt{buyer}}}
\newcommand{\sell}{\ensuremath{\mathtt{seller}}}
\newcommand{\roleSet}{\mathtt{\Pi}}
\newcommand{\roles}{\pi}
\newcommand{\gtype}{\mathcal{G}}
\newcommand{\gact}{\alpha}
\newcommand{\val}{\valA}
\newcommand{\varval}{\valB}
\newcommand{\valA}{a}
\newcommand{\valB}{b}
\newcommand{\valC}{c}
\newcommand{\valD}{d}
\newcommand{\valE}{e}
\newcommand{\valF}{f}
\newcommand{\valG}{g}
\newcommand{\valH}{h}
\newcommand{\valI}{i}
\newcommand{\valSet}{\mathcal{A}}
\newcommand{\buffer}{\mathbb{B}}
\newcommand{\emptybuffer}{\varepsilon}
\newcommand{\sesstT}{T}
\newcommand{\sesstS}{S}
\newcommand{\sesstR}{R}
\newcommand{\sesst}{\sesstT}
\newcommand{\varsesst}{\sesstS}
\newcommand{\lang}{L}
\newcommand{\str}{\varphi}
\newcommand{\varstr}{\psi}
\newcommand{\permutation}{\sigma}
\newcommand{\alp}{\Sigma}
\newcommand{\act}[3]{\texttt{#1}\lred{\textit{#2}}\texttt{#3}}
\newcommand{\xact}[3]{\texttt{#1}\xrightarrow{\makebox[8mm]{$\scriptstyle#2$}}\texttt{#3}}
\newcommand{\gaction}[3]{#1\lred{#3}#2}
\newcommand{\gseq}{;}
\newcommand{\gor}{\vee}
\newcommand{\gand}{\wedge}
\newcommand{\gend}{\mathsf{\color{lucagreen}skip}}
\newcommand{\gstar}{\mathclose{{}^*}}
\newcommand{\kstar}[1]{\mathbin{{}^{#1*}}}
\newcommand{\End}{\mysf{\color{lucagreen}end}}
\newcommand{\Output}{\bigoplus}
\newcommand{\Input}{\sum}
\newcommand{\Out}[2]{#1{!}#2}
\newcommand{\In}[2]{#1{?}#2}
\newcommand{\scup}{\uplus}
\newcommand{\bsep}{\fatsemi}
\newcommand{\eqdef}{\mathrel{\stackrel{\text{def}}{=}}}
\newcommand{\gless}[2]{#2\leqslant#1}
\newcommand{\meq}{\simeq}
\newcommand{\traces}{\mathrm{tr}}
\newcommand{\rec}[2]{\mathtt{rec}~#1.#2}%
\newcommand{\shuffle}{\mathbin{\rotatebox[origin=c]{270}{\ensuremath{\exists}}}}
\newcommand{\closure}[1]{#1^{\#}}
\newcommand{\perm}[1]{#1^{\circ}}
\newcommand{\asup}{\merge}
\newcommand{\substitution}[1]{\substitu(#1)}
\newcommand{\substitu}{\rho}
\newcommand{\substitutionZ}[1]{\substitu_0(#1)}
\newcommand{\substitutionL}[1]{\substitu_\ell(#1)}
\newcommand{\substitutionLP}[1]{\substitu_{\ell+1}(#1)}
\newcommand{\substitutionU}[1]{\substitu_1(#1)}
\newcommand{\quotesymbol}[1]{\guillemotleft\,$#1$\,\guillemotright}
\newcommand{\project}[3]{#1 \vdash #2 ~\triangleright~ #3}
\newcommand{\projecta}[3]{#1 \vdash_{\color{lucared}\mathsf{a}} #2 ~\triangleright~ #3}
\newcommand{\rulename}[1]{\text{\textsc{(#1)}}}
\newcommand{\cont}{\Updelta}
\newcommand{\type}{\mysf{t}}
\newcommand{\true}{\mysf{true}}
\newcommand{\false}{\mysf{false}}
\newcommand{\extsum}{+}
\newcommand{\intsum}{\oplus}
\newcommand{\lred}[1]{\stackrel{#1}{\longrightarrow}}
\newcommand{\xlred}[1]{\xrightarrow{#1}}
\newcommand{\wlred}[1]{\stackrel{#1}{\Longrightarrow}}
\newcommand{\xwlred}[1]{\xRightarrow{#1}}
\newcommand{\mysf}[1]{\ensuremath{{\sf {#1}}}}
\definecolor{beppeblue}{rgb}{0,0,0.4}
\definecolor{lucared}{rgb}{0.5,0,0}
\definecolor{lucagreen}{rgb}{0,0.3,0}
\newcommand{\set}[1]{\ensuremath{\{#1\}}}
\newcommand{\marginnote}[2]{
\ifnocomment
\else
    \marginpar{\parbox{\marginparwidth}{\flushleft \footnotesize \textbf{#1}: #2}}
\fi
}
\newcommand{\Mariangiola}[1]{\marginnote{M}{\textcolor{magenta}{#1}}}
\newcommand{\bailout}{\textit{bailout}}
\newcommand{\handover}{\textit{handover}}
\newcommand{\bailoutT}{\textit{bailout}}
\newcommand{\handoverT}{\textit{handover}}
\newcommand{\lh}[1]{\textit{#1}}
\theoremstyle{plain}
\newtheorem{lemma}{Lemma}[section]
\newtheorem{theorem}{Theorem}[section]
\newtheorem{corollary}{Corollary}[section]
\theoremstyle{definition}
\newtheorem{definition}{Definition}[section]
\theoremstyle{remark}
\newtheorem{remark}{Remark}[section]
\begin{document}

\title[On Global Types and Multi-Party Sessions]{On Global Types and Multi-Party Sessions\rsuper*}
\titlecomment{{\lsuper*}This work was
    presented as invited talk at FMOODS \& FORTE 2011, joint 13th IFIP
    International Conference on Formal Methods for Open Object-based
    Distributed Systems and 31th IFIP International Conference on
    FORmal TEchniques for Networked and Distributed Systems. A short 
    version was included in the proceedings thereof.}

\author[G.~Castagna]{Giuseppe Castagna\rsuper a}
\address{{\lsuper a}CNRS,  PPS, Univ Paris Diderot, Sorbonne Paris Cit\'e, Paris, France}
\email{Giuseppe.Castagna\@pps.jussieu.fr}  

\author[M.~Dezani-Ciancaglini]{Mariangiola Dezani-Ciancaglini\rsuper b}
\address{{\lsuper{b,c}}Dipartimento di Informatica, Universit\`a degli Studi di Torino, Torino, Italy}
\email{\{dezani,padovani\}\@di.unito.it}  

\author[L.~Padovani]{Luca Padovani\rsuper c}	
\address{\vskip-6 pt}

\subjclass{F.1.2, F.3.3, H.3.5, H.5.3}
\keywords{Web services, concurrency theory, type theory, subtyping, global types, session types.}


\begin{abstract}
  Global types are formal specifications that describe communication protocols in terms of their global interactions. We present a new, streamlined language of global types equipped with
  a trace-based semantics and whose features and restrictions are
  semantically justified.
  The multi-party sessions obtained projecting our global types enjoy
  a liveness property in addition to the traditional progress and are
  shown to be sound and complete with respect to the set of traces of
  the originating global type.
  Our notion of completeness is less demanding than the classical
  ones, allowing a multi-party session to leave out redundant traces
  from an underspecified global type.
  In addition to the technical content, we discuss some limitations of
  our language of global types and provide an extensive comparison
  with related specification languages adopted in different
  communities.
\end{abstract}

\maketitle

\section{Introduction}
\label{sec:intro}

Relating the global specification of a system of communicating
entities with an implementation (or description) of the single
entities is a standard problem in many different areas of computer
science.  The recent development of \emph{session-oriented
  interactions} has renewed the interest in this problem. In this work
we attack it from the behavioral type and process algebra perspectives
and briefly compare the approaches used in other areas.

A (multi-party) session is a place of interaction for a restricted
number of participants that communicate messages. The interaction may
involve the exchange of arbitrary sequences of messages of possibly
different types. Sessions are restricted to a (usually fixed) number
of participants, which makes them suitable as a structuring construct
for systems of communicating entities.
In this work we define a language to describe the interactions that
may take place among the participants implementing a given session. In
particular, we aim at a definition based on few ``essential''
assumptions that should not depend on the way each single participant
is implemented. To give an example, a bargaining protocol that
includes two participants, ``seller'' and ``buyer'', can be informally
described as follows:
\begin{quote}
  \emph{Seller sends buyer a price \underline{and} a description of
    the product; \underline{then} buyer sends seller acceptance
    \underline{or} it quits the conversation}.
\end{quote}

If we abstract from the value of the price and the content of the
description sent by the seller, this simple protocol describes just
two possible executions, according to whether the buyer accepts or
quits. If we consider that the price and the description are in
distinct messages then the possible executions become four, according
to which communication happens first. While the protocol above
describes a finite set of possible interactions, it can be easily
modified to accommodate infinitely many possible executions, as well
as additional conversations: for instance the protocol may allow
``buyer'' to answer ``seller'' with a counteroffer, or it may
interleave this bargaining with an independent bargaining with a
second seller.

All essential features of protocols are in the example above, which
connects some basic communication actions by the flow control points
we underlined in the text. More generally, we interpret a protocol as a possibly
infinite set of finite sequences of interactions between a fixed set of
participants. We argue that the 
sequences that characterize 
 a
protocol---and thus the protocol itself---can be described by a
language with one form of atomic action and three composition
operators.

\begin{desCription}
\item\noindent{\hskip-12 pt\bf Atomic actions:}\
The only atomic action is the interaction, which consists of one (or
more) sender(s) (\eg, ``seller sends''), the content of the
communication (\eg, ``a price'', ``a description'', ``acceptance''),
and one (or more) receiver(s) (\eg, ``buyer'').

\item\noindent{\hskip-12 pt\bf Compound actions:}\
Actions and, more generally, protocols can be composed in three
different ways. First, two protocols can be composed sequentially
(\eg, ``Seller sends buyer a price\dots; \emph{\textbf{then}} buyer
sends\dots'') thus imposing a precise order between the actions of the
composed protocols. Alternatively, two protocols can be composed
without specifying any constraint (\eg, ``Seller sends a
price \emph{\textbf{and}} (sends) a description'') thus indicating
that any order between the actions of the composed protocols is
acceptable. Finally, protocols can be composed in alternative (\eg,
``buyer sends acceptance \emph{\textbf{or}} it quits''), thus offering
a choice between two or more protocols only one of which may be
chosen.
\end{desCription}

More formally, we use $\gaction{\role}{\varrole}{\val}$ to state that
participant $\role$ sends participant $\varrole$ a message whose
content is described by $\val$, and we use \quotesymbol\gseq,
\quotesymbol\gand, and \quotesymbol\gor{} to denote sequential,
unconstrained, and alternative composition, respectively. Our initial
example can thus be rewritten 
as follows:
\begin{equation}\label{spec1}
\begin{array}{l}
(\act{\sell}{\textit{descr}}{\buy}\gand\act{\sell}{\textit{price}}{\buy})\gseq\\
(\act{\buy}{\textit{accept}}{\sell}\gor\act{\buy}{\textit{quit}}{\sell})
\end{array}
\end{equation}
The first two actions are composed without constraints, and they are to be
followed by one (and only one) action of the alternative before
ending. Interactions of unlimited length can be defined by resorting
to a Kleene star notation. For example to extend the previous protocol
so that the buyer may send a counter-offer and wait for a new price,
it suffices to add a Kleene-starred line:
\newcommand{\bro}{\texttt{broker}}
\begin{equation}\label{spec2}
\begin{array}{l}
(\act{\sell}{\textit{descr}}{\buy}\gand\act{\sell}{\textit{price}}{\buy})\gseq\\
(\act{\buy}{\textit{offer}}{\sell}\gseq\act{\sell}{\textit{price}}{\buy})
\texttt{*}\gseq\\
(\act{\buy}{\textit{accept}}{\sell}\gor\act{\buy}{\textit{quit}}{\sell})
\end{array}
\end{equation}

The description above states that, after having received (in no
particular order) the price and the description from the seller, the
buyer can initiate a loop of zero or more interactions and then decide
whether to accept or quit.

Whenever there is an alternative there must be a participant that
decides which path to take. In both examples it is {\buy} that makes
the choice by deciding whether to send \textit{accept} or
\textit{quit}. The presence of a participant that decides holds true
in loops too, since it is again {\buy} that decides whether to enter
or repeat the iteration (by sending \textit{offer}) or to exit it (by
sending \textit{accept} or \textit{quit}).
We will later show that absence of such decision-makers makes
protocols impossible to implement. This last point critically depends
on the main hypothesis we assume about the systems we are going to the
describe, that is the absence of \emph{covert channels}. On the one
hand, we try to develop a protocol description language that is as
generic as possible; on the other hand, we limit the power of the
system and require all communications between different participants
to be explicitly stated. In doing so we rule out protocols whose
implementation essentially relies on the presence of secret/invisible
communications between participants: a protocol description must
contain all and only the interactions used to implement it.



Protocol specifications such as the ones presented above are usually
called \emph{global types} to emphasize the fact that they describe
the acceptable behaviors of a system from a global point of view. In
an actual implementation of the system, though, each participant
autonomously implements a different part of the protocol.
To understand whether an implementation satisfies a specification, one
has to consider the set of all possible sequences of synchronizations
performed by the implementation and check whether this set satisfies
five basic properties:
\begin{enumerate}[(1)]
\item \emph{Sequentiality:} if the specification states that two
  interactions must occur in a given order (by separating them by a
  \quotesymbol;), then this order must be respected by all possible
  executions. So an implementation in which {\buy} may send
  \textit{accept} before receiving \textit{price} violates the
  specification~\eqref{spec1} (and~\eqref{spec2}).

\item \emph{Alternativeness:} if the specification states that two
  interactions are alternative, then every execution must exhibit one
  and only one of these two actions. So an implementation in which
  {\buy} emits both \textit{accept} and \textit{quit} (or none of
  them) in the same execution violates the specification~\eqref{spec1}.

\item \emph{Shuffling:} if the specification composes two sequences of
  interactions in an unconstrained way, then all executions must
  exhibit some shuffling (in the sense used in combinatorics and
  algebra) of these sequences. So an implementation in which {\sell}
  emits \textit{price} without emitting \textit{descr} violates the
  specification~\eqref{spec1}.

\item \emph{Fitness:} if the implementation exhibits a sequence of
  interactions, then this sequence is expected by (\ie, it
  fits) the specification. So any implementation in which {\sell}
  sends {\buy} any message other than \textit{price} and
  \textit{descr} violates the specification~\eqref{spec1}.

\item \emph{Exhaustivity:} if some sequence of interactions is
  described by the specification, then there must exist at least an
  execution of the implementation that exhibits these actions
  (possibly in a different order). So an implementation in which
  no execution of {\buy} emits \textit{accept} violates the specification~\eqref{spec1}.
\end{enumerate}


\noindent Checking whether an implemented system satisfies a specification by
comparing the actual and the expected sequences of interactions is
non-trivial, for systems are usually infinite-state.  Therefore, on
the lines of \cite{
CHY08}, we proceed the
other way round: we extract from a global type the local specification
(usually dubbed \emph{local type} or \emph{session type}
\cite{THK,honda.vasconcelos.kubo:language-primitives}) of each
participant in the system and we type-check the implementation of each
participant against the corresponding session
type. 
If the projection operation is done properly and the global
specification satisfies some well-formedness conditions, then we are
guaranteed that the implementation satisfies the specification.
%
%
As an example, the global type~(\ref{spec1}) can be projected to the
following behaviors for {\buy} and {\sell}:
\[
\begin{array}{rcl}
\sell &\mapsto & \Out{\buy}{\textit{descr}}.\Out{\buy}{\textit{price}}.(\In{\buy}{\textit{accept}}\extsum\In{\buy}{\textit{quit}})\\
\buy  &\mapsto & \In{\sell}{\textit{descr}}.\In{\sell}{\textit{price}}.(\Out{\sell}{\textit{accept}}\intsum\Out{\sell}{\textit{quit}}) 
\end{array}
\]
or to
\[
\begin{array}{rcl}
\sell &\mapsto   &\Out{\buy}{\textit{price}}.\Out{\buy}{\textit{descr}}.(\In{\buy}{\textit{accept}}\extsum\In{\buy}{\textit{quit}})\\
\buy  &\mapsto   &\In{\sell}{\textit{price}}.\In{\sell}{\textit{descr}}.(\Out{\sell}{\textit{accept}}\intsum\Out{\sell}{\textit{quit}}) 
\end{array}
\]
where $\Out\role\val$ denotes the output of a message $\val$ to
participant $\role$, $\In\role\val$ the input of a message $\val$ from
participant $\role$,
$\In{\role}{\val}.\sesst\extsum\In{\varrole}{\varval}.\varsesst$ the
(external) choice to continue as $\sesst$ or $\varsesst$ according to
whether $\val$ is received from $\role$ or $\varval$ is received from
$\varrole$ and, finally,
$\Out{\role}{\val}.\sesst\intsum\Out{\varrole}{\varval}.\varsesst$
denotes the (internal) choice between sending $\val$ to $\role$ and
continue as $\sesst$ or sending $\varval$ to $\varrole$ and continue
as $\varsesst$. We will call {\em session environments} the mappings from participants to their session types.
It is easy to see that any two processes implementing  {\buy} and {\sell} will
satisfy the global type~(\ref{spec1}) if \emph{and only  if} 
 their visible
behavior matches one of the two session environments 
above (these session environments 
thus represent some sort of minimal typings of processes implementing {\buy} and {\sell}). In
particular, both the above session environments 
are fitting and
exhaustive with respect to the specification since they precisely
describe what the single participants are expected and bound to do.

In this work we will discuss how to characterize a set of session
environments (if any) from participants to session types that is sound
and complete, with respect to a given global type.
We will also show an algorithm that, in several practical cases, can
effectively perform the extraction of the session environment from a
global type.
Observe that there are global types
that are intrinsically flawed, in the sense that they do not admit any
implementation (without covert channels) satisfying them. We classify
flawed global types in three categories, according to the seriousness
of their flaws.
\begin{desCription}
\item\noindent{\hskip-12 pt\bf No sequentiality:}\
  The mildest flaws are those in which the global type specifies some
  sequentiality constraint between independent interactions, such as
  in $(\gaction{\role}{\varrole}{\valA} \gseq
  \gaction{\roleR}{\roleS}{\valB})$, since it is impossible to
  implement $\roleR$ so that it sends $\valB$ only after that $\roleQ$
  has received $\valA$ (unless this reception is notified on a covert
  channel, of course). Therefore, it is possible to find exhaustive
  (but not fitting) implementations that include some unexpected
  sequences which differ from the expected ones only by a permutation
  of interactions done by independent participants. The specification
  at issue can be easily patched by replacing some \quotesymbol\gseq's
  by \quotesymbol\gand's.

\item\noindent{\hskip-12 pt\bf No knowledge for choice:}\
  A more severe kind of flaw occurs when the global type requires some
  participant to behave in different ways in accordance with some
  choice it is unaware of.  For instance, in the global type
\[
\begin{array}{l}
  (\gaction{\role}{\varrole}{\valA} \gseq 
   \gaction{\varrole}{\roleR}{\valA} \gseq 
   \gaction{\roleR}{\role}{\valA})
  \quad\gor\quad
  (\gaction{\role}{\varrole}{\valB} \gseq 
   \gaction{\varrole}{\roleR}{\valA} \gseq
   \gaction{\roleR}{\role}{\valB})
\end{array}
\]
participant $\role$ chooses the branch to execute, but after having
received $\valA$ from $\varrole$ participant $\roleR$ has no way to
know whether it has to send $\valA$ or $\valB$.
Also in this case it is possible to find exhaustive (but
  not fitting)
implementations of the global type where the participant $\roleR$
chooses to send $\valA$ or $\valB$ independently of what $\roleP$
decided to do.

\item\noindent{\hskip-12 pt\bf No knowledge, no choice:}\
  In the worst case it is not possible to find an exhaustive
  implementation of the global type, for it specifies some combination
  of incompatible behaviors, such as performing an input or an output
  in mutual exclusion.
  This typically is the case of the absence of a decision-maker in the
  alternatives such as in
\[
\gaction{\role}{\varrole}{\valA} \gor 
\gaction{\varrole}{\role}{\valB}
\]
where each participant is required to choose between sending or
receiving. There seems to be no obvious way to patch these global
types without reconsidering also the intended semantics.
\end{desCription}
We conclude this introduction by stressing that in this work we focus
on single sessions. The participants of a system can concurrently
implement and bring forward different sessions but we suppose the
management of different sessions (\eg, the exchange of sessions
channels) to belong to the meta-level. The internalization of such a
level (\ie, the use of delegation) is left for future work (see
Section~\ref{mpst}).

\subsubsection*{Outline and contributions.}

We introduce a streamlined language of global specifications---that we
dub \emph{global types} (Section~\ref{sec:gtypes})---and relate it
with \emph{session environments} (Section~\ref{sec:sessions}), that
is, with sets of independent, sequential, asynchronous {\em session
  types} to be type-checked against implementations.  Global types are
just regular expressions augmented with a shuffling operator and their
semantics is defined in terms of finite sequences of interactions.
The semantics chosen for global types ensures that every
implementation of a global type preserves the possibility to reach a
state where \emph{every} participant has successfully terminated.
This implies that no participant of a multi-party session starves
waiting for messages that are never sent or sends messages that no
other participant will ever receive.  This property is stronger than
the progress enforced by other theories of multi-party sessions, where
it is enough that two participants synchronize to be able to say that
the session has progress.  Technically, we make a \emph{strong
  fairness assumption} on sessions by considering only fair
computations, those where infinitely often enabled transitions occur
infinitely often.
%

In Section~\ref{sec:projection} we study the relationship between
global types and sessions. We do so by defining a projection operation
that extracts from a global type \emph{all} the (sets of) possible session types 
of its participants. This projection is useful not only to
check the implementability of a global description (and, incidentally,
to formally define the notions of errors informally described so far)
but, above all, to relate in a compositional and modular way a global
type with the sets of distributed processes 
that implement it. We also
identify a class of well-formed global types whose projections
need no covert channels. Interestingly, we are able to effectively
characterize well-formed global types solely in terms of their
semantics.




In Section~\ref{sec:sal} we present a projection algorithm for global
types. The effective generation of all possible projections is
impossible.  The reason is that the projectability of a global type
may rely on some global knowledge that is no longer available when
working at the level of single session types: while in a global
approach we can, say, add to some participant new synchronization
offers that, thanks to our global knowledge, we know will never be
used, this cannot be done when working at the level of single
participant. Therefore in order to work at the projected level we will
use stronger assumptions that ensure a sound implementation in all
possible contexts.

In Section~\ref{sec:multistar} we show some limitations deriving from
the use of the Kleene star operator in our language of global types,
and we present one possible way to circumvent them.
Section~\ref{sec:related} contains an extended survey of related work,
with samples of the literature of session types and session
choreography expressed in our syntax and an in-depth comparison with
our work. Few final considerations conclude the work in
Section~\ref{sec:conclusion}. The Appendix contains proofs and some
technical discussions.

We summarize the contributions of our work below:
\begin{iteMize}{$\bullet$}
\item With respect to (multi-party) session type
  theories~\cite{
  CHY08}, we adopt a more
  abstract and ---we claim--- natural language of global types
  (Section~\ref{sec:gtypes}) that is closely related to the language
  of Web service choreographies in~\cite{BZ07}.
  We define a notion of session correctness that depends on a strong
  fairness assumption.
  On the one hand, this is more demanding than in other multi-party
  session theories because we insist on the property that a correct
  session must preserve the ability to reach a terminated state;
  on the other hand, we claim that eventual termination is indeed a
  desirable property of sessions, and we provide a number of examples
  showing that, if the hypothesis of an eventual termination of every session is assumed, our formalism allows for a range of projectable global
  specifications that is strictly larger than that 
  other formalisms, under the same assumption, have.

\item With respect to Web service choreography
  languages~\cite{BZ07,LGMZ08,BZ08,BLZ08}, where projection is defined
  by an homomorphism between the global and the local specifications,
  we define a significantly more sophisticated projection procedure
  (Sections~\ref{sec:projection} and~\ref{sec:sal}) with two main
  upshots.
  First, we handle the projection of unconstrained composition of
  global specifications in a more flexible way, by permitting
  (partial) serialization of independent activities whenever this
  is either convenient or necessary.
  Second, we widen the range of projectable choreographies by imposing
  fewer constraints on the way alternative specifications can be
  composed together.
  We also point out some shortcomings of the Kleene star operator and
  propose a solution based on $k$-exit iterations that circumvents
  them (Section~\ref{sec:multistar}).

\item In order to account for the possible serializations of
  independent activities, we identify an original notion of
  completeness (Definition~\ref{pog}) of projections with respect to
  global specifications that is weaker (and consequently more
  flexible) than the corresponding notions in other theories.

\item Section~\ref{sec:related} provides a rather detailed survey of a
  wide range of related formalisms and techniques.
\end{iteMize}

\section{Global Types}
\label{sec:gtypes}

In this section we define the syntax and semantics of global types.
We assume a set $\valSet$ of \emph{message types}, ranged over by
$\val$, $\varval$, \dots, and a set $\roleSet$ of \emph{roles}, ranged
over by $\role$, $\varrole$, \dots, which we use to uniquely identify
the participants of a session; we let $\roles$, \dots range over
non-empty, finite sets of roles.

\begin{table}[t]
\caption{\label{tab:gtypes}\strut Syntax of global types.}
\framebox[\textwidth]{
\begin{math}
\displaystyle
\begin{array}{rcl@{\qquad}l}
 \gtype & ~~::=~~ & & \text{\textbf{Global Type}} \\
 &      & \gend & \text{(skip)} \\
 & \mid & \gaction{\roles}{\role}{\val} & \text{(interaction)} \\
 & \mid & \gtype\gseq\gtype & \text{(sequence)} \\
 & \mid & \gtype\gand\gtype & \text{(both)} \\
 & \mid & \gtype \gor \gtype & \text{(either)} \\
 & \mid & \gtype \gstar & \text{(star)} \\
\end{array}
\end{math}
}
\end{table}

Global types, ranged over by $\gtype$,  are the terms generated
by the grammar in Table~\ref{tab:gtypes}. 
Their syntax was already explained in Section~\ref{sec:intro} except
for two novelties.
First, we include a $\gend$ atom which denotes the unit of sequential
composition (it plays the same role as the empty word in regular
expressions). This is useful, for instance, to express optional
interactions. Thus, if in our example we want the buyer to do at most
one counteroffer instead of several ones, we just replace the starred
line in~(\ref{spec2}) by
\[
(\act{\buy}{\textit{offer}}{\sell}\gseq\act{\sell}{\textit{price}}{\buy})\gor\gend
\]
which, using  syntactic sugar of regular expressions, might be
rendered as
\[
(\act{\buy}{\textit{offer}}{\sell}\gseq\act{\sell}{\textit{price}}{\buy})
\verb'?'
\,
\]

Second, we generalize interactions by allowing a finite set of roles
on the l.h.s. of interactions. Therefore, $\gaction\roles\role\val$
denotes the fact that (the participant identified by) $\role$ waits
for an $\val$ message from all of the participants whose tags are in
$\roles$. We will write $\gaction\role\varrole\val$ as a shorthand for
$\gaction{\{\role\}}{\varrole}{\val}$.
An example showing the usefulness of multiple roles on the left-hand side of
actions is the following one
\begin{equation*}
\begin{array}{l}
(\act{seller}{price}{buyer1}\gand\act{bank}{mortgage}{buyer2})\gseq\\
 (\act{$\{$buyer1,buyer2$\}$}{accept}{seller}\gand\act{$\{$buyer1,buyer2$\}$}{accept}{bank})\end{array}
\end{equation*}
which represents two buyers waiting for both the price from a seller
and the mortgage from a bank before deciding the purchase.  Notice
that without this generalization the communication of
$\mathit{accept}$ to, say, the \texttt{seller} would be performed by
two distinct communications from $\mathtt{buyer1}$ and
$\mathtt{buyer2}$. But in that case, how could $\mathtt{buyer1}$ be
sure that $\mathtt{buyer2}$ had received $\mathit{mortgage}$ before
sending $\mathit{accept}$ to $\mathtt{seller}$? And symmetrically, how
could $\mathtt{buyer2}$ be sure that $\mathtt{buyer1}$ had received
$\mathit{price}$ before sending $\mathit{accept}$ to
$\mathtt{seller}$? Actions with multiple-senders allow us to express
the join of independent activities (in this case, the receival of
$\mathit{price}$ and $\mathit{mortgage}$).

To be as general as possible, one could also consider interactions of
the form $\gaction\roles{\roles'}\val$, which could be used to specify
broadcast communications between participants.
We avoided this generalization since it cannot be implemented without
covert channels. In fact in a sound execution
of \[\act{seller}{price}{$\{$buyer1,buyer2$\}$},\] the reception of
\textit{price} by \texttt{buyer1} should wait also for the reception
of \textit{price} by \texttt{buyer2} and vice versa, and this requires
a synchronization between \texttt{buyer1} and \texttt{buyer2}.


In general we will assume $\role\not\in\roles$ for every interaction
$\gaction\roles\role\val$ occurring in a global type, that is, we forbid
participants to send messages to themselves.

For the sake of readability we adopt the following precedence of
global type operators $\longrightarrow$~~ $\gstar$~~ $\gseq$~~
$\gand$~~ $\gor$. 

Global types denote languages of legal interactions that can occur in
a multi-party session. These languages are defined over the alphabet
of interactions
\[
\alp=\set{\gaction{\roles}{\role}{\val}\mid
  \roles\subset_{\mathit{fin}}\roleSet, \role\in\roleSet, \role\not\in\roles,
  \val\in\valSet}
\]
and we use $\gact$ as short for $\gaction{\roles}{\role}{\val}$ when
possible; we use $\str$, $\varstr$, \dots to range over strings in
$\alp^*$ and $\varepsilon$ to denote the empty string, as usual. To
improve readability we will occasionally use \quotesymbol\gseq{} to
denote string concatenation.

In order to express the language of a global type having the shape $\gtype_1 \gand
\gtype_2$ we need a standard shuffling operator over languages, which
can be defined as follows:

\begin{definition}[shuffling]
  The \emph{shuffle} of $\lang_1$ and $\lang_2$, denoted by $ \lang_1
  \shuffle \lang_2$, is the language defined by:
$
  \lang_1 \shuffle \lang_2
  \eqdef
  \{ \str_1\varstr_1\cdots\str_n\varstr_n \mid \str_1\cdots\str_n \in \lang_1
  \wedge
  \varstr_1\cdots\varstr_n \in \lang_2 \}.
$
\end{definition}

Observe that, in $\lang_1 \shuffle \lang_2$, the order of interactions
coming from one language is preserved, but these interactions can be
interspersed with other interactions coming from the other language.

\begin{definition}[traces of global types]
\label{def:traces}
The set of \emph{traces} of a global type is inductively defined by
the following equations:
\[
\begin{array}{r@{~}c@{~}l}
  \traces(\gend) & = & \{ \varepsilon \} \\
  \traces(\gaction{\roles}{\role}{\val}) & = & \{ \gaction\roles\role\val \} 
\end{array} 
\quad
\begin{array}{r@{~}c@{~}l}
  \traces(\gtype_1\gseq\gtype_2) & = & \traces(\gtype_1)\traces(\gtype_2) \\
  \traces(\gtype\gstar) & = & (\traces(\gtype))^\star 
\end{array}
\quad
\begin{array}{r@{~}c@{~}l}
  \traces(\gtype_1\gor\gtype_2) & = & \traces(\gtype_1)\cup\traces(\gtype_2) \\
  \traces(\gtype_1\gand\gtype_2) & = & \traces(\gtype_1)\shuffle\traces(\gtype_2) 
\end{array}
\]
where juxtaposition denotes concatenation and $(\cdot)^\star$ is the usual
Kleene closure of regular languages.
\end{definition}

Before we move on, it is worth noting that $\traces(\gtype)$ is a
regular language (recall that regular languages are closed under
shuffling). Since a regular language is made of \emph{finite strings},
we are implicitly making the assumption that a global type specifies
interactions of finite length. This means that we are considering
interactions of arbitrary length, but such that the termination of all
the involved participants is always within reach. This is a subtle,
yet radical change from other multi-party session theories, where
infinite interactions are considered legal.

\iflongversion
By way of example, consider the global type
\[
\gtype =
(\gaction\role\varrole\valA \gand \gaction\role\varrole\valB);
(\gaction\varrole\role\valC; \gaction\role\varrole\valB)\gstar;
(\gaction\varrole\role\valD \gor \gaction\varrole\role\valE)
\]
which represents the bargain protocol described in the
introduction. Every long enough string in $\traces(\gtype)$ has either the
form
\[
 \varstr;
  \gaction\varrole\role\valC;\gaction\role\varrole\valB;
  \cdots;
  \gaction\varrole\role\valD
  \text{\qquad or\qquad}
  \varstr;
  \gaction\varrole\role\valC;\gaction\role\varrole\valB;
  \cdots;
  \gaction\varrole\role\valE
\]
for some appropriate $\varstr$, meaning that the phase in which the
buyer makes new offers can be arbitrarily long, although it must
eventually terminate with the decision to either quit or accept.
\fi

\newpage
\section{Multi-Party Sessions}
\label{sec:sessions}

We devote this section to the formal definition of the behavior of the
participants of a multi-party session.

\subsection{Session Types}

We need an infinite set of recursion variables ranged over by
$\recvar$, \dots. Pre-session types, ranged over by $\sesst$,
$\varsesst$, \dots, are the terms generated by the grammar in
Table~\ref{tab:stypes} such that all recursion variables are guarded
by at least one input or output prefix.
We consider pre-session types modulo associativity, commutativity, and
idempotence of internal and external choices, fold/unfold of
recursions and the equalities
\[
\Out{\role}{\val}.\sesst\intsum\Out{\role}{\val}.\varsesst
=
\Out{\role}{\val}.(\sesst\intsum\varsesst)
\qquad\qquad
\In{\roles}{\val}.\sesst\extsum\In{\roles}{\val}.\varsesst
=
\In{\roles}{\val}.(\sesst\extsum\varsesst)
\]

Pre-session types are behavioral descriptions of the participants of a
multi-party session.
Informally, $\End$ describes a successfully terminated party that no
longer participates to a session.
The pre-session type $\Out\role\val.\sesst$
describes a participant that sends an $\val$ message to participant
$\role$ and afterwards behaves according to $\sesst$; 
the pre-session type $\In\roles\val.\sesst$ describes a participant
that waits for an $\val$ message from all the participants in $\roles$
and, upon arrival of the message, behaves according to $\sesst$; we
will usually abbreviate $\In{\{\role\}}\val.\sesst$ with
$\In\role\val.\sesst$.
Behaviors can be combined by means of behavioral choices $\intsum$ and
$\extsum$: $\sesst \intsum \varsesst$ describes a participant that
internally decides whether to behave according to $\sesst$ or
$\varsesst$; $\sesst \extsum \varsesst$ describes a participant that
offers to the other participants two possible behaviors, $\sesst$ and
$\varsesst$. The choice as to which behavior is taken depends on the
messages sent by the other participants.
In the following, we sometimes use $n$-ary versions of internal and
external choices and write, for example, $\Output_{i=1}^n \Out{\role_i}{\val_i}.\sesst_i$ for
$\Out{\role_1}{\val_1}.\sesst_1 \intsum \cdots \intsum
\Out{\role_n}{\val_n}.\sesst_n$ and
$\Input_{i=1}^n
\In{\roles_i}{\val_i}.\sesst_i$ for $\In{\roles_1}{\val_1}.\sesst_1
\extsum \cdots \extsum \In{\roles_n}{\val_n}.\sesst_n$.
As usual, terms $\recvar$ and $\rec\recvar\sesst$ are used for
describing recursive behaviors. For example,
$\rec\recvar(\Out\role\val.\recvar \intsum \Out\role\valB.\End)$
describes a participant that sends an arbitrary number of $\val$
messages to $\role$ and terminates by sending a $\valB$ message; dually, $\rec\recvar(\In\role\val.\recvar \extsum
\In\role\valB.\End)$ describes a participant that is capable of
receiving an arbitrary number of $\val$ messages from $\role$ and
terminates as soon a $\valB$ message is received.

\begin{table}[t]
\caption{\label{tab:stypes}\strut Syntax of pre-session types.}
\framebox[\textwidth]{
\begin{math}
\displaystyle
\begin{array}{rcl@{\qquad}l}
 \sesst
 & ::= & & \text{\textbf{Pre-Session Type}}  \\ 
 &      & \End & \text{(termination)} \\
 &  |   & \recvar & \text{(variable)} \\
 &  |  &  \Out{\role}{\val}.\sesst & \text{(output)} \\
 &  |  & \In{\roles}{\val}.\sesst& \text{(input)} \\
 &  |  &  \sesst\intsum\sesst & \text{(internal choice)} \\
 &  |  &  \sesst\extsum\sesst & \text{(external choice)} \\
 &  |  &  \rec\recvar\sesst & \text{(recursion)} \\
\end{array}
\end{math}
}
\end{table}

Session types are the pre-session types where internal choices are
used to combine outputs, external choices are used to combine inputs,
and the continuation after every prefix is uniquely determined by the
prefix. Formally:

\begin{definition}[session types]
  A pre-session type $\sesst$ is a \emph{session type} if either:
\begin{iteMize}{$\bullet$}
\item $\sesst = \End$, or

\item $\sesst=\Output_{i\in I} \Out{\role_i}{\val_i}.\sesst_i$ and
  $\forall i,j\in I$ we have that $\Out{\role_i}{\val_i} =
  \Out{\role_j}{\val_j}$ implies $i = j$ and each $\sesst_i$ is a
  session type, or

\item $\sesst=\Input_{i\in I} \In{\roles_i}{\val_i}.\sesst_i$ and
  $\forall i,j\in I$ we have that $\roles_i \subseteq \roles_j$ and
  $\val_i = \val_j$ imply $i= j$ and each $\sesst_i$ is a session
  type.
\end{iteMize}
\end{definition}

\subsection{Session Environments}

A session environment is defined as the set of the session types of its
participants, where each participant is uniquely identified by a
role. Formally:

\begin{definition}[session environment]
  A \emph{session environment} (briefly, \emph{session}) is a finite
  map $\{ \role_i : \sesst_i \}_{i\in I}$.
\end{definition}

In what follows we use $\cont$ to range over sessions and we write
$\cont \scup \cont'$ to denote the union of sessions, when their
domains are disjoint.

To describe the operational semantics of a session we
model an asynchronous form of communication where the messages sent by
the participants of the session are stored within a \emph{buffer}
associated with the session. Each message has the form
$\gaction\role\varrole\val$ describing the sender $\role$, the
receiver $\varrole$, and the type $\val$ of message.
Buffers, ranged over by $\buffer$, \dots, are finite sequences
$\gaction{\role_1}{\varrole_1}{\val_1}::\cdots::\gaction{\role_n}{\varrole_n}{\val_n}$
of messages considered modulo the least congruence $\meq$ over buffers
such that
\[
  \gaction\role\varrole\val::\gaction{\role'}{\varrole'}{\varval}
  \meq
  \gaction{\role'}{\varrole'}{\varval}::\gaction\role\varrole\val
\]
when $ \role\ne\role'$ or $\varrole\ne\varrole'$, that is, we care
about the order of messages in the buffer only when these have both
the same sender and the same receiver. In practice, this corresponds
to a form of communication where each pair of participants of a
multi-party session is connected by a distinct FIFO buffer.

There are two possible reductions of a session:
\[\begin{array}{@{}rclr@{}}
 \buffer\bsep
    \{ \role : \Output_{i\in I} \Out{\role_i}{\val_i}.\sesst_i \} \scup \cont
    &\lred{}&
    (\gaction{\role}{\role_k}{\val_k}) {::} \buffer\bsep
    \{ \role : \sesst_k \} \scup \cont \hspace{-2em}&\qquad\scriptstyle (k\in I)~\\\\
\buffer{::}(\gaction{\role_i\!}{\!\role}{\val})_{i\in I}\bsep
    \{ \role : \Input_{j\in J} \In{\roles_j}{\val_j}.\sesst_j \} \scup \cont
    &\xlred{\gaction{\roles_k}{\role}{\val}}&
    \buffer\bsep
    \{ \role : \sesst_k \} \scup \cont &{\scriptstyle\left(
        \begin{array}{c}\scriptstyle
          k\in J\quad \val_k = \val
          \\\scriptstyle
          \roles_k = \{ \role_i \mid i \in I \}
        \end{array}\right)}   
  
\end{array}
\]
The first rule describes the effect of an output operation performed
by participant $\role$, which stores the message
$\gaction\role{\role_k}{\val_k}$ in the buffer and leaves participant
$\role$ with a residual session type $\sesst_k$ corresponding to the
message that has been sent.
The second rule describes the effect of an input operation performed
by participant $\role$. If the buffer contains enough messages of type
$\val$ coming from all the participants in $\roles_k$, those messages
are removed from the buffer and the receiver continues as described in
$\sesst_k$.
In this rule we decorate the reduction relation with
$\gaction{\roles_k}{\role}{\val}$ that describes the occurred
interaction (as we have already remarked, we take the point of view
that an interaction is completed when messages are received).
This decoration will allow us to relate the behavior of an implemented
session with the traces of a global type (see
Definition~\ref{def:traces}).
According to this semantics, the input prefixes
$\In{\{\role_1,\dots,\role_n\}}\val$ resemble \emph{join patterns}
$\In{\role_1}\val \mathbin{\&} \cdots \mathbin{\&} \In{\role_n}\val$
in the join calculus~\cite{FournetGonthier96}, except that we impose
that all the messages coming from $\role_1,\dots,\role_n$ have the
same type.

We adopt some conventional notation: we write $\wlred{}$ for the
reflexive, transitive closure of $\lred{}$; we write $\wlred\gact$ for
the composition $\wlred{}\lred\gact\wlred{}$ and
$\xwlred{\gact_1\cdots\gact_n}$ for the composition
$\wlred{\gact_1}\cdots\wlred{\gact_n}$.

We can now formally characterize the ``correct sessions'' as those in
which, no matter how they reduce, it is always possible to reach a
state where all of the participants are successfully terminated and
the buffer has been emptied.

\begin{definition}[live session]
\label{def:liveness}
We say that $\cont$ is a \emph{live session} if $\emptybuffer\bsep
\cont \wlred{\str} \buffer\bsep \cont'$ implies $\buffer\bsep \cont'
\wlred{\varstr} \emptybuffer\bsep \{ \role_i : \End \}_{i\in I}$ for
some $\varstr$.
\end{definition}

We adopt the term ``live session'' to emphasize the fact that
Definition~\ref{def:liveness} states a \emph{liveness property}: every
finite computation $\emptybuffer\bsep \cont \wlred{\str} \buffer\bsep
\cont'$ can always be extended to a successful computation
$\emptybuffer\bsep \cont \wlred{\str} \buffer\bsep \cont'
\wlred{\varstr} \emptybuffer\bsep \{ \role_i : \End \}_{i\in I}$.
This is stronger than the progress property enforced by other
multi-party session type theories, where it is only required that a
session must never get stuck (but it is possible that some
participants starve for messages that are never sent).
As an example, the session
\[
\cont_1 =
  \{ \role:\rec\recvar(\Out\varrole\val.\recvar \intsum \Out\varrole\valB.\End)\;,\;
   \varrole:\rec\varY(\In\role\val.\varY \extsum \In\role\valB.\End)
\}
\]
is alive because, no matter how many $\val$ messages $\role$ sends,
$\varrole$ can receive all of them and, if $\role$ chooses to send a
$\valB$ message, the interaction terminates successfully for both
$\role$ and $\varrole$.
This example also shows that, despite the fact that session types
describe finite-state processes, the session $\cont_1$ is not
finite-state, in the sense that the set of configurations $\{
(\buffer\bsep \cont') \mid \exists \str, \buffer, \cont':
\varepsilon\bsep \cont_1 \wlred\str \buffer\bsep \cont' \}$ is
infinite. This happens because there is no bound on the size of the
buffer and an arbitrary number of $\val$ messages sent by $\role$ can
accumulate in $\buffer$ before $\varrole$ receives them. As a
consequence, the fact that a session is alive cannot be established in
general by means of a brute force algorithm that checks every
reachable configuration.
By contrast, the session
\[
\cont_2 = \{ \role:\rec\recvar\Out\varrole\val.\recvar\;,\;
\varrole:\rec\varY\In\role\val.\varY \}
\]
which is normally regarded correct in other session type theories, is
not alive because there is no way for $\role$ and $\varrole$ to reach
a successfully terminated state. The point is that hitherto
correctness of session was associated to progress (\ie, the system is
never stuck). This is a weak notion of correctness since, for instance
the session $\cont_2\scup\{\roleR:\In\role c.\End\}$ satisfies
progress even though role $\roleR$ starves waiting for its
input. While in this example starvation is clear since no $c$ message
is ever sent, determining starvation is in general less obvious, as for
\[
\cont_3 = \{ \role:\rec\recvar\Out\varrole\val.\Out\varrole\varval.\recvar\;,\;
\varrole:\rec\varY(\In\role\val.\In\role\varval.\varY+\In\role\varval.\Out\roleR c.\End)\;,\;\roleR:\In\varrole c.\End\}
\]
which satisfies progress, where every input corresponds to a compatible output, and viceversa, but which is not alive.

We remark once again that our work focuses on a single session. In
particular, our definition of live session does not preclude the
existence of a perpetual server that opens an unbounded number of
sessions, each of them having a finite but unbounded length.

We can now define the traces of a session as the set of sequences of
interactions that can occur in every possible reduction. It is
convenient to define the traces of an incorrect (\ie, non-live) session as the empty
set (observe that $\traces(\gtype) \ne \emptyset$ for every $\gtype$).

\begin{definition}[session traces]\label{st}
\[
\traces(\cont) \eqdef
\begin{cases}
  \{ \str \mid \emptybuffer \bsep \cont \wlred{\str}
  \emptybuffer \bsep \{ \role_i : \End \}_{i\in I} \} & \text{if $\cont$ is a live session} \\
  \emptyset & \text{otherwise}
\end{cases}
\]
\end{definition}

It is easy to verify that $\traces(\cont_1) =
\traces((\gaction\role\varrole\valA)\gstar;\gaction\role\varrole\valB)$
while $\traces(\cont_2) = \traces(\cont_3) = \emptyset$ since neither $\cont_2$ nor $\cont_3$ is a live
session.


\section{Semantic projection}
\label{sec:projection}

In this section we show how to project a global type to the session
types of its participants ---\ie, to a session--- in such a way
that the projection is correct with respect to the global type.
Before we move on, we must be more precise about what we mean by
correctness of a session $\cont$ with respect to a global type
$\gtype$.
In our setting, correctness refers to some relationship between the
traces of $\cont$ and those of $\gtype$.
In general, however, we cannot require that $\gtype$ and $\cont$ have
exactly the same traces: when projecting $\gtype_1\gand\gtype_2$ we
might need to impose a particular order in which the interactions
specified by $\gtype_1$ and $\gtype_2$ must occur (shuffling
condition).
At the same time, asking only $\traces(\cont)\subseteq\traces(\gtype)$
would lead us to immediately lose the exhaustivity property, since
for instance $\set{\role : \Out{\varrole}{\val}.\End\;,\; \varrole :
  \In{\role}{\val}.\End}$ would implement
$\gaction{\role}{\varrole}{\val}\gor\gaction{\role}{\varrole}{\varval}$
even though the implementation systematically exhibits only one
($\gaction{\role}{\varrole}{\val}$) of the specified alternative
behaviors.
In the end, we say that $\cont$ is a correct implementation of
$\gtype$ if: first, every trace of $\cont$ is a trace of $\gtype$
(\emph{soundness}); second, every trace of $\gtype$ is the
permutation of a trace of $\cont$ (\emph{completeness}). Formally:
\[
\traces(\cont)\subseteq\traces(\gtype)\subseteq\perm{\traces(\cont)}
\]
where $\perm\lang$ is the closure of $\lang$ under arbitrary
permutations of the strings in $\lang$:
\[
\perm\lang \eqdef \{
  \gact_1\cdots\gact_n
  \mid
  \text{there exists a permutation $\permutation$ such that $\gact_{\permutation(1)}\cdots\gact_{\permutation(n)}\in\lang$}\}
\]

Since these relations between languages (of traces) play a crucial
role, it is convenient to define a suitable pre-order relation:

\begin{definition}[implementation pre-order]
\label{pog}\label{pos}
We let $\gless{\lang_2}{\lang_1}$ if $\lang_1 \subseteq \lang_2
\subseteq \perm{\lang_1}$ and extend it to global types and sessions
in the natural way, by considering the corresponding sets of
traces. Therefore, we write $\gless\gtype\cont$ if
$\gless{\traces(\gtype)}{\traces(\cont)}$ and similarly for $\gless{\gtype'}\gtype$ and $\gless{\cont'}\cont$.
\end{definition}

It is easy to see that soundness and completeness respectively
formalize the notions of fitness and exhaustivity that we have
outlined in the introduction. As for the remaining three properties
listed in the introduction (\ie, sequentiality, alternativeness, and shuffling), they are entailed by the formalization of the semantics of
a global type in terms of its traces (Definition~\ref{def:traces}). In particular, we
have that soundness implies sequentiality and alternativeness, while
completeness implies shuffling. Therefore, in the formal treatment
that follows we will focus on soundness and completeness as the
only characterizing properties connecting sessions and global types.
The relation $\gless\gtype\cont$ summarizes the fact that $\cont$ is
both sound and complete with respect to $\gtype$, namely that $\cont$
is a correct implementation of the specification $\gtype$.

\begin{table*}[t]
\caption{\label{tab:semantic_projection}\strut Rules for semantic projection.}
\framebox[\textwidth]{
\begin{math}
\displaystyle
\begin{array}{c}
  \inferrule[\rulename{SP-Skip}]{}{
    \project{\cont}{\gend}{\cont}
  }
  \\
  \inferrule[\rulename{SP-Action}]{}{
    \project{
      \{ \role_i : \sesst_i \}_{i\in I}
      \scup
      \{ \role : \sesst \}
      \scup
      \cont
    }{
      \gaction{\{ \role_i \}_{i \in I}}{\role}{\val}
    }{
      \{ \role_i : \Out\role\val.\sesst_i \}_{i\in I}
      \scup
      \{ \role : \In{\{ \role_i \}_{i \in I}}\val.\sesst \}
      \scup
      \cont
    }
  }
  \\\\
  \inferrule[\rulename{SP-Sequence}]{
    \project{\cont}{\gtype_2}{\cont'}
    \\
    \project{\cont'}{\gtype_1}{\cont''}
  }{
    \project{\cont}{\gtype_1\gseq\gtype_2}{\cont''}
  }
  \qquad
  \inferrule[\rulename{SP-Alternative}]{
    \project{\cont}{\gtype_1}{
      \{ \role : \sesst_1 \} \scup \cont'
    }
    \\
    \project{\cont}{\gtype_2}{
      \{ \role : \sesst_2 \} \scup \cont'
    }
  }{
    \project{\cont}{\gtype_1 \gor \gtype_2}{
      \{ \role : \sesst_1 \intsum \sesst_2 \}
      \scup
      \cont'
    }
  }
  \\\\
  \inferrule[\rulename{SP-Iteration}]{
    \project{
      \{ \role : \sesst_1 \intsum \sesst_2 \} \scup \cont
    }{\gtype}{
      \{ \role : \sesst_1 \} \scup \cont
    }
  }{
    \project{
      \{ \role : \sesst_2 \} \scup \cont
    }{\gtype\gstar}{
      \{ \role : \sesst_1 \intsum \sesst_2 \} \scup \cont
    }
  }
  \qquad
  \inferrule[\rulename{SP-Subsumption}]{
    \project{\cont}{\gtype'}{\cont'}
    \\
   \gless {\gtype}{\gtype'}
    \\
    \gless {\cont' }{ \cont''}
  }{
    \project{\cont}{\gtype}{\cont''}
  }
\end{array}
\end{math}
}
\end{table*}

Table \ref{tab:semantic_projection} presents our rules for building the projections of global types. Projecting a global type basically means compiling it to an ``equivalent'' set 
of session types. Since the source language (global types) is equipped with sequential composition while the target language (session types)
is not, it is convenient to parameterize projection on a continuation,
\ie, we consider judgments of the shape:
\[
\project{\cont}{\gtype}{\cont'}
\]
meaning that if $\cont$ is the projection of some $\gtype'$, then
$\cont'$ is the projection of $\gtype\gseq\gtype'$. We say that
$\cont'$ is a \emph{projection} of $\gtype$ with \emph{continuation}
$\cont$. This shape of judgments 
immediately gives us the rule \rulename{SP-Sequence}. 

The projection of an \emph{interaction}
$\gaction{\roles}{\role}{\val}$ adds $\Out\role\val$ in front of the
session type of all the roles in $\roles$, and $\In{\roles}\val$ in
front of the session type of $\role$ (rule \rulename{SP-Action}). For
example we have:
\[\project{\set{\role : \End, \varrole  : \End}}{\gaction{\role}{\varrole}{\val}}{\set{\role : \Out{\varrole}{\val}.\End,~~ \varrole  : \In{\role}{\val}.\End}}\]

An \emph{alternative} $\gtype_1 \gor \gtype_2$ (rule
\rulename{SP-Alternative}) can be projected only if there is a
participant $\role$ that actively chooses among different behaviors by
sending different messages, while all the other participants must
exhibit the same behavior in both branches.  The subsumption rule
\rulename{SP-Subsumption} can be used to fulfill this requirement in
many cases. For example we have
$\project{\cont_0}{\gaction{\role}{\varrole}{\val}}{\set{\role :
    \Out{\varrole}{\val}.\End, \varrole : \In{\role}{\val}.\End}}$ and
$\project{\cont_0}{\gaction{\role}{\varrole}{\varval}}{\set{\role :
    \Out{\varrole}{\varval}.\End, \varrole :
    \In{\role}{\varval}.\End}}$, where $\cont_0=\set{\role : \End,
  \varrole : \End}$. In order to project
$\gaction{\role}{\varrole}{\val}\gor\gaction{\role}{\varrole}{\varval}$
with continuation $\cont_0$ we derive first by subsumption
$\project{\cont_0}{\gaction{\role}{\varrole}{\val}}{\set{\role :
    \Out{\varrole}{\val}.\End\;,\; \varrole : \sesst}}$ and
$\project{\cont_0}{\gaction{\role}{\varrole}{\varval}}{\set{\role :
    \Out{\varrole}{\varval}.\End\;,\; \varrole : \sesst}}$ where
$\sesst=\In{\role}{\val}.\End\extsum\In{\role}{\varval}.\End$. Then we
obtain \[\project{\cont_0}{\gaction{\role}{\varrole}{\val}\gor\gaction{\role}{\varrole}{\varval}}{\set{\role
    : \Out{\varrole}{\val}.\End\intsum\Out{\varrole}{\varval}.\End\;,~~
    \varrole : \sesst}}
\]
Notice that rule \rulename{SP-Alternative} imposes that in alternative branches there must be one \emph{and only one} participant that takes the decision. For instance, the global type
\[
\gaction{\{\role,\varrole\}}\roleR\val
~\gor~
\gaction{\{\role,\varrole\}}\roleR\varval
\]
cannot be projected since we would need a covert channel for $\role$ to agree with $\varrole$ about whether to send to $\roleR$ the message $\val$ or $\varval$.

Rule \rulename{SP-Subsumption} can be easily understood by recalling
that we require 
a projection $\cont$ of a global type $\gtype$ to
satisfy $\gless {\gtype}\cont$. Therefore if $\cont'$ is a projection
of $\gtype'$ with continuation $\cont$ and $\gless {\gtype}{\gtype'}$,
then $\cont'$ is also a projection of $\gtype$ with continuation
$\cont$. Similarly if $\cont'$ is a projection of $\gtype'$ with
continuation $\cont$ and $\gless {\cont'}{\cont''}$, then also
$\cont''$ is a projection of $\gtype'$ with continuation $\cont$.

To project a \emph{starred} global type we also require that one participant $\role$ chooses between repeating the loop or exiting by sending messages, while the session types of all other participants are unchanged. If $\sesst_1$ and $\sesst_2$ are the session types describing the behavior of $\role$ when it has respectively decided to perform one more iteration or to terminate the iteration, then $\sesst_1 \intsum \sesst_2$ describes the behavior of $\role$ before it takes the decision. The projection rule requires that one execution of $\gtype$ followed by the choice between $\sesst_1$ and $\sesst_2$ projects in a session with type $\sesst_1$ for $\role$. This judgment is possible only if $\sesst_1$ is a recursive type, as expected, and it is the premise of rule \rulename{SP-Iteration}.
For example if $\sesst_1=\Out\varrole\val.\rec\recvar(\Out\varrole\val.\recvar \intsum \Out\varrole\valB.\End)$, $\sesst_2=\Out\varrole\valB.\End$, and $\varsesst=\rec\varY(\In\role\val.\varY \extsum \In\role\valB.\End)$ we can derive $\project{\set{\role:\sesst_1     \intsum \sesst_2,     \varrole:\varsesst}}{\gaction{\role}{\varrole}{\val}}{\set{\role:\sesst_1     , \varrole:\varsesst}}$ and then
\[
\project{\set{\role: \sesst_2,     \varrole:\varsesst}}{(\gaction{\role}{\varrole}{\val})\gstar}{\set{\role:\sesst_1     \intsum \sesst_2,~~ \varrole:\varsesst}}
\]

Notably there is no rule for \quotesymbol\gand, the \emph{both}
constructor. We deal with this constructor by observing that all
interleavings of the actions in $\gtype_1$ and $\gtype_2$ give global
types $\gtype$ such that $\gless {\gtype_1\gand\gtype_2}{\gtype}$, and
therefore we can use the subsumption rule to eliminate every
occurrence of \quotesymbol\gand. For example, to project the global type
$\gaction{\role}{\varrole}{\val}\gand\gaction{\roleR}{\roleS}{\varval}$
we can use
$\gaction{\role}{\varrole}{\val}\gseq\gaction{\roleR}{\roleS}{\varval}$:
since the two actions that compose both global types have disjoint
participants, then the projections of these global types (as well as
that of
$\gaction{\roleR}{\roleS}{\varval}\gseq\gaction{\role}{\varrole}{\val}$)
will have exactly the same set of traces.

Other interesting examples of subsumptions useful  for projecting are  
\begin{eqnarray}
 {\gaction{\roleR}{\role}{\varval}\gseq\gaction{\role}{\varrole}{\val}}&\gless{}{} & (\gaction{\role}{\varrole}{\val}\gseq\gaction{\roleR}{\role}{\varval})\gor(\gaction{\roleR}{\role}{\varval}\gseq\gaction{\role}{\varrole}{\val})\label{uno}\\
{\gaction{\roleR}{\role}{\varval}\gseq(\gaction{\role}{\varrole}{\val}\gor\gaction{\role}{\varrole}{\varval})} &\gless{}{} &  {(\gaction{\roleR}{\role}{\varval}\gseq\gaction{\role}{\varrole}{\val})\gor(\gaction{\roleR}{\role}{\varval}\gseq\gaction{\role}{\varrole}{\varval})}\label{due}
\end{eqnarray}
In~(\ref{uno}) the $\gless{}{}$-larger global type describes the
shuffling of two interactions, therefore we can project one particular
ordering still preserving completeness. In~(\ref{due}) we exploit the
left-distributivity law of regular expressions to push the
{\quotesymbol\gor} construct where the choice is actually being made
(this is possible thanks to the trace semantics we adopt for global
types).

We are interested in projections without continuations, that is, in
judgments of the shape $\project{\set{\role:\End\mid
    \role\in\gtype}}{\gtype}{\cont}$ (where $\role\in\gtype$ means
that $\role$ occurs in $\gtype$) which we shortly write as
\[\project{}{\gtype}{\cont}\]

The mere existence of a projection does not mean that the projection
behaves as specified in the global type. For example, we have
\[
\project{}{
  \gaction{\roleP}{\roleQ}{\val};
  \gaction{\roleR}{\roleS}{\val}
}{
  \{ \roleP : \Out{\roleQ}{\val}.\End,~~
     \roleQ : \In{\roleP}{\val}.\End,~~
     \roleR : \Out{\roleS}{\val}.\End,~~ 
     \roleS : \In{\roleR}{\val}.\End
  \}
}
\]
but the projection admits the trace
$\gaction{\roleR}{\roleS}{\val};\gaction{\roleP}{\roleQ}{\val}$
which is not allowed by the global type. Clearly the problem resides
in the global type, which tries to impose a temporal ordering between
interactions involving disjoint participants.
What we want, in accordance with the traces of a global type
$\gtype_1\gseq\gtype_2$, is that no interaction in $\gtype_2$ can be
completed before all the interactions in $\gtype_1$ are completed.
In more detail:
\begin{iteMize}{$\bullet$}
\item an action $\gaction{\roles}{\role}{\val}$ is completed when the
  participant $\role$ has received the message $\val$ from all the
  participants in $\roles$;
\item if
  $\str\gseq\gaction{\roles}{\role}{\val}\gseq\gaction{\roles'}{\role'}{\varval}\gseq\varstr$
  is a trace of a global type, then either the action
  \mbox{$\gaction{\roles'}{\role'}{\varval}$} cannot be completed before the
  action $\gaction{\roles}{\role}{\val}$ is completed, or they can be
  executed in any order.  The first case requires $\role$ to be either
  $\role'$ or a member of $\roles'$, in the second case the set of
  traces must also contain the trace
  $\str\gseq\gaction{\roles'}{\role'}{\varval}\gseq\gaction{\roles}{\role}{\val}\gseq\varstr$.
\end{iteMize}
This leads us to the following definition of well-formed global type.

\begin{definition}[well-formed global type]\label{def:wf}
We say that a set of traces $\lang$ is \emph{well formed} if
$\str\gseq\gaction{\roles}{\role}{\val}\gseq\gaction{\roles'}{\role'}{\varval}\gseq\varstr
\in \lang$ implies either $\role \in \roles' \cup \{\role'\}$ or
$\str\gseq\gaction{\roles'}{\role'}{\varval}\gseq\gaction{\roles}{\role}{\val}\gseq\varstr
\in \lang$.
We say that a global type $\gtype$ is \emph{well formed} if so is
$\traces(\gtype)$.
\end{definition}

It is easy to decide well-formedness of an arbitrary global type
$\gtype$ by looking at the automaton that recognizes the language of
traces generated by $\gtype$.

Projectability and well-formedness must be kept separate because it is
sometimes necessary to project ill-formed global types. For example,
the ill-formed global type
$\gaction{\roleP}{\roleQ}{\val}\gseq\gaction{\roleR}{\roleS}{\val}$
above is useful to project
$\gaction{\roleP}{\roleQ}{\val}\gand\gaction{\roleR}{\roleS}{\val}$
which is well formed.

Clearly, if a global type is projectable (\ie,
$\project{}{\gtype}{\cont}$ is derivable) then well-formedness of
$\gtype$ is a necessary condition for the soundness and completeness
of its projection (\ie, for $\gless\gtype \cont$). It turns out that
well-formedness is also a sufficient condition for having soundness
and completeness of projections, as stated in the following theorem,
whose proof is the content of Appendix \ref{ptscT}.

\begin{theorem}\label{scT}
If $\gtype$ is well formed and $\project{}{\gtype}{\cont}$, then $\gless{\gtype}{\cont}$. 
\end{theorem}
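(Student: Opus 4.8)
The plan is to prove, by induction on the derivation of the continuation-parametrised judgement $\project{\cont}{\gtype}{\cont'}$, a statement stronger than the theorem. For a language $\lang\subseteq\alp^*$ let $\closure\lang$ denote the least language containing $\lang$ that is closed under \emph{legitimate transpositions}, \ie\ such that $\str\,\gaction{\roles'}{\role'}{\varval}\,\gaction{\roles}{\role}{\val}\,\varstr\in\closure\lang$ whenever $\str\,\gaction{\roles}{\role}{\val}\,\gaction{\roles'}{\role'}{\varval}\,\varstr\in\closure\lang$ and $\role\notin\roles'\cup\{\role'\}$. By Definition~\ref{def:wf} a language is well formed exactly when it is a fixed point of this operator, and one checks routinely that $\closure{(\cdot)}$ is monotone and idempotent, that $\closure\lang\subseteq\perm\lang$ (a transposition preserves the multiset of letters), and that $\lang_1\cdot\closure{\lang_2}\subseteq\closure{(\lang_1\cdot\lang_2)}$ and $\closure{\lang_1}\cdot\lang_2\subseteq\closure{(\lang_1\cdot\lang_2)}$; the permutation closure $\perm{(\cdot)}$ enjoys the analogous monotonicity, idempotence and concatenation properties plus $\perm{(\lang_1\cup\lang_2)}=\perm{\lang_1}\cup\perm{\lang_2}$. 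The statement to be proved is: \emph{if $\project{\cont}{\gtype}{\cont'}$ and $\cont$ is a live session, then $\cont'$ is a live session, $\traces(\cont')\subseteq\closure{(\traces(\gtype)\cdot\traces(\cont))}$, and $\traces(\gtype)\cdot\traces(\cont)\subseteq\perm{(\traces(\cont'))}$.} Theorem~\ref{scT} is the instance with $\cont=\{\role:\End\mid\role\in\gtype\}$, which is live and has $\traces(\cont)=\{\varepsilon\}$: since $\gtype$ is well formed the statement gives $\traces(\cont')\subseteq\closure{(\traces(\gtype))}=\traces(\gtype)\subseteq\perm{(\traces(\cont'))}$, which is exactly $\gless\gtype{\cont'}$.

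A key point of this formulation is that well-formedness is used \emph{only} in the final instantiation, never threaded through the induction: expressing the invariant with the transposition closure $\closure{(\cdot)}$ is what makes rule \rulename{SP-Subsumption} immediate, since there one only needs monotonicity and idempotence of $\closure{(\cdot)}$ and $\perm{(\cdot)}$ and the inclusion $\perm{\lang_1}\cdot\lang_2\subseteq\perm{(\lang_1\cdot\lang_2)}$ to transfer the two inclusions from $\gtype'$ to $\gtype$ (using $\traces(\gtype')\subseteq\traces(\gtype)$ and $\traces(\gtype)\subseteq\perm{(\traces(\gtype'))}$) and from $\cont'$ to $\cont''$ (using $\traces(\cont'')\subseteq\traces(\cont')$ and $\traces(\cont')\subseteq\perm{(\traces(\cont''))}$); liveness transfers for free, because by Definitions~\ref{def:liveness} and~\ref{st} a session is live iff its trace set is non-empty, and $\traces(\cont')\subseteq\perm{(\traces(\cont''))}$ forces $\traces(\cont'')\neq\EMPTY$ once $\traces(\cont')\neq\EMPTY$. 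Rules \rulename{SP-Skip} and \rulename{SP-Sequence} are then purely algebraic (for the latter, compose the two induction hypotheses via $\lang_1\cdot\closure{\lang_2}\subseteq\closure{(\lang_1\cdot\lang_2)}$, $\lang_1\cdot\perm{\lang_2}\subseteq\perm{(\lang_1\cdot\lang_2)}$ and idempotence), and \rulename{SP-Alternative} reduces to the observation that $\traces(\{\role:\sesst_1\intsum\sesst_2\}\scup\cont')=\traces(\{\role:\sesst_1\}\scup\cont')\cup\traces(\{\role:\sesst_2\}\scup\cont')$ — because every reduction of the combined session that touches $\role$ begins by committing to one of the two branches, and the interactions that may complete beforehand belong to $\cont'$ and are available in both branches — after which the claim follows from monotonicity and distributivity over union.

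The operational semantics enters in rule \rulename{SP-Action}. Writing $\cont_0=\{\role_i:\sesst_i\}_{i\in I}\scup\{\role:\sesst\}\scup\cont$, $\cont_0'=\{\role_i:\Out\role\val.\sesst_i\}_{i\in I}\scup\{\role:\In{\{\role_i\}_{i\in I}}\val.\sesst\}\scup\cont$ and $\gact=\gaction{\{\role_i\}_{i\in I}}{\role}{\val}$, one shows that the computations of $\emptybuffer\bsep\cont_0'$ are, up to legitimate transpositions, exactly those of $\emptybuffer\bsep\cont_0$ with $\gact$ prepended. For soundness: in $\cont_0'$ the first action of $\role$ is the join input $\In{\{\role_i\}}\val$, so any interaction completed before $\gact$ has a receiver distinct from $\role$ and not among the $\role_i$; hence sliding $\gact$ to the front of a computation of $\cont_0'$ only performs legitimate transpositions, giving $\traces(\cont_0')\subseteq\closure{(\{\gact\}\cdot\traces(\cont_0))}$. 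For completeness and liveness: the outputs $\Out\role\val$ are always enabled, so $\gact$ can always be completed first — and from any reachable configuration — after which the configuration is one reachable from $\emptybuffer\bsep\cont_0$, which by hypothesis is live with trace set $\traces(\cont_0)$; this yields $\{\gact\}\cdot\traces(\cont_0)\subseteq\traces(\cont_0')$ and liveness of $\cont_0'$. The only delicate point is the bookkeeping needed to match configurations of $\cont_0'$ in which $\gact$ is already completed with configurations reachable from $\emptybuffer\bsep\cont_0$.

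The main obstacle is rule \rulename{SP-Iteration}, which is genuinely self-referential: its premise $\project{\{\role:\sesst_1\intsum\sesst_2\}\scup\cont}{\gtype}{\{\role:\sesst_1\}\scup\cont}$ mentions the recursive session type $\sesst_1$ both in its conclusion and in its continuation, so the induction hypothesis cannot be applied to it until liveness of $\{\role:\sesst_1\intsum\sesst_2\}\scup\cont$ is available. The plan is (i) to establish that liveness separately, by a well-founded argument on the structure of the premise's derivation: from any reachable configuration $\role$ is either at the choice point — where it may pick the exit branch $\sesst_2$, reducing to a configuration reachable from the live session $\cont_2=\{\role:\sesst_2\}\scup\cont$ — or is partway through one iteration body, which by the (smaller) derivation of the premise can always be completed, landing again at the choice point; and (ii) once liveness is in hand, to prove the two trace inclusions by a nested induction on the number of times $\role$ takes the recursive branch along a (necessarily finite) computation, unfolding $\sesst_1$ once per iteration and applying the premise's induction hypothesis to each unfolding to relate one round to $\traces(\gtype)$ composed with the session for the remaining rounds, then taking the union over all iteration counts and using $\closure{(\traces(\gtype)^n\cdot\traces(\cont_2))}\subseteq\closure{(\traces(\gtype\gstar)\cdot\traces(\cont_2))}$ and the dual inclusion for $\perm{(\cdot)}$. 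Making the continuation-tracking in step (ii) precise enough that the premise's induction hypothesis can be reused for every iteration, and handling transpositions that straddle iteration boundaries, is the real work.
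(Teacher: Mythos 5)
Your overall strategy coincides with the paper's: the appendix proves exactly the strengthened, continuation-parametrised statement $\gless{\closure{(\traces(\gtype)\traces(\cont))}}{\traces(\cont')}$ by induction on the derivation, with $\closure{(\cdot)}$ the well-formedness closure, and your handling of \rulename{SP-Skip}, \rulename{SP-Sequence}, \rulename{SP-Subsumption} and \rulename{SP-Alternative} matches the paper's (your algebraic toolkit is its Lemma~\ref{aux}). The substantive divergence is that you add liveness of the continuation as a \emph{precondition} of the induction hypothesis, and this is where the proof has a genuine gap. In the \rulename{SP-Iteration} case you cannot apply the induction hypothesis to the premise $\project{\{\role:\sesst_1\intsum\sesst_2\}\scup\cont}{\gtype}{\{\role:\sesst_1\}\scup\cont}$ until you know that $\{\role:\sesst_1\intsum\sesst_2\}\scup\cont$ is live, and your step (i) for establishing this is circular: the phrase ``partway through one iteration body, which by the (smaller) derivation of the premise can always be completed'' appeals to a completability consequence of the premise's derivation, which under your own invariant is available only \emph{after} liveness of that same session has been established. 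Breaking the circle needs a separate, unconditional completability lemma for projections, which you neither state nor prove and which is not trivial. The paper sidesteps all of this by keeping liveness out of the invariant: the two trace inclusions are meaningful and provable for arbitrary, possibly non-live continuations (if $\traces(\cont)=\EMPTY$ they simply force $\traces(\cont')=\EMPTY$), the iteration case becomes a purely language-theoretic nested induction on the number of unfoldings using the premise's inclusions directly, and liveness of the final projection is recovered at the very end from $\traces(\gtype)\subseteq\perm{\traces(\cont)}$ and $\traces(\gtype)\ne\EMPTY$ (Corollary~\ref{sc}). Dropping the liveness clause from your invariant repairs the argument and brings it in line with the paper's.

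A second, more local error sits in \rulename{SP-Action}: you justify the commutation by claiming that any interaction completed before $\gact$ has a receiver distinct from $\role$ and not among the $\role_i$. The second half is false, because outputs are asynchronous: a sender $\role_i$ can deposit its $\val$ message in the buffer and then complete an input of its own before $\role$ performs the join input. For instance $\set{\roleQ:\Out\roleP\val.\In\roleR\valB.\End,\ \roleP:\In\roleQ\val.\End,\ \roleR:\Out\roleQ\valB.\End}$ is a projection of $\gaction\roleQ\roleP\val$ and admits the trace $\gaction\roleR\roleQ\valB\gseq\gaction\roleQ\roleP\val$, whose first interaction is received by $\roleQ$, a sender of $\gact$. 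It is also not the condition you need: by your own definition of legitimate transposition, pushing $\gact$ rightward past an earlier-completed $\gact'=\gaction{\roles'}{\role'}{\varval}$ requires $\role\notin\roles'\cup\{\role'\}$, i.e.\ that the \emph{receiver of $\gact$} is not a participant of $\gact'$ --- which does hold, since $\role$ is blocked on its initial join input and can neither send nor receive beforehand. The inclusion $\traces(\cont_0')\subseteq\closure{(\set{\gact}\traces(\cont_0))}$ is true, but for this reason rather than the one you give.
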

In summary, if a well-formed global type $\gtype$ is projectable, then
its projection $\cont$ is a \emph{live} session (it cannot generate
the empty set of traces since
$\traces(\gtype)\subseteq\perm{\traces(\cont)}$) which is sound and
complete wrt $\gtype$ and, therefore, satisfies the sequentiality,
alternativeness, and shuffling properties outlined in the
introduction.

\begin{remark}
\label{rem:join}
We now have all the ingredients for showing that actions involving
multiple senders are not redundant, in the sense that they cannot be
encoded in terms of more primitive actions with single senders.
In particular, we show that the global type
\[
\gtype_1 = \gaction{\{\varrole_1,\varrole_2\}}\varrole\valB
\]
is not always equivalent to the expansion
\[
\gtype_2 = \gaction{\varrole_1}\varrole\valB \gand \gaction{\varrole_2}\varrole\valB
\]
despite the fact that, in $\gtype_1$ and $\gtype_2$, the same number
of messages is exchanged between the very same participants.

If we consider the global types $\gtype_1'$ and $\gtype_2'$ defined by:
\[
\gtype_i' =
(\gaction\role{\varrole_1}\valA \gand \gaction\role{\varrole_2}\valA)
\gseq
\gtype_i
\]
we see that $\gtype_1'$ is well formed while $\gtype_2'$ is not. The
reason is because of the trace $\gaction\role{\varrole_1}\valA \gseq
\gaction\role{\varrole_2}\valA \gseq
\gaction{\varrole_1}\varrole\valB\gseq
\gaction{\varrole_2}\varrole\valB \in \traces(\gtype_2')$ where
$\varrole_2 \not\in \{ \varrole, \varrole_1 \}$ and
$\gaction\role{\varrole_1}\valA \gseq
\gaction{\varrole_1}\varrole\valB\gseq \gaction\role{\varrole_2}\valA
\gseq \gaction{\varrole_2}\varrole\valB \not\in \traces(\gtype_2')$.
Basically, both $\gtype_1'$ and $\gtype_2'$ specify the constraint
that no $\valB$ message is received by $\varrole$ \emph{before} both
$\valA$ messages have been received by $\varrole_1$ and $\varrole_2$.
However, in $\gtype_2$ the $\valB$ messages are received by means of
independent actions, and therefore it can happen that the $\valB$
message from $\varrole_1$ is received by $\varrole$ before the $\valA$
message from $\role$ is received by $\varrole_2$, which is exactly the
scenario described by the trace above that is \emph{not} in
$\traces(\gtype_2')$.
The global type $\gtype_1$, on the other hand, specifies that the
receive operation performed by $\varrole$ is considered completed only
when \emph{both} $\valB$ messages from $\varrole_1$ and $\varrole_2$
are available.
The interested reader may compare the projections of $\gtype_1'$ and
$\gtype_2'$ and verify that the one for $\gtype_2'$ does indeed
exhibit an undesired trace.
\end{remark}

We conclude this section by formally characterizing the three kinds of
problematic global types we have 
described earlier. We start from the least severe problem and move
towards the more serious ones. Let $\closure\lang$ denote the smallest
well-formed set such that $\lang\subseteq\closure\lang$.


\subsection*{No sequentiality.}
Assuming that there is no $\cont$ that is both sound and complete for
$\gtype$, it might be the case that we can find a session whose traces
are complete for $\gtype$ and sound for the global type $\gtype'$
obtained from $\gtype$ by turning some \quotesymbol\gseq's into
\quotesymbol\gand's.  This means that the original global type
$\gtype$ is ill formed, namely that it specifies some sequentiality
constraints that are impossible to implement.
For instance, $\set{\role : \Out{\varrole}{\val}.\End,~ \varrole :
  \In{\role}{\val}.\End,~ \roleR: \Out{\roleS}{\varval}.\End,~ \roleS :
  \In{\roleR}{\varval}.\End}$ is a complete but not sound session for the ill-formed global type
$\gaction{\role}{\varrole}{\textit{a}}\gseq\gaction{\roleR}{\roleS}{\textit{b}}$
(while it is a sound and complete session for
$\gaction{\role}{\varrole}{\textit{a}}\gand\gaction{\roleR}{\roleS}{\textit{b}}$).
We characterize the global types $\gtype$ that present this error as:
\[
\nexists\cont:\gless\gtype\cont
\text{ and }
\exists\cont:\traces(\gtype)\subseteq\traces(\cont)\subseteq\closure{\traces(\gtype)}
\,.
\]

\subsection*{No knowledge for choice.}
In this case every session $\cont$ that is complete for $\gtype$
invariably exhibits some interactions that are not allowed by $\gtype$
despite the fact that $\gtype$ is well formed.
This happens when the global type specifies alternative behaviors, but
some participants do not have enough information to behave
consistently. For example, the global type
\[
(\gaction{\role}{\varrole}{\valA}\gseq
 \gaction{\varrole}{\roleR}{\valA}\gseq
 \gaction{\roleR}{\role}{\valA})
\gor
(\gaction{\role}{\varrole}{\valB}\gseq
 \gaction{\varrole}{\roleR}{\valA}\gseq
 \gaction{\roleR}{\role}{\valB})
\]
mandates that $\roleR$ should send either $\valA$ or $\valB$ in
accordance with the message that $\roleP$ sends to
$\roleQ$. Unfortunately, $\roleR$ has no information as to which
message $\roleQ$ has received, because $\roleQ$ notifies $\roleR$ with
an $\valA$ message in both branches. A complete implementation of this
global type is
\[
\begin{array}{r@{}l}
  \{ & \role : \Out{\varrole}{\val}.(\In{\roleR}{\val}.\End\extsum\In{\roleR}{\varval}.\End)\intsum\Out{\varrole}{\varval}.(\In{\roleR}{\val}.\End\extsum\In{\roleR}{\varval}.\End), \\
& \varrole : \In{\role}{\val}.\Out{\roleR}{\val}.\End\extsum\In{\role}{\varval}.\Out{\roleR}{\val}.\End,
  \roleR: \In{\varrole}{\val}.(\Out{\varrole}{\val}.\End\intsum\Out{\varrole}{\varval}.\End) \}
\end{array}
\]
which also produces the traces
$\gaction{\role}{\varrole}{\valA}\gseq\gaction{\varrole}{\roleR}{\valA}\gseq\gaction{\roleR}{\role}{\valB}$
and
$\gaction{\role}{\varrole}{\valB}\gseq\gaction{\varrole}{\roleR}{\valA}\gseq\gaction{\roleR}{\role}{\valA}$.
We characterize this error as:
\[
\nexists\cont:\traces(\gtype)\subseteq\traces(\cont)\subseteq\closure{\traces(\gtype)}
\text{ and }
\exists\cont:\traces(\gtype)\subseteq\traces(\cont)
\,.
\]
  
\subsection*{No knowledge, no choice.}
In this case we cannot find a complete session $\cont$ for
$\gtype$. This typically means that $\gtype$ specifies some
combination of incompatible behaviors. For example, the global type
$\gaction\role\varrole\valA \gor \gaction\varrole\role\valA$
implies an agreement between $\role$ and $\varrole$ for establishing
who is entitled to send the $\valA$ message. In a distributed
environment, however, there can be no agreement without a previous
message exchange. Therefore, we can either have a sound but not
complete session that implements just one of the two branches (for
example, $\{ \role : \Out\varrole\val.\End, \varrole :
\In\role\val.\End \}$) or a session like $\{ \role :
\Out\varrole\val.\In\varrole\val.\End, \varrole :
\In\role\val.\Out\role\val.\End \}$ where both $\role$ and $\varrole$
send their message but which is neither sound nor complete.
We characterize this error as:
\[
\nexists\cont:\traces(\gtype)\subseteq\traces(\cont)
\,.
\]


\section{Algorithmic projection}
\label{sec:sal}

We now attack the problem of \emph{computing} the projection of a
global type. We are looking for an algorithm that ``implements'' the
projection rules of Section \ref{sec:projection}, that is, that given
a session continuation $\cont$ and a global type $\gtype$, produces a
projection $\cont'$ such that $\cont\vdash\gtype:\cont'$. In other
terms this algorithm must be sound with respect to the semantic
projection (completeness, that is, returning a projection for every
global type that is semantically projectable, seems out of reach,
yet).

The deduction system in Table~\ref{tab:semantic_projection} is not
algorithmic because of two rules: the rule \rulename{SP-Iteration}
that does not satisfy the subformula property since the context
$\cont$ used in the premises is the result of the conclusion; the rule
\rulename{SP-Subsumption} since it is neither syntax-directed (it is
defined for a generic $\gtype$) nor does it satisfy the subformula
property (the $\gtype'$ and $\cont''$ in the premises are not uniquely
determined).\footnote{The rule \rulename{SP-Alternative} is
  algorithmic: in fact there is a finite number of participants in the
  two sessions of the premises and at most one of them can have
  different session types starting with outputs.\label{foot}}
The latter rule can be expressed as the composition of the two rules
\[\inferrule[\rulename{SP-SubsumptionG}]{
    \project{\cont}{\gtype'}{\cont'}
    \\
   \gless {\gtype}{\gtype'}
      }{
    \project{\cont}{\gtype}{\cont'}
  }\qquad\qquad
  \inferrule[\rulename{SP-SubsumptionS}]{
    \project{\cont}{\gtype}{\cont'}
    \\
     \gless {\cont' }{ \cont''}
  }{
    \project{\cont}{\gtype}{\cont''}
  }\]
  Splitting~\rulename{SP-Subsumption} into~\rulename{SP-SubsumptionG}
  and~\rulename{SP-SubsumptionS} is useful to explain the following
  problems we have to tackle to define an algorithm:
\begin{enumerate}[(1)]
\item How to eliminate \rulename{SP-SubsumptionS}, the subsumption
  rule for sessions.

\item How to define an algorithmic version of \rulename{SP-Iteration},
  the rule for Kleene star.

\item How to eliminate \rulename{SP-SubsumptionG}, the subsumption
  rule for global types.
\end{enumerate}
We address each problem in order and discuss the related rules in the
next sections.

\subsection{Session subsumption}


Rule~\rulename{SP-SubsumptionS} is needed to project alternative
branches and iterations (a loop is an unbound repetition of
alternatives, each one starting with the choice of whether to enter
the loop or to skip it): each participant different from the one that
actively chooses must behave according to the same session type in
both branches.
More precisely, to project $\gtype_1\gor\gtype_2$ the rule
\rulename{SP-Alternative} requires to deduce for $\gtype_1$ and
$\gtype_2$ the same projection: if different projections are deduced,
then they must be previously subsumed to a common lower bound.
The algorithmic projection of an alternative (see the corresponding
rule in Table~\ref{tab:algorithmic_projection}) allows premises with
two different sessions, but then \emph{merges} them. Of course not every pair
of projections is mergeable. Intuitively, two projections are
mergeable if so are the behaviors of each participant. This requires
participants to respect a precise behavior: as long as a participant
cannot determine in which branch (\ie, projection) it is, then
it must do the same actions in all branches (\ie, projections).
For example, to project $\gtype =
(\gaction{\role}{\varrole}{\valA}\gseq\gaction{\roleR}{\varrole}{\valC}\gseq\dots)\gor(\gaction{\role}{\varrole}{\valB}\gseq\gaction{\roleR}{\varrole}{\valC}\gseq\dots)$
we project each branch separately obtaining
$\cont_1=\{\role:\Out{\roleQ}{\valA}\dots,\varrole:\In{\role}{\valA}.\In{\roleR}{\valC}\dots,\roleR:\Out{\varrole}{\valC}\dots\}$
and
$\cont_2=\{\role:\Out{\roleQ}{\valB}\dots,\varrole:\In{\role}{\valB}.\In{\roleR}{\valC}\dots,\roleR:\Out{\varrole}{\valC}\dots\}$.
Since $\role$ performs the choice, in the projection of $\gtype$ we
obtain $\role:\Out{\roleQ}{\valA}\ldots\oplus\Out{\roleQ}{\valB}\dots$
and we must merge
$\{\varrole:\In{\role}{\valA}.\In{\roleR}{\valC}\dots,\roleR:\Out{\varrole}{\valC}\dots\}$
with
$\{\varrole:\In{\role}{\valB}.\In{\roleR}{\valC}\dots,\roleR:\Out{\varrole}{\valC}\dots\}$.
Regarding $\varrole$, observe that it is the receiver of the message
from $\role$, therefore it becomes aware of the choice and can behave
differently right after the first input operation. Merging its
behaviors yields
$\varrole:\In{\role}{\valA}.\In{\roleR}{\valC}\dots\extsum\In{\role}{\valB}.\In{\roleR}{\valC}\dots$.
Regarding $\roleR$, it has no information as to which choice has been
made by $\roleP$, therefore it must have the same behavior in both
branches, as is the case. Since merging is idempotent, we obtain
$\roleR:\Out{\roleQ}{\valC}\dots$.
%
%
In summary, \emph{mergeability} of two branches of an
\quotesymbol\gor{} corresponds to the ``awareness'' of the choice made
when branching (see the discussion in Section~\ref{sec:projection}
about the ``No knowledge for choice'' error), and it is possible when,
roughly, each participant performs the same internal choices and
disjoint external choices in the two sessions.

Special care must be taken when merging external choices to avoid
unexpected interactions that may invalidate the correctness of the
projection. To illustrate the problem consider the session types
$\sesst = \In{\role}{\val}.\In{\varrole}{\varval}.\End$ and $\varsesst
= \In{\varrole}{\varval}.\End$ describing the behavior of a
participant $\roleR$. If we let $\roleR$ behave according to the merge
of $\sesst$ and $\varsesst$, which intuitively is the external choice
$\In{\role}{\val}.\In{\varrole}{\varval}.\End \extsum
\In{\varrole}{\varval}.\End$, it may be possible that the message
$\valB$ from $\roleQ$ is read \emph{before} the message $\valA$ from
$\roleP$ arrives. Therefore, $\roleR$ may mistakenly think that it
should no longer participate to the session, while there is still a
message targeted to $\roleR$ that will never be read. Therefore,
$\sesst$ and $\varsesst$ are \emph{incompatible} and it is not
possible to merge them safely.
On the contrary, $\In{\role}{\val}.\In{\role}{\varval}.\End$ and
$\In{\role}{\varval}.\End$ are compatible and can be merged to
$\In{\role}{\val}.\In{\role}{\varval}.\End \extsum
\In{\role}{\varval}.\End$.
In this case, since the order of messages coming from the same sender
is preserved, it is not possible for $\roleR$ to read the $\valB$
message coming from $\roleP$ before the $\valA$ message, assuming that
$\roleP$ sent both.
More formally:
  
\begin{definition}[compatibility]
  We say that an input $\In{\role}{\val}$ is \emph{compatible} with a
  session type $\sesst$ if either
  \begin{enumerate}[(i)] \item $\In{\role}{\val}$ does not
  occur in $\sesst$, or \item $\sesst = \Output_{i\in I}
  \Out{\role_i}{\val_i}.\sesst_i$ and $\In{\role}{\val}$ is compatible
  with $\sesst_i$ for all $i \in I$, or \item $\sesst = \Input_{i\in
    I} \In{\roles_i}{\val_i}.\sesst_i$ and for all $i \in I$ either
  $\role\in\roles_i$ and $\val\not=\val_i$ or $\role\not\in\roles_i$
  and $\In{\role}{\val}$ is compatible with $\sesst_i$.
  \end{enumerate}

  We say that an input $\In{\roles}{\val}$ is \emph{compatible} with a
  session type $\sesst$ if $\In{\role}{\val}$ is compatible with
  $\sesst$ for some $\role\in\roles$.

  Finally, $\sesst = \Input_{i\in I}
  \In{\roles_i}{\val_i}.\sesst_i\extsum\Input_{j\in J}
  \In{\roles_j}{\val_j}.\sesst_j\text{ and } \varsesst = \Input_{i\in
    I} \In{\roles_i}{\val_i}.\varsesst_i\extsum\Input_{h\in H}
  \In{\roles_h}{\val_h}.\varsesst_h$ are \emph{compatible} if
  $\In{\roles_j}{\val_j}$ is compatible with $\varsesst$ for all $j
  \in J$ and $\In{\roles_h}{\val_h}$ is compatible with $\sesst$ for
  all $h \in H$.
\end{definition}

The merge operator just connects sessions with the \emph{same} output
guards by internal choices and with \emph{compatible} input guards by
external choices:

\begin{definition}[merge]
  The \emph{merge} of $\sesst$ and $\varsesst$, written $\sesst \asup
  \varsesst$, is defined coinductively and by cases on the structure
  of $\sesst$ and $\varsesst$ thus:
\begin{iteMize}{$\bullet$}
\item if $\sesst = \varsesst = \End$, then $\sesst \asup \varsesst
  = \End$;

\item if $\sesst = \Output_{i\in I} \Out{\role_i}{\val_i}.\sesst_i$
  and $\varsesst = \Output_{i\in I}
  \Out{\role_i}{\val_i}.\varsesst_i$, then $\sesst \asup \varsesst =
  \Output_{i\in I} \Out{\role_i}{\val_i}.(\sesst_i \asup \varsesst_i)$;

\item if $\sesst = \Input_{i\in I}
  \In{\roles_i}{\val_i}.\sesst_i+\Input_{j\in J}
  \In{\roles_j}{\val_j}.\sesst_j$ and $\varsesst = \Input_{i\in I}
  \In{\roles_i}{\val_i}.\varsesst_i+\Input_{h\in H}
  \In{\roles_h}{\val_h}.\varsesst_h$ are compatible, then $\sesst
  \asup \varsesst = \Input_{i\in I} \In{\roles_i}{\val_i}.(\sesst_i
  \asup \varsesst_i)+\Input_{j\in J} \In{\roles_j}{\val_j}.\sesst_j
  \extsum \Input_{h\in H} \In{\roles_h}{\val_h}.\varsesst_h.$
\end{iteMize}
We extend merging to sessions so that
$\cont\asup\cont'=\set{\role:\sesst \asup \varsesst\mid
  \role:\sesst\in\cont~\&~\role:\varsesst\in\cont'}.$
\end{definition}

Rules \rulename{AP-Alternative} and \rulename{AP-Iteration} of
Table~\ref{tab:algorithmic_projection} are the algorithmic versions
of~\rulename{SP-Alternative} and~\rulename{SP-Iteration}, but instead
of relying on subsumption they use the merge operator to compute
common behaviors.

\begin{table*}[t]
\caption{\label{tab:algorithmic_projection}\strut Rules for algorithmic projection.}
\framebox[\textwidth]{
\begin{math}
\displaystyle
\begin{array}{c}
  \inferrule[\rulename{AP-Skip}]{}{
    \projecta{\cont}{\gend}{\cont}
  }
  \\
  \inferrule[\rulename{AP-Action}]{}{
    \projecta{
      \{ \role_i : \sesst_i \}_{i\in I}
      \scup
      \{ \role : \sesst \}
      \scup
      \cont
    }{
      \gaction{\{ \role_i \}_{i \in I}}{\role}{\val}
    }{
      \{ \role_i : \Out\role\val.\sesst_i \}_{i\in I}
      \scup
      \{ \role : \In{\{ \role_i \}_{i\in I}}\val.\sesst \}
      \scup
      \cont
    }
  }
  \\\\
  \inferrule[\rulename{AP-Sequence}]{
    \projecta{\cont}{\gtype_2}{\cont'}
    \\
    \projecta{\cont'}{\gtype_1}{\cont''}
  }{
    \projecta{\cont}{\gtype_1\gseq\gtype_2}{\cont''}
  }
  \quad
  \inferrule[\rulename{AP-Alternative}]{
    \projecta{\cont}{\gtype_1}{
      \{ \role : \sesst_1 \} \scup \cont_1
    }
    \\
    \projecta{\cont}{\gtype_2}{
      \{ \role : \sesst_2 \} \scup \cont_2
    }
  }{
    \projecta{\cont}{\gtype_1 \gor \gtype_2}{
      \{ \role : \sesst_1 \intsum \sesst_2 \}
      \scup
      (\cont_1\asup\cont_2)
    }
  }
  \\\\
  \inferrule[\rulename{AP-Iteration}]{
    \projecta{
      \{ \role : \recvar \}
      \scup \{ \role_i : \recvar_i \}_{i\in I}
      \scup \cont
    }{\gtype}{
      \{ \role : \varsesst \}
      \scup \{ \role_i : \varsesst_i \}_{i\in I}
      \scup \cont
    }
  }{
    \projecta{
      \{ \role : \sesst \}
      \scup \{ \role_i : \sesst_i \}_{i\in I}
      \scup \cont
    }{\gtype\gstar}{
      \{ \role : \rec{\recvar}{(\sesst \intsum \varsesst)} \}
      \scup \{ \role_i : \rec{\recvar_i}{(\sesst_i \asup \varsesst_i)} \}_{i\in I}
      \scup \cont
    }
  }
\end{array}
\end{math}
}
\end{table*}

The merge operation is a sound but incomplete approximation of session
subsumption insofar as the merge of two sessions can be undefined even
though the two sessions completed with the participant that makes the
decision have a common lower bound according to $\gless{}{}$. This
implies that there are global types which can be semantically but not
algorithmically projected.\\
Take for example $\gtype_1 \gor \gtype_2$
where $\gtype_1 = \gaction\roleP\roleR\valA \gseq
\gaction\roleR\roleP\valA \gseq \gaction\roleP\roleQ\valA \gseq
\gaction\roleQ\roleR\valB$ and $\gtype_2 = \gaction\roleP\roleQ\valB
\gseq \gaction\roleQ\roleR\valB$.
The behavior of $\roleR$ in $\gtype_1$ and $\gtype_2$ respectively is
$\sesst = \In\roleP\valA.\Out\roleP\valA.\In\roleQ\valB.\End$ and
$\varsesst = \In\roleQ\valB.\End$. Then we see that $\gtype_1 \gor
\gtype_2$ is semantically projectable, for instance by inferring the
behavior $\sesst \extsum \varsesst$ for $\roleR$. However, $\sesst$
and $\varsesst$ are incompatible and $\gtype_1 \gor \gtype_2$ is not
algorithmically projectable.
The point is that the $\gless{}{}$ relation on projections has a
comprehensive perspective of the \emph{whole} session and ``realizes''
that, if $\roleP$ initially chooses to send $\valA$, then $\roleR$
will not receive a $\valB$ message coming from $\roleQ$ until $\roleR$
has sent $\valA$ to $\roleP$. The merge operator, on the other hand,
is defined locally on pairs of session types and ignores that the
$\valA$ message that $\roleR$ sends to $\roleP$ is used to enforce the
arrival of the $\valB$ message from $\roleQ$ to $\roleR$ only
afterwards. For this reason it conservatively declares $\sesst$ and
$\varsesst$ incompatible, making $\gtype_1 \gor \gtype_2$ impossible
to project algorithmically. Appendix \ref{mmc} discusses further
examples illustrating merge and compatibility.

\subsection{Projection of Kleene star}

Since an iteration $\gtype\gstar$ is intuitively equivalent to $\gend
\gor \gtype\gseq\gtype\gstar$ it comes as no surprise that the
algorithmic rule \rulename{AP-Iteration} uses the merge operator.
The use of recursion variables for continuations is also natural: in
the premise we project $\gtype$ taking recursion variables as session
types in the continuation; the conclusion projects $\gtype\gstar$ as
the choice between exiting and entering the loop.
There is, however, a subtle point in this rule that may go unnoticed:
the projection of $\gtype\gstar$ may require a continuation
that includes actions and roles that precede $\gtype\gstar$.  The
point can be illustrated by the global type
\[
(\gaction{\role}{\varrole}{\valA} \gseq 
 (\gaction{\role}{\varrole}{\valB})\gstar)\gstar \gseq 
\gaction{\role}{\varrole}\valC
\]
where $\role$ initially decides whether to enter the outermost
iteration (by sending $\valA$) or not (by sending $\valC$). If it
enters the iteration, then it eventually decides whether to also enter
the innermost iteration (by sending $\valB$), whether to repeat the
outermost one (by sending $\valA$), or to exit both (by sending
$\valC$).
Therefore, when we project
$(\gaction{\role}{\varrole}{\varval})\gstar$, we must do it in a
context in which both $\gaction{\role}{\varrole}\valC$ and
$\gaction{\role}{\varrole}{\valA}$ are possible, that is a
continuation of the form $\{\role:\Out\varrole{\valA} \dots \intsum
\Out\varrole{\valC}.\End\}$ even though no $\valA$ is sent by an
action (syntactically) following
$(\gaction{\role}{\varrole}{\varval})\gstar$.
For the same reason, the projection of
$(\gaction{\role}{\varrole}{\valB})\gstar$ in
$(\gaction{\role}{\varrole}{\valA} \gseq
\gaction{\role}{\roleR}{\valA} \gseq
(\gaction{\role}{\varrole}{\valB})\gstar)\gstar \gseq
\gaction{\role}{\varrole}\valC \gseq \gaction{\varrole}{\roleR}\valC$
will need a recursive session type for $\roleR$ 
in the continuation.

%
%

\subsection{Global type subsumption}\label{gts}

Elimination of global type subsumption is the most difficult problem
when defining the projection algorithm. While in the case of sessions
the definition of the merge operator gives us a sound ---though not
complete--- tool that replaces session subsumption in very specific
places, we do not have such a tool for global type containment.
This is unfortunate since global type subsumption is necessary to
project several usage patterns (see for example the
inequations~(\ref{uno}) and~(\ref{due})), but most importantly it is
the only way to eliminate $\gand$-types (neither the semantic nor the
algorithmic deduction systems have projection rules for {\quotesymbol\gand}).
The minimal facility  that a projection
algorithm should provide is to feed the algorithmic rules with all the
variants of a global type obtained by replacing occurrences of
$\gtype_1\gand\gtype_2$ by either $\gtype_1\gseq\gtype_2$ or
$\gtype_2\gseq\gtype_1$.
Unfortunately, this is not enough to cover all the occurrences in
which rule~\rulename{SP-SubsumptionG} is necessary. Indeed, while
$\gtype_1\gseq\gtype_2$ and $\gtype_2\gseq\gtype_1$ are  in
many cases projectable (for instance, when $\gtype_1$ and $\gtype_2$ have distinct
roles and are both projectable), there exist $\gtype_1$ and $\gtype_2$
such that $\gtype_1 \gand \gtype_2$ is projectable only by considering
a clever interleaving of the actions occurring in them.
Consider for instance
$\gtype_1=(\gaction{\role}{\varrole}{\val}\gseq\gaction{\varrole}{\roleS}{\valC}\gseq\gaction{\roleS}{\varrole}{\valE})\gor(\gaction{\role}{\roleR}{\varval}\gseq\gaction{\roleR}{\roleS}{\valD}\gseq\gaction{\roleS}{\roleR}{\valF})$
and
$\gtype_2=\gaction{\roleR}{\roleS}{\valG}\gseq\gaction{\roleS}{\roleR}{\valH}\gseq\gaction{\roleS}{\varrole}{\valI}$. The
projection of $\gtype_1\gand\gtype_2$ from the environment
$\set{\varrole:\role!\val.\End, \roleR:\role!\varval.\End}$ can be
obtained only from the interleaving
\[\gaction{\roleR}{\roleS}{\valG}\gseq\gtype_1\gseq\gaction{\roleS}{\roleR}{\valH}\gseq\gaction{\roleS}{\varrole}{\valI}.\]
The reason is that $\varrole$ and $\roleR$ receive messages only in
one of the two branches of the \quotesymbol\gor, so we need to compute
the merge of their types in these branches with their types in the
continuations.
The example shows that to project $\gtype_1 \gand \gtype_2$ it may be
necessary to arbitrarily decompose one or both of $\gtype_1$ and
$\gtype_2$ to find the particular interleaving of actions that can be
projected. As long as $\gtype_1$ and $\gtype_2$ are finite (no
non-trivial iteration occurs in them), we can use a brute force
approach and try to project all the elements in their shuffle, since
there are only finitely many of them.
In general ---\ie, in presence of iteration--- this is not an
effective solution. However, we conjecture that even in the presence
of infinitely many traces one may always resort to the finite case by
considering only zero, one, and two unfoldings of starred global
types. To give a rough idea of the intuition supporting this
conjecture consider the global type $\gtype\gstar \gand \gtype'$: its
projectability requires the projectability of $\gtype'$ (since
$\gtype$ can be iterated zero times), of $\gtype\gand \gtype'$ (since
$\gtype$ can occur only once) and of $\gtype\gseq \gtype$ (since the
number of occurrences of $\gtype$ is unbounded). It is enough to
require also that either $\gtype\gseq(\gtype\gand \gtype')$ or
$(\gtype\gand \gtype')\gseq\gtype$ can be projected, since then the
projectability of either $\gtype^n\gseq(\gtype\gand \gtype')$ or
$(\gtype\gand \gtype')\gseq\gtype^n$ for an arbitrary $n$ follows (see
\iflongversion Appendix~\ref{moreee}).  \else
the appendix in the extended version). 
 \fi

 So we can ---or, conjecture we can--- get rid of all occurrences of
 {\quotesymbol\gand} operators automatically, without losing in
 projectability. However, examples (\ref{uno}) and (\ref{due}) in
 Section~\ref{sec:projection} show that
 rule~\rulename{SP-SubsumptionG} is useful to project also global
 types in which the $\gand$-constructor does not occur. A fully
 automated approach may consider (\ref{uno}) and (\ref{due}) as
 right-to-left rewriting rules that, in conjunction with some other
 rules, form a rewriting system generating a set of global types to be
 fed to the algorithm of Table~\ref{tab:algorithmic_projection}. The
 choice of such rewriting rules must rely on a more thorough study to
 formally characterize the sensible classes of approximations to be
 used in the algorithms.  An alternative approach is to consider a
 global type $\gtype$ as somewhat underspecified, in that it may allow
 for a large number of \emph{different} implementations (exhibiting
 \emph{different} sets of traces) that are sound and
 complete. Therefore, rule~\rulename{SP-SubsumptionG} may be
 interpreted as a human-assisted refinement process where the designer
 of a system proposes one particular implementation
 $\gless{\gtype}\gtype'$ of a system described by $\gtype'$.
In this respect it is interesting to observe that checking whether
$\gless{\lang_2}{\lang_1}$ when $\lang_1$ and $\lang_2$ are regular is
decidable, since this is a direct consequence of the decidability of
the Parikh equivalence on regular languages~\cite{Parikh66}.\iflongversion
\footnote{Whether two regular languages
  have the same Parikh image is decidable. The Parikh image of a word
  $w$ maps each letter of the alphabet to the number of times it
  appears in $w$, the Parikh image of a language is the set of Parikh
  images of all words in the language. By checking Parikh images one
  can check equivalence of languages modulo permutations.}
  \else
\fi

\subsection{Properties of the algorithmic rules}

Every deduction of the algorithmic system given in
Table~\ref{tab:algorithmic_projection}, possibly preceded by the
elimination of {\quotesymbol\gand} and other potential sources of failures by
applying the rewritings/heuristics outlined in the previous
subsection, induces a similar deduction using the rules for semantic
projection (Table~\ref{tab:semantic_projection}). For the proof see
Appendix \ref{ptthm:ap}.

\begin{theorem}
  \label{thm:ap}
If $ \projecta{}{\gtype}{\cont}$, then  $\project{}{\gtype}{\cont}$.
\end{theorem}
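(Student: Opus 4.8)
The plan is to strengthen the statement to arbitrary continuations and prove, by induction on the derivation of $\projecta{\cont}{\gtype}{\cont'}$, that it entails $\project{\cont}{\gtype}{\cont'}$; the theorem is the instance $\cont=\set{\role:\End\mid\role\in\gtype}$. For \rulename{AP-Skip}, \rulename{AP-Action} and \rulename{AP-Sequence} there is nothing to do, since these rules coincide syntactically with \rulename{SP-Skip}, \rulename{SP-Action} and \rulename{SP-Sequence}: one applies the induction hypothesis to the premises and re-applies the matching semantic rule. All the difficulty is in \rulename{AP-Alternative} and \rulename{AP-Iteration}, where the algorithm replaces uses of \rulename{SP-Subsumption} by the merge operator. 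Both cases hinge on a \emph{soundness lemma for merge} --- the formal counterpart of the remark in Section~\ref{sec:sal} that $\asup$ soundly approximates session subsumption --- which I would isolate and prove separately: whenever $\cont_1\asup\cont_2$ is defined, replacing $\cont_1$ by $\cont_1\asup\cont_2$ inside a session yields a $\gless{}{}$-refinement, i.e. $\gless{\set{\role:T}\scup\cont_1}{\set{\role:T}\scup(\cont_1\asup\cont_2)}$ for every fresh $\role$ and every $T$. This holds because the additional input branches that merging introduces in a non-deciding participant are never exercised in the contexts produced by projection --- it is some other participant's internal choice that governs which branch is relevant --- so the two sessions in fact have the same set of traces.

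For \rulename{AP-Alternative} the argument is then short: the induction hypothesis on the two premises gives $\project{\cont}{\gtype_1}{\set{\role:\sesst_1}\scup\cont_1}$ and $\project{\cont}{\gtype_2}{\set{\role:\sesst_2}\scup\cont_2}$, and since $\cont_1\asup\cont_2$ is defined the soundness lemma yields $\gless{\set{\role:\sesst_j}\scup\cont_j}{\set{\role:\sesst_j}\scup(\cont_1\asup\cont_2)}$; \rulename{SP-Subsumption} (with the identity subsumption on the global type) turns each branch projection into $\project{\cont}{\gtype_j}{\set{\role:\sesst_j}\scup(\cont_1\asup\cont_2)}$, the residual continuations now agree, and \rulename{SP-Alternative} delivers exactly $\project{\cont}{\gtype_1\gor\gtype_2}{\set{\role:\sesst_1\intsum\sesst_2}\scup(\cont_1\asup\cont_2)}$.

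The \rulename{AP-Iteration} case is where I expect the real work to lie. The obstruction is structural: \rulename{SP-Iteration} keeps the continuation of every non-deciding participant literally fixed between hypothesis and conclusion, whereas \rulename{AP-Iteration} replaces each such $\sesst_i$ by the recursive type $\rec{\recvar_i}{(\sesst_i\asup\varsesst_i)}$ obtained by merging $\sesst_i$ with the per-iteration behavior $\varsesst_i$, and it presents the loop through fresh recursion markers rather than through an already-unfolded fixpoint. The plan is: (i) prove a substitution lemma stating that projection derivations are closed under the substitution of guarded recursion variables, using that the projection rules and the merge operator both commute with such substitutions --- this step is delicate because the \emph{compatibility} side-conditions on merges are defined structurally and must be shown to survive the specific substitutions $\recvar\mapsto\rec\recvar{(\sesst\intsum\varsesst)}$ and $\recvar_i\mapsto\rec{\recvar_i}{(\sesst_i\asup\varsesst_i)}$, exploiting that the substituted types are built out of the very components being merged; (ii) apply this lemma to the derivation coming from the induction hypothesis on the premise of \rulename{AP-Iteration}; (iii) rewrite the outcome, using fold/unfold of recursion and idempotence of $\intsum$ (and the fact that $\sesst,\sesst_i$ contain none of the fresh markers), so that $\rec\recvar{(\sesst\intsum\varsesst)}$ is recognised as $\sesst\intsum\varsesst'$ and $\rec{\recvar_i}{(\sesst_i\asup\varsesst_i)}$ as $\sesst_i\asup\varsesst_i'$; (iv) insert the needed \rulename{SP-Subsumption} steps --- justified again by the soundness lemma for merge and by the trace-equality observation above --- so as to bring the judgment into the exact shape of the premise of an instance of \rulename{SP-Iteration}, and conclude with that rule.

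The main obstacle, as indicated, is the combination of steps (i)--(iv): getting the substitution lemma robust enough to interact correctly with merge and compatibility, lining up the recursion-marker presentation of the algorithmic loop rule with the unfolded presentation of \rulename{SP-Iteration}, and checking that every \rulename{SP-Subsumption} used is legitimate --- which reduces, via the soundness lemma for merge, to the claim that the extra input branches created by merging are dead in the relevant contexts and hence do not change the set of traces. The \rulename{AP-Alternative} case uses that same lemma ``flatly'', whereas in \rulename{AP-Iteration} it is entangled with the fixpoint unfolding, and that entanglement is the crux.
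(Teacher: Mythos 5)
Your overall strategy---strengthen to arbitrary continuations, induct on the algorithmic derivation, and justify the uses of \rulename{SP-Subsumption} by a ``merge is sound'' lemma---is the paper's strategy, and your merge lemma is essentially the paper's Lemma~\ref{dl}(\ref{dl1}) (trace \emph{equality} after merging in compatible inputs), so the \rulename{AP-Alternative} case goes through as you describe. Two caveats even there: Lemma~\ref{dl}(\ref{dl1}) needs the session to be \emph{live} (otherwise both trace sets may degenerate to $\emptyset$ and the argument about ``dead branches'' says nothing), so you must also carry along a liveness-preservation lemma for algorithmic projection (the paper's Lemma~\ref{sf}, which is itself a small induction involving iterated substitutions); you do not mention this bookkeeping.

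The genuine gap is in your plan (i)--(ii) for \rulename{AP-Iteration}. You propose to first obtain, from the induction hypothesis on the premise, a \emph{semantic} projection judgment whose continuation is $\{\role:\recvar\}\scup\{\role_i:\recvar_i\}_{i\in I}\scup\cont$, and only afterwards substitute the recursive types for the fresh variables. But that intermediate judgment is not well defined: a bare $\recvar$ is not a session type (the variable is unguarded), such a session has no operational semantics and hence no traces, and the side conditions $\gless{\cdot}{\cdot}$ of \rulename{SP-Subsumption} are trace-based. So the ``substitution lemma for semantic projection derivations'' cannot even be stated on these open judgments, and there is no sub-derivation to which a post-hoc substitution could be applied. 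The paper resolves exactly this by folding the closing substitution into the induction hypothesis: the strengthened statement is ``if $\substitution\cont$ is live and $\projecta{\cont}{\gtype}{\cont'}$ then $\project{\substitution\cont}{\gtype}{\substitution{\cont'}}$'' for an arbitrary substitution $\substitu$ of closed session types for variables, so that in the \rulename{AP-Iteration} case one instantiates the IH directly with $\recvar\mapsto\rec\recvar{(\sesst\intsum\varsesst)}$ and $\recvar_i\mapsto\rec{\recvar_i}{(\sesst_i\asup\varsesst_i)}$ and no open semantic judgment ever appears; the final adjustment of the continuation is then done with Lemma~\ref{dl}(\ref{dl1}) plus the observation (Lemma~\ref{en}) that semantic projection only depends on the continuation up to trace equality. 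Your steps (iii)--(iv) survive essentially unchanged once the induction hypothesis is reformulated this way, but without that reformulation step (ii) fails.
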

As a corollary of Theorems~\ref{scT} and~\ref{thm:ap}, we immediately
obtain that the projection $\cont$ of a well-formed $\gtype$ returned by the
algorithm is sound and complete with respect to $\gtype$.

\begin{remark}
  Although every projection of a global type $\gtype$ produced by the
  algorithm is sound and complete with respect to $\gtype$, let us
  stress once more that the algorithm itself is sound but not complete
  with respect to the semantic projection system defined in
  Figure~\ref{tab:semantic_projection}: while every algorithmic
  projection is a semantic projection as well, there exist global
  types which are projectable semantically but not algorithmically.
\end{remark}


\section{$k$-Exit iterations} 
\label{sec:multistar}

The syntax of global types (Table~\ref{tab:gtypes}) includes that of
regular expressions and therefore is expressive enough for describing
any protocol that follows a regular pattern. Nonetheless, the simple
Kleene star prevents us from projecting some useful protocols.
To illustrate the point, suppose we want to describe an interaction
where two participants $\role$ and $\varrole$ alternate in a
negotiation in which each of them may decide to bail out. On $\role$'s
turn, $\role$ sends either a $\bailout$ message or a $\handover$
message to $\varrole$; if a $\bailout$ message is sent, the
negotiation ends, otherwise it continues with $\varrole$ that behaves
in a symmetric way. The global type
\[
  (\gaction\role\varrole\handoverT\gseq\gaction\varrole\role\handoverT)\gstar\gseq
  (\gaction\role\varrole\bailoutT \gor
   \gaction\role\varrole\handoverT\gseq\gaction\varrole\role\bailoutT)
\]
describes this protocol as an arbitrarily long negotiation that may
end in two possible ways, according to the participant that chooses to
bail out.
This global type cannot be projected because of the two occurrences of
the interaction $\gaction\role\varrole\handoverT$, which make it
ambiguous whether $\role$ actually chooses to bail out or to continue
the negotiation. In general, our projection
rules~\rulename{SP-Iteration} and~\rulename{AP-Iteration} make the
assumption that an iteration can be exited in one way only, while in
this case there are two possibilities according to which participant
bails out.
This lack of expressiveness of the simple Kleene star used in a
nondeterministic setting~\cite{Milner84} led researchers to seek for
alternative iterative constructs. One proposal is the \emph{$k$-exit
  iteration}~\cite{BergstraBethkePonse93}, which is a generalization
of the binary Kleene star and has the form
\[
  (\gtype_1,\dots,\gtype_k)\kstar{k}(\gtype_1',\dots,\gtype_k')
\]
indicating a loop consisting of $k$ subsequent phases
$\gtype_1,\dots,\gtype_k$. The loop can be exited just before each
phase through the corresponding $\gtype_i'$.
Formally, the traces of the $k$-exit iteration can be expressed thus:
\[
\begin{array}{@{}rcl@{}}
\traces((\gtype_1,\dots,\gtype_k)\kstar{k}(\gtype_1',\dots,\gtype_k'))
& \eqdef &
\traces((\gtype_1\gseq\dots\gseq\gtype_k)\gstar\gseq(\gtype_1' \gor \gtype_1\gseq\gtype_2' \gor \cdots \gor \gtype_1\gseq\dots\gseq\gtype_{k-1}\gseq\gtype_k'))
\end{array}
\]
and, for example, the negotiation above can be represented as the
global type
\begin{equation}
\label{eqn:negotiation}
  (\gaction\role\varrole\handoverT, \gaction\varrole\role\handoverT)
  \kstar{2}
  (\gaction\role\varrole\bailoutT, \gaction\varrole\role\bailoutT)
\end{equation}
while the unary Kleene star $\gtype\gstar$ can be encoded as
$(\gtype)\kstar{1}(\gend)$.

\begin{table}[t]
\caption{\label{tab:multiexit}\strut Semantic projection of $k$-exit iteration.}
\framebox[\textwidth]{
\begin{math}
\displaystyle
\begin{array}{c}
  \inferrule[\rulename{SP-$k$-Exit Iteration}]{
    \project{\cont}{\gtype_i'}{
      \{ \role_i : \sesstS_i \} \scup
      \{ \role_j : \sesstR_j \}_{j=1,\dots,i-1,i+1,\dots,k} \scup
      \cont'
    }~{}^{(i\in\{1,\dots,k\})}
    \\
    \project{
      \{ \role_2 : \sesstT_2 \intsum \sesstS_2 \} \scup
      \{ \role_i : \sesstR_i \}_{i=1,3,\dots,k} \scup
      \cont'
    }{\gtype_1}{
      \{ \role_1 : \sesstT_1 \} \scup
      \{ \role_i : \sesstR_i \}_{i=2,\dots,k} \scup
      \cont'
    }
    \\
    \project{
      \{ \role_3 : \sesstT_3 \intsum \sesstS_3 \} \scup
      \{ \role_i : \sesstR_i \}_{i=1,2,4,\dots,k} \scup
      \cont'
    }{\gtype_2}{
      \{ \role_2 : \sesstT_2 \} \scup
      \{ \role_i : \sesstR_i \}_{i=1,3,\dots,k} \scup
      \cont'
    }
    \\\\\vdots\\\\
    \project{
      \{ \role_1 : \sesstT_1 \intsum \sesstS_1 \} \scup
      \{ \role_i : \sesstR_i \}_{i=2,\dots,k} \scup
      \cont'
    }{\gtype_k}{
      \{ \role_k : \sesstT_k \} \scup
      \{ \role_i : \sesstR_i \}_{i=1,\dots,k-1} \scup
      \cont'
    }
  }{
    \project{\cont}{(\gtype_1,\dots,\gtype_k)\kstar{k}(\gtype_1',\dots,\gtype_k')}{
      \{ \role_1 : \sesstT_1 \intsum \sesstS_1 \} \scup
      \{ \role_i : \sesstR_i \}_{i=2,\dots,k} \scup
      \cont'
    }
  }
\end{array}
\end{math}
}
\end{table}

In our setting, the advantage of the $k$-exit iteration over the
Kleene star is that it syntactically identifies the $k$ points in
which a decision is made by a participant of a multi-party session
and, in this way, it enables more sophisticated projection rules such
as those in Table~\ref{tab:multiexit}.  Albeit intimidating,
rule~\rulename{SP-$k$-Exit Iteration} is just a generalization of
rule~\rulename{SP-Iteration}.
For each phase $i$ a (distinct) participant $\role_i$ is identified:
the participant may decide to exit the loop behaving as $\sesstS_i$ or
to continue the iteration behaving as $\sesstT_i$. While projecting
each phase $\gtype_i$, the participant $\role_{(i\bmod k)+1}$ that
will decide at the next turn is given the continuation
$\sesstT_{(i\bmod k)+1} \intsum \sesstS_{(i\bmod k)+1}$, while the
others must behave according to some $\sesstR_i$ that is the same for
every phase in which they play no active role. Once again,
rule~\rulename{SP-Subsumption} is required in order to synthesize
these behaviors.
For example, the global type~(\ref{eqn:negotiation}) is projected to
\[
\begin{array}{r@{}l}
  \{ & \role : \rec\recvar(\Out\varrole\handover.(\In\varrole\handover.X \extsum \In\varrole\bailout.\End) \intsum \Out\varrole\bailout.\End), \\
     & \varrole : \rec\varY(\In\role\handover.(\Out\role\handover.Y \intsum \Out\role\bailout.\End) \extsum \In\role\bailout.\End) \}
   \end{array}
\]
as one expects.


\section{Related work}\label{sec:related}

The formalization and analysis of the relation between a global
description of a distributed system and a more machine-oriented
description of a set of components that implements it is a problem
that has been studied in several contexts and by different
communities. In this setting, important properties that are considered
are the \emph{verification} that an implementation satisfies the
specification, the \emph{implementability} of the specification,
and the study of
different properties of the specification that can then be transposed
to each (possibly automatically produced) implementation satisfying
it.  In this work we focused on the implementability problem, and we
tackled it from the ``Web service coordination'' perspective developed
by the community that works on behavioral types and process
algebrae. We are just the latest ones to attack this problem. So many
other communities have been considering it before us that even a
sketchy survey has no chance to be exhaustive.
\iflongversion
In what follows we describe two alternative approaches studied by
important communities with a large amount of different and important
contributions, namely the ``automata'' and ``cryptographic protocols''
approaches, and then focus on surveying our ``behavioral types/process
algebra'' approach stressing the relations with the two other
approaches and its peculiarities.

\subsection{Automata approach}
Probably the most extensive research on this problem is pursued by the
``automata/model-checking'' (particularly, finite state automata)
community where special care is paid to software engineering
specification problems. In particular, a lot of research effort has
focused on two specification languages standardized in
telecommunications, the \emph{Message Sequence Charts} (MSCs, ITU
Z.120 standard) and the \emph{Specification and Description Language}
(SDL, ITU Z.100 standard). These respectively play the roles of our
global types and session types. MSCs have become popular in software
development thanks to their graphical representation that depicts
every process by a vertical line and each message as an arrow from the
sender to the receiver process fired according to their top-down
ordering. This standard, included in UML, can also represent other
features, such as timers, atomic events, local/global conditions, but
it can represent neither iterations nor branching. This is why it has
been extended to \emph{Message Sequence Graphs} (MSGs, a special case
of the \emph{High-Level Message Sequence Charts} included
in the Z.120 standard, with equivalent expressivity~\cite{MR97}) which consist of
finite transition systems whose states encapsulate a single MSC:
reaching a given state starts the execution of the embedded MSC whose
termination makes the control move to another state. MSGs play the
same role as our global types.
\begin{figure}
 \centering
  \begin{tikzpicture}[scale=.5] 
    \node[draw, rounded corners=2pt,thick] (A) at (0,0) {
      \begin{tikzpicture}[scale=.5]
        \node[draw, rounded corners=3pt, minimum height=6mm] (s) at (3,4) {\tt seller};
        \node[draw, rounded corners=3pt, minimum height=6mm] (b) at (0,4) {\tt buyer};
        \node[draw, fill=black, minimum width=6mm] (be) at (0,0) {};
        \node[draw, fill=black, minimum width=6mm] (se) at (3,0) {};
        \draw (s) -- (se);
        \draw[<-, >=latex] (0,2) -- (1.5,2) node[above]{\it descr} -- (3,2);
        \draw[<-, >=latex] (0,1) -- (1.5,1) node[above]{\it price} -- (3,1);
        \draw (b) -- (be);
      \end{tikzpicture}
    } ;
    \node[draw, rounded corners=2pt,thick] (B) at (8.5,0) {
      \begin{tikzpicture}[scale=.5]
        \node[draw, rounded corners=3pt, minimum height=6mm] (s) at (3,4) {\tt seller};
        \node[draw, rounded corners=3pt, minimum height=6mm] (b) at (0,4) {\tt buyer};
        \node[draw, fill=black, minimum width=6mm] (be) at (0,0) {};
        \node[draw, fill=black, minimum width=6mm] (se) at (3,0) {};
        \draw (s) -- (se);
        \draw[->, >=latex] (0,2) -- (1.5,2) node[above]{\it offer} -- (3,2);
        \draw[<-, >=latex] (0,1) -- (1.5,1) node[above]{\it price} -- (3,1);
        \draw (b) -- (be);
      \end{tikzpicture}
    } ;
    \node[draw, rounded corners=2pt,thick] (C) at (17,-2.5) {
      \begin{tikzpicture}[scale=.5]
        \node[draw, rounded corners=3pt, minimum height=6mm] (s) at (3,3) {\tt seller};
        \node[draw, rounded corners=3pt, minimum height=6mm] (b) at (0,3) {\tt buyer};
        \node[draw, fill=black, minimum width=6mm] (be) at (0,0) {};
        \node[draw, fill=black, minimum width=6mm] (se) at (3,0) {};
        \draw (s) -- (se);
        \draw[->, >=latex] (0,1) -- (1.5,1) node[above]{\it accept} -- (3,1);
        \draw (b) -- (be);
      \end{tikzpicture}
    } ;
    \node[draw, rounded corners=2pt,thick] (D) at (17,2.5) {
      \begin{tikzpicture}[scale=.5]
        \node[draw, rounded corners=3pt, minimum height=6mm] (s) at (3,3) {\tt seller};
        \node[draw, rounded corners=3pt, minimum height=6mm] (b) at (0,3) {\tt buyer};
        \node[draw, fill=black, minimum width=6mm] (be) at (0,0) {};
        \node[draw, fill=black, minimum width=6mm] (se) at (3,0) {};
        \draw (s) -- (se);
        \draw[->, >=latex] (0,1) -- (1.5,1) node[above]{\it quit} -- (3,1);
        \draw (b) -- (be);
      \end{tikzpicture}
    } ;
    \draw[->, >=latex, thick, rounded corners=6pt] (B) -- (12,-1) |- (C.west);
    \draw[->, >=latex, thick, rounded corners=6pt] (B) -- (12,1) |- (D.west);
    \draw[->, >=latex, thick] (A) -- (B);
    \path (B) edge[->, out=50, in=140,loop below, min distance=30mm, thick] (B);
\end{tikzpicture}
\caption{MSG of the seller-buyer protocol\label{fig:msg}}
\end{figure}

In particular the global type (\ref{spec2}) of the introduction
corresponds to the MSG in Figure~\ref{fig:msg}. The MSG is formed by
four
states that embed a MSC each. The middle state can loop on
itself or branch in one of the two possible final states.

While a MSG specifies the behavior of a distributed system in terms of
interactions, \emph{Communicating Finite-State Machines} (CFSMs)
---the core theoretical model of SDL--- describe it in terms of its
single components. They are systems of finite state automata that
communicate via asynchronous unbounded FIFO channels. The automata
transitions are labeled by communication primitives which specify the
message and the sender or receiver of it and their execution triggers
a read or write action on the corresponding buffer. A run is
successful if each automaton ends its execution in a final state and
all buffers are empty.
An example is depicted in Figure~\ref{fig:cfm} which implements the
protocol described by the MSG of Figure~\ref{fig:msg}. The automaton
on the top implements the seller while the one on the bottom the
buyer. They communicate by two directional buffers depicted in the
middle of the figure. It is clear that every run of these machines
places at most 2 messages in the buffers and that buffers of length 1
would suffice to implement this protocol without causing deadlocks.

\begin{figure}
\centering
\begin{tikzpicture}
\node[state,initial](q0){$q_0$};  
\node[state](q1)[right=25mm]{$q_1$};  
\node[state](q2)[right=60mm]{$q_2$};  
\node[state,accepting](q3)[above=5mm, right=95mm]{$q_3$};  
\node[state,accepting](q4)[below=5mm, right=95mm]{$q_4$}; 

\path[->] 
    (q0) edge node [above]{$\scriptstyle\Out{\tt buyer}{\it descr}$}   (q1)
    (q1) edge[bend left] node [above]{$\scriptstyle\Out{\tt buyer}{\it price}$}   (q2) 
    (q2) edge[bend left] node [below]{$\scriptstyle\In{\tt buyer}{\it offer}$}   (q1) 
    (q2) edge[bend left] node [above]{$\scriptstyle\In{\tt buyer}{\it accept}$}  (q3) 
    (q2) edge[bend right] node[below]{$\scriptstyle\In{\tt buyer}{\it quit}$}    (q4) 
;
\end{tikzpicture}

\medskip

\begin{tikzpicture}[scale=.3]
 \node(a) at (0,.5){\footnotesize$\gaction\buy\sell{}$};
 \draw (5,0) grid (10,1);
 \node(a) at (20,.5){\footnotesize$\gaction\sell\buy{}$};
 \draw (25,0) grid (30,1); 
\end{tikzpicture}

\medskip

\begin{tikzpicture}
\node[state,initial](q0){$q_0$};  
\node[state](q1)[right=25mm]{$q_1$};  
\node[state](q2)[right=60mm]{$q_2$};  
\node[state,accepting](q3)[above=5mm, right=95mm]{$q_3$};  
\node[state,accepting](q4)[below=5mm, right=95mm]{$q_4$}; 

\path[->] 
    (q0) edge node [above]{$\scriptstyle\In{\tt seller}{\it descr}$}   (q1)
    (q1) edge[bend left] node [above]{$\scriptstyle\In{\tt seller}{\it price}$}   (q2) 
    (q2) edge[bend left] node [below]{$\scriptstyle\Out{\tt seller}{\it offer}$}   (q1) 
    (q2) edge[bend left] node [above]{$\scriptstyle\Out{\tt seller}{\it accept}$}  (q3) 
    (q2) edge[bend right] node[below]{$\scriptstyle\Out{\tt seller}{\it quit}$}    (q4) 
;
\end{tikzpicture}

\caption{CFSMs implementing the seller-buyer protocol.\label{fig:cfm}}
  
\end{figure}

CFSMs essentially are our pre-session types: nothing prevents two
transitions respectively labeled by an input and an output operation
to spring from the same state.  As in our case the interest is in
relating MSGs with CFSMs so that the latter are implementations of the
former. It comes as no surprise that the two formalisms are in general
incomparable. As pointed out in~\cite{GMP03,GM05} this depends on two
fundamental parameters: \emph{control} and \emph{state}. In MSGs (as
well as in our global types) the control of branching is essentially
global since it affects all the roles that occur in future executions,
whereas in CFSMs (as well as in session types) it is inherently local,
since it corresponds to the local transition function. Consequently,
there are MSGs that are not implementable by CFSMs, insofar as the
latter cannot implement global choices (in this work we further
distinguished three degrees of ``non implementability'': no
sequentiality, no knowledge for choice and no knowledge no
choice).
Viceversa, the unbounded buffers of CFSMs provide them with infinite
states and this gives them a Turing equivalent
expressivity~\cite{BZ93}. MSGs, instead, are finitely generated, in
the sense that for every MSG $G$ there exists a finite set
$\mathcal{S}$ of finite MSCs such that any execution of $G$ can be
written as the juxtaposition of the execution of elements in
$\mathcal{S}$. It is then clear that MSGs cannot specify all CFSMs
systems (an example of this is the \emph{alternating bit protocol} in
which a sender resends a message to a receiver since the
acknowledgment arrived too late: to be specified, this protocol needs
MSCs of arbitrary length, see~\cite{GMP03}).  The relative expressive
powers of the two formalisms (finitely generated vs.\ Turing complete)
makes it apparent that the static verification of properties should be
much ``easier'' on MSGs than on CFSMs. Indeed, the expressivity of
CFSMs is used to justify the use of MSGs as an early specification
tool to then be implemented (\ie, projected) into CFSMs: since CFSMs
are Turing complete, all nontrivial behavioral properties --
termination, reachability (\ie, is a given control state reachable?),
deadlock-freedom, boundedness (\ie, is there some bound $n$ such that
every reachable configuration has buffers of size at most $n$?) -- are
undecidable.
Even if some of these properties can be made decidable by some
restrictions (\eg, reachability and safety properties become decidable
with lossy channels, even though liveness properties and boundedness
remain undecidable, see \cite{Sch04}) it is believed that a
satisfactory set of decidable properties can be obtained only with
trivial CFSMs (\eg, with only two processes or with bounded
buffers).
Half-duplex systems~\cite{CeceFinkel05} made of two CFSMs, where each
reachable configuration has at most one buffer non-empty, are closely
related to dyadic sessions and exhibit a number of decidable results
which, unfortunately, do not scale to systems made of an arbitrary
number of machines, even if the half-duplex restriction is maintained.
MSGs have potentially much better properties, since they are finitely
generated. For instance, it is possible to determine the maximum size of the buffers
that each MSC that composes an MSG has to use in order to execute it.
%
%
Such properties combined with the fact that the global semantics of
CFSMs/SDL specifications is much more difficult to understand than
that of MSGs, explain why it is very sensible to start with a MSG,
model-check its properties and then implement it as a set of CFSMs.
However, MSGs do not have robust closure properties as, say, regular
languages (the choice we made for our global types). As a consequence,
many variants of MSGs have been proposed in the literature to make
verification and projection effectively and efficiently implementable
(an extensive list of references can be found in~\cite{GM05} and a
more detailed comparison is given in~\cite{GMP03}). In particular if
one considers the restrictions we imposed on our global types, namely
that branching is controlled by one process (they are called
local-choice MSGs), then properties can be model-checked in polynomial
or tractable time (while in the general setting of MSGs many variants
of model-checking are undecidable~\cite{AY99,MPS98}). MSGs can also be
restricted to the class of \emph{regular} MSGs that have robust
properties and for which the implementability by deadlock-free CFSMs
is decidable. In this context however implementability means
generating \emph{the same} set of traces~\cite{Alur00,Alur01}. So we
are in the presence of quite a strict definition of implementability.
Other notions of implementability have been studied yielding different
decidability results (\eg,
see~\cite{Alur00,Alur01}): among these we can cite 
implementations allowed to produce messages not described by the MSG (\ie,
unfit implementations, in the terminology used in our introduction), or
the use of internal communications with messages on a distinct
alphabet to synchronize the system (we avoided this approach which
corresponds to using covert channels), or implementations
allowed to admit deadlocks. 
%
The reader can refer to~\cite{GMP03} for an
extensive survey. However we are not aware of weaker implementability
definitions such as the notions of soundness and completeness we
introduced here. These, besides being an original contribution of our
work, are also the main point that makes algorithmic projection
difficult.
There are some works, such as~\cite{BasuBultan11}, characterizing
classes of CFSMs for which it is possible to decide the conformance
with respect to a global specification (choreography).

\subsection{Cryptographic protocols}

Another domain in which much research on this topic has been done is
the verification of cryptographic protocols. In this context, protocol
narrations, which describe protocols in terms of conversations between
``roles'', must be matched against or implemented into a set of
specifications for the single roles. However the goals pursued in this
area are quite different from the one we outlined in the previous
section, which yields global specification languages with
characteristics different from the one considered by the automata
approach. A first important difference is the content of
messages. While in the automata based research the content of
communications is of lesser importance since it is usually drawn from
a finite set of messages, in the domain of cryptographic protocols
messages are defined by expressive languages that at least include
cryptographic primitives. Whereas message content is richer, the
communication pattern is somewhat simpler since security protocols are
always of finite length, which is why MSCs rather than MSGs are
used. However one has to be very precise about the way an agent
processes its messages (which parts of a message should be extracted
and checked by an agent and how an answer should be computed). This is
why MSCs are annotated or enriched with mechanisms that express the
internal actions to be performed by the agents. This gives raise to
different flavors of formalisms (Figure~\ref{kaochow} gives three
samples of such languages: for more examples and a list of references
see~\cite{CR10}).  These global specifications are then used to verify
security properties and, in some cases, to generate specifications for
the roles composing them. Local specifications are much finer-grained
and lower-level than those used in the automata approach. The details
of internal executions of each agent are exposed and precisely defined
since the overlook of small details may lead to dramatic flaws. This
explains why the palette of languages used to describe the local
behavior appears to be more variegated than in the previous area: the
pioneering work on compilation by Carlsen~\cite{Car94} compiles
protocol narrations into a modal logic of communication; the system
Casper produces CSP descriptions of protocols that are suitable to be
model-checked~\cite{casper} while CAPSL~\cite{MD02} and
CASRUL~\cite{casrl} translate global specifications of protocols, such
as those given in Figure~\ref{kaochow} (HLPSL is the protocol
specification language used by CASRUL), into rewriting systems;
in~\cite{Cal06} MSCs are interpreted into systems of pattern matching
spi-calculus processes~\cite{spi99,HJ04}.  Recent work has shown that
most of the annotation and extensions of MSCs aimed at describing
internal computations, can be computed automatically from the protocol
narration, and thus compile lightly annotated MSCs into an operational
semantics that describes the necessary internal actions~\cite{CR10}.
\begin{figure}
{\scriptsize
\centering
\begin{minipage}[t]{8.2cm}
\begin{alltt}
1 (spec ([c (a b s kas) (kab)]
2        [b (b s kbs) (kab)] [s (a b s kas kbs) ()])
3 [a -> s : a, b, na:nonce]
4 [s -> b : {|a, b, na, kab|} kas, {|a, b, na, kab|} kbs]
5 [b -> a : {|a, b, na, kab|} kas, {|na|} kab, nb:nonce]
6 [a -> b : {|nb|} kab] .)
\end{alltt}
\end{minipage}\bigskip

\begin{minipage}[t]{8.2cm}
\begin{alltt}
PROTOCOL KaoChow;
VARIABLES
	S : Server;
	A, B : Client;
	Na, Nb: Nonce;
	Kab: Skey, CRYPTO, FRESH;	
	F : Field;
	Kas,Kbs : Skey;
DENOTES
	Kas = csk(A): A;
	Kas = ssk(S,A): S;
	Kbs = csk(B): B;
	Kbs = ssk(S,B): S;
ASSUMPTIONS
	HOLDS A: B,S;
MESSAGES
	1. A -> S: A, B, Na;
	2. S -> B: S, {A, B, Na, Kab}Kas%F, {A, B, Na, Kab}Kbs;
	3. B -> A: B, F%{A, B, Na, Kab}Kas, {Na}Kab, Nb;
	4. A -> B: {Nb}Kab;		
GOALS
	SECRET Kab;
	PRECEDES A: B | Na; 
	PRECEDES B: A | Nb, Kab; 
END;
\end{alltt}  
\end{minipage}
\begin{minipage}[t]{5.3cm}
\begin{alltt}
Protocol KaoChow ;
Identifiers
  A,B,S : user ;
  Na,Nb : number;
  Kas,Kbs,Kab : symmetric_key;
knowledge
  A : S,B,Kas ;
  B : A, S, Kbs ;
  S : A, B, Kas, Kbs;
Messages
  1. A -> S : A,B,Na
  2. S -> B : {A,B,Na,Kab}Kas,{A,B,Na,Kab}Kbs
  3. B -> A : {A,B,Na,Kab}Kas,{Na}Kab,Nb
  4. A -> B : {Nb}Kab
Session_instances
[ A:a ; B:b ; S:se ; Kas:kas ; Kbs:kbs ];
Intruder divert , impersonate;
Intruder_knowledge a,b,se;
Goal Short_Term_secret Kab;
Goal B authenticate A on Nb;
\end{alltt}
\end{minipage}
}
\caption{Kao Chow protocol in WPPL, HLPSL and CAPSL (clockwise from top).\label{kaochow}
}
\end{figure}

The degree of detail about local behavior present both in global and
local specification languages is not the only difference with the
previous automata based approach. The other fundamental difference is
the dynamism of the scenarios that both compilation and analysis must
account for. Each role is not necessarily implemented by a single
agent or process but the concurrent presence of several agents that
interpret the same role must be allowed in the system. The system may
include intruder agents that are not described by the global
specification and that may interfere with it; in particular, they may
intercept, read, destroy and forge messages and, more generally,
change the topology of the communications. Furthermore different
executions of the protocol may be not independent as attackers can
store and detour information in one execution to use it in a later
one.

In this context the works closest to our approach are~\cite{McK08}
and~\cite{BCDFL09}.  McCarthy and Krishnamurthi~\cite{McK08} describe
WPPL, a global description language which besides the basic
communication action of MSCs provides actions for role definition and
trust management. WPPL specifications are then projected in local
behaviors defined in CPPL, a domain specific language that describes
cryptographic protocol roles with trust annotations. In their work
they give a nice comparison of their approach with the one used in Web
services that we describe next. In particular, cryptography introduces
information asymmetries (\eg, because of the presence of an intruder
the message received by a role may be different from the one that was
sent to it, or a encrypted message can be received only if the partner
has the corresponding key) that are not handled by existing end-point
projection systems. In a nutshell, in Web services global description
formalisms as well as in the automata approach the focus is on
communication patterns and the communication content is neglected,
while in the realm of cryptographic protocols it is the combination of
the two that really matters.

Bhargavan et al. describe in~\cite{BCDFL09} a compiler from
high-level multi-party session descriptions to custom cryptographic
protocols coded as ML modules. In the generated code each
participant has strong security guarantees for all her/his messages
against any adversary that may control both the network and some
participants to the session.

\subsection{Web services}

\else

In what follows we compare the ``behavioral types/process algebra'' approach we
adopted, with two alternative approaches studied by important
communities with a large amount of different and important
contributions, namely the ``automata'' and ``cryptographic protocols''
approaches.  In the full version of this article the reader will find
a deeper survey of these two approaches along with a more complete
comparison.
In a nutshell, the ``automata/model checking'' community has probably
done the most extensive research on the problem. The paradigmatic
global descriptions language they usually refer to are \emph{Message
  Sequence Charts} (MSC, ITU Z.120 standard) enriched with branching
and iteration (which are then called \emph{Message Sequence Graphs} or, as in the Z.120 standard, \emph{High-Level Message Sequence Charts}) and they
are usually projected into \emph{Communicating Finite State Machines}
(CFSM) which form the theoretical core of the \emph{Specification and
  Description Language} (SDL ITU Z.100 standard). This community has
investigated the expressive power of the two formalisms and their
properties, studied different notions of implementability (but not the
notion we studied here which, as far as we know, is original to our work), and several
variants of these formalisms especially to deal with the decidability
or tractability of the verification of properties, in particular
model-checking. The community that works on the formal verification of
cryptographic protocols uses MSC as global descriptions, as well,
though they are of different nature from the previous ones. In
particular, for cryptographic protocols much less emphasis is put on
control (branching and iteration have a secondary role) and
expressivity, while much more effort is devoted to the description of
the messages (these include cryptographic primitives, at least), of
properties of the participants, and of local treatment of the
messages. The global descriptions are then projected into local
descriptions that are more detailed than in the automata approach
since they precisely track single values and the manipulations
thereof. The verification of properties is finer grained and covers
execution scenari that fall outside the global description since the
roles described in the global type can be concurrently played by
different participants, messages can be intercepted, read, destroyed,
and forged and, generally, the communication topology may be
changed. Furthermore different executions of the protocol may be not
independent as attackers can store and detour information in one
execution to use it in a later execution.

\fi

Our work springs from the research done to formally describe and
verify compositions of Web services. This research has mainly
centered on using process algebras to describe and verify  visible local
behavior of services and just recently (all the references date of the
last five years) has started to consider global
\emph{choreographic} descriptions of multiple services and the problem
of their projection. This yielded
the three layered structure depicted in Figure~\ref{globtypes} (courtesy of P.-M.\ Deni\'elou) where a global type describing the choreography is projected into a set of session types that are then used to type-check the processes that implement it (as well as guide their implementation).  
\begin{figure}[t]
\newcommand{\MakeCircle}[2][]{
  \tikz[baseline, remember picture]{
    \node (#1) [draw,circle,minimum size=4em,semithick] {$#2$};
  }
}

\newcommand{\MakeProcess}[2][]{
  \tikz[baseline, remember picture]{
    \node (#1) [draw,minimum width=4em,semithick] {$#2$};
  }
}

\[
\hspace*{12mm}\begin{array}{@{}ccc@{~~~}c@{~~~}rcl@{}}
  & \tikz[baseline,remember picture]{
    \node (gnode) [draw,minimum size=2em,semithick] {$\gtype$};
  } & & \textbf{Global Type} & \gtype &=&
  \begin{array}[t]{l}
  \act{alice}{\color{beppeblue}nat}{bob}
  \gseq\\
  \act{bob}{\color{beppeblue}nat}{carol}
  \end{array}
  \\\\\\
\MakeCircle[t1]{T_{\texttt{alice}}}
& 
\MakeCircle[t2]{T_{\texttt{bob}}}
&
\MakeCircle[t3]{T_{\texttt{carol}}}
&
\textbf{Session Types}
&  T_{\texttt{bob}}&=& 
   \begin{array}[t]{l}
     \In{\texttt{alice}}{\textit{\color{beppeblue}nat}}.\\
     \Out{\texttt{carol}}{\textit{\color{beppeblue}nat}}.\\
     \End
   \end{array}
\\\\\\
\MakeProcess[p1]{P_{\texttt{alice}}}
&
\MakeProcess[p2]{P_{\texttt{bob}}}
&
\MakeProcess[p3]{P_{\texttt{carol}}}
&
\textbf{Processes}&P_{\texttt{bob}}&=&
                          \texttt{receive {\color{lucared} x} from alice;}\\[-2mm]
                      &&&&&&\texttt{send {\color{lucared} x+42} to carol;}\\[-.6mm] 
                      &&&&&&\texttt{end}\\[-5mm] 
\end{array}
\]
\begin{tikzpicture}[remember picture,overlay]
  \draw[->,semithick] (gnode) to node[auto,swap] {Projection} (t1);
  \draw[->,semithick] (gnode) -- (t2);
  \draw[->,semithick] (gnode) -- (t3);
  \draw[<->,semithick] (t1) to node[auto,swap] {Type checking} (p1);
  \draw[<->,semithick] (t2) -- (p2);
  \draw[<->,semithick] (t3) -- (p3);
\end{tikzpicture}
\caption{Global types and multi-party sessions in a nutshell.\label{globtypes}}
\end{figure}
The study thus focuses on defining the relation between the different
layers. Implementability is the relation between the first and second
layer. Here the important properties are that projection
produces systems that are sound and complete with respect to the global
description (in the sense stated by Theorem~\ref{scT}) and
deadlock free (\eg, we rule out specifications such as
$\gaction{\role}{\varrole}{\val}\gor\gaction{\role}{\roleR}{\val}$ when it has no continuation,
since whatever the choice either $\varrole$ or $\roleR$ will be stuck). Typeability is the relation between the
second and third layer. Here the important properties are subject
reduction (well-typed processes reduce only to well-typed processes)
and progress (which in this context implies deadlock freedom). 

Although in this work we disregarded the lower layer of processes, it
is nevertheless an essential component of this research. In
particular, it explains the nature of the messages that characterize
this approach, 
which are \emph{types}. One of the principal aims of this research,
thus, is to find the right level of abstraction that must be expressed
by types and session types. Consider again Figure~\ref{globtypes}. The
process layer clearly shows the relation between the message received
by \texttt{bob} and the one it sends to \texttt{carol}, but this
relation (actually, any relation) is abstracted away both in the
session and the global type layers. The level of abstraction is
greater than that of cryptographic protocols since values are not
tracked by global descriptions. Although tracking of values could be
partially recovered by resorting to singleton types, there is a
particular class of values that deserves special care and whose
handling is one of the main future challenges of this research, that
is, \emph{channels}.  The goal is to include higher order types in
global specifications thus enabling the transmission of session
channels and therefore the reification of dynamic reconfiguration of
session topology. We thus aim at defining reconfiguration in the
specification itself, as opposed to the case of cryptographic
protocols where the reconfiguration of the communication topology is
considered at meta-level for verification purposes. As a matter of
fact, this feature has already been studied in the literature. For
instance, the extension of {\sc ws-cdl}~\cite{WSCDL} with channel
passing is studied in~\cite{CZ08} (as the automata approach has the
MSC as their reference standard, so the Web service community refers
to the {\sc ws-cdl} standard whose implementability has been studied
in~\cite{QZ07}); the paper that first introduced a global calculus
for session types~\cite{carbone.honda.yoshida:esop07} explicitly
mentions channels in messages that can be sent to other participants
to open new sessions on them.
In our opinion the existing works on session types are deeply
syntactic in nature, in the sense that the operators in global types
have been conceived as syntactic adaptations of the corresponding ones
in session types. As a consequence, these operators do not always have
a clear semantic justification. Here we preferred to take a step back
and to start by defining global descriptions whose restrictions are
semantically justified. So we favored a less rich language with few
semantically justified features and leave the addition of more
advanced features for a later time.

Coming back to the comparison of the three approaches, the Web
service-oriented approach shares several features in common with the
other two. As for the automata approach we (in the sense of the Web
service community) focus on the expressiveness of the control, the
possibility of branching and iteration, and the effective
implementability into deadlock-free local descriptions. However the
tendency for Web services is to impose syntactic restrictions from the
beginning rather than study the general case and then devise
appropriate restrictions with the sought properties (in this respect
our work and those of Bravetti, Zavattaro and
Lanese~\cite{BZ07,BZ08,BLZ08} are few exceptions in the panorama of
the Web service approach).  Commonalities with the cryptographic
protocol approach are more technical. In particular we share the
dynamism of the communication topology (with the caveat about whether
this dynamism is performed at the linguistic or meta-linguistic level)
and the robustness with respect to reconfiguration (the projected
session types should ensure that well-typed process will be deadlock
free even in the presence of multiple interleaved sessions and session
delegation, though few works actually enforce this
property~\cite{BCDDDY08,DLY07}). As for cryptographic protocols, this
dynamism is also accounted at level of participants since recent work
in session types studies global descriptions of roles that can then be
implemented by several different agents~\cite{DY11}.
Finally, we take into account the internal behavior of processes
(similarly to what happens for cryptographic protocols) without giving
a precise specification of it but using precise enough (session) types
to prevent any possible internal behavior to disrupt the properties of
systems.\Mariangiola{ho cancellato: and encodable
multi-receivers}
There are also some characteristics that are specific to our approach
such as the exploration of new linguistic features (for instance in
this work we introduced actions with multi-senders)  and a pervasive use of compositional deduction
systems that we inherit from type theory.  We conclude this section
with a more in-depth description of the main references in this
specific area so as to give a more detailed comparison with our work.

\subsubsection{Multi-party global types.}\label{mpst}


Global types were introduced in
\cite{CHY08}  for multi-party sessions, while \cite{carbone.honda.yoshida:esop07} describes a global calculus for dyadic sessions. 
Channels are present in 
both
\cite{carbone.honda.yoshida:esop07} and \cite{CHY08}.  However the
language of 
\cite{carbone.honda.yoshida:esop07}
includes control structures and messages of complex form, since it was
intended to be an executable language to describe Web-service
interactions and, as such, it is directly projected into a language of
processes. Thus it lacks the intermediate layer of
Figure~\ref{globtypes} which is bypassed by providing a more concrete
upper layer.  The three-layered structure of Figure~\ref{globtypes}
faithfully describes the work in~\cite{CHY08} which, nevertheless,
presents several differences with the work presented here. In the
syntax of our work, the global types of~\cite{CHY08} can essentially
be described by the following grammar:
\[
\begin{array}{rcl@{\qquad}l}
  \gtype & ~~::=~~ & \End & \text{(end)} \\
            & \mid &
            \gaction{\role}{\varrole}{k\langle\val\rangle}.\gtype &
            \text{(interaction)} \\
            & \mid & \gtype \gor \gtype & \text{(branching)} \\
            & \mid & \gtype\gand\gtype & \text{(parallel)} \\
            & \mid & X & \text{(variable)} \\
            & \mid & \mu X.\gtype & \text{(recursion)} \\
\end{array}
\]
In a nutshell, sequencing is replaced by prefix actions (terminated by
``\End''), labels are decorated by channels (ranged over by $k$), and
general $\mu$-recursive definitions replace the (less expressive)
Kleene star. Session types (called ``local types'' in~\cite{CHY08})
are even more similar to those presented here, the only difference
being that input/output actions, which have the form $k{?}a.T$ and
$k{!}a.T$, specify channel names rather than participant names.

While the syntactic differences are minimal, it is not so for 
semantic ones. A first important difference is that the global types
of~\cite{CHY08} must satisfy several restrictions:
\begin{enumerate}[(1)]
\item The set of participants of two global types composed in parallel
  must be disjoint. While this restriction clearly simplifies the
  algorithmic projection (the projection of of $\gtype_1\gand\gtype_2$
  reduces to the projection of $\gtype_1\gseq\gtype_2$, \emph{cf.}
  Section~\ref{gts}), it rules out simple protocols such as (\ref{spec1}),
  the very first we presented in this work.

\item The first actions of global types composed by branching must
  specify the same channel, the same sender, the same receiver, and
  distinct messages (actually, labels). Furthermore every participant
  that is neither the first sender nor the first receiver must behave
  the same in all branches.  The use of the same channel and, to a
  lesser extent, of the same senders and receivers for branching is a
  consequence of having adopted the original syntax of labeled
  branching used in the session types
  of~\cite{honda.vasconcelos.kubo:language-primitives}. This first
  restriction forces the adoption of the second one: since session
  type communication specifies channels rather than participants, and
  since the channel is the same in all branches, then the only way for
  the (unique) receiver to distinguish the branches is to receive
  distinct messages on each of them. These restrictions, of syntactic
  origin, are more constraining than ours which just require the
  presence of a single ``decision maker''. The restriction for
  ``passive'' participants to have the same behavior in all branches
  is a quite coarse condition to enforce what in our system is called
  ``mergeability'' (a similar notion of merge was already introduced
  in~\cite{YDBH10,DY11}).
\end{enumerate}

The syntax of global types in~\cite{CHY08} is more constraining than
ours and the semantics of sequential composition is weaker. For
example, two interactions like
$\gaction{\role}{\varrole}{k\langle\val\rangle}.\gaction{\roleR}{\roleS}{k'\langle
  b\rangle}.\End$ are required to happen in the same order as they
occur in the global types only if $k$ and $k'$ are the same
channel. Thus if $k\not=k'$ the participants $\roleP$, $\roleQ$,
$\roleR$, and $\roleS$ can be unrelated. The reason of such a choice
is, once more, due to the fact that global types are designed in
function of the session types as defined
by~\cite{honda.vasconcelos.kubo:language-primitives} where different
channels are typed independently and, thus, sequentiality constraints
can be enforced only between communications on a same channel. It is
interesting to notice that the situation is somehow dual to the one
presented here. While we demand the sequentiality of
\quotesymbol\gseq{} be strictly enforced, we accept any order on
actions composed in parallel by a \quotesymbol\gand. In~\cite{CHY08}
instead, while actions composed in parallel are forced to be
independent (by demanding disjoint participants), any order of the
``sequential'' composition is accepted as long as it happens on
distinct channels.

%
In order to appreciate the usage of global types of~\cite{CHY08} and
their projection let us revisit the paradigmatic example given
in~\cite{CHY08}, according to which two buyers, \ba\ and \bb, wish to
collaborate to buy an item from a seller \sel: \ba\ asks the item to
\sel, which sends a price to both buyers; \ba\ communicates to \bb\
its participation and \bb\ decides either to quit (by sending
\lh{quit} to \sel) or to accept the price by communicating
\lh {ok} and the delivery address to the seller, and expecting a
delivery date. This compound protocol is expressed as the following
global type:
\begin{equation}\label{popl}
\begin{array}{l}
\xact{buyer1}{h\langle\lh {string}\rangle}{seller}.\\
\xact{seller}{k\langle\lh {int}\rangle}{buyer1}.\\
\xact{seller}{k'\langle\lh {int}\rangle}{buyer2}.\\
\xact{buyer1}{l\langle\lh {int}\rangle}{buyer2}.\\
\textcolor{white}{\gor}(\xact{buyer2}{h\langle\lh {quit}\rangle}{seller}.\End)\\
\gor(\xact{buyer2}{h\langle\lh {ok}\rangle}{seller}.\xact{buyer2}{h\langle\lh {string}\rangle}{seller}.\xact{seller}{k'\langle\lh {date}\rangle}{buyer2}.\End)\hspace*{-3mm}
\end{array}
\end{equation}
Notice that in the final branching of the protocol each action
starting a branch is a communication from \bb\ to \sel\ on the same
channel $h$ of two different labels \lh {ok} and \lh {quit}
(strictly speaking, two singleton types whose only value is,
respectively, \lh {ok} and \lh {quit}).  As expected the above
global type is projected into
$$
\begin{array}{rcl}
\sel & \mapsto & h?\lh {string}.k!\lh {int}.k'!\lh {int}.(h?\lh {quit}.\End + h?\lh {ok}.h?\lh {string}.k'!\lh {date}.\End)\\
\ba  & \mapsto & h!\lh {string}.k?\lh {int}.l!\lh {int}.\End\\
\bb  & \mapsto & k'?\lh {int}.l?\lh {int}.(h!\lh {quit}.\End + h!\lh {ok}.h!\lh {string}.k'?\lh {date}.\End)
\end{array}
$$
Notice how participants are replaced by channels. In particular this
implies that \bb\ can distinguish the receptions from \sel\ and \ba\
because they happen on distinct channels.
%
Thus, in a sense, explicit channels play the same role of explicit
participants in session types, except that the presence of channels
makes global type analysis more difficult. This explains why such a
feature has been abandoned in~\cite{DY11} (the latest follow up of the
multi-party sessions work) where global types no longer specify
channels and session types use participants instead of channels (see
later on).

We said that~\cite{CHY08} enforces sequentiality only on a per channel
basis. Concretely, this means that for every projection the
interactions on $h$ in the first and fifth or sixth lines of the
protocol in (\ref{popl}) must happen in the same relative order as
they appear in the global types, and the same must hold for
interactions on $k'$ in the third and sixth lines. A rough way to
ensure this property would be to prune all actions that are not on a
given channel and then impose a well-formedness condition akin to the
one we introduced in Definition~\ref{def:wf}. In~\cite{CHY08} much a
finer-grained technique is used: it performs a global analysis of the
dependency relation of a global type and ensures sequentiality on a
given channel by exploiting synchronization information on
interactions occurring also on different channels.
In~\cite{carbone.honda.yoshida:esop07} a stricter condition (dubbed
``well-threadedness'') is described for dyadic sessions, and it
enforces a sequentiality condition similar to our well-formedness.

Finally, we already saw that messages in the global types
of~\cite{CHY08} can be either types (to describe value of the
communication) or labels (to perform branching), but they can also be
channels such as in
$$\cdots{}\,.\act{buyer1}{$l\langle k\rangle$}{buyer2}.\,\cdots,$$
which allows global types to describe \emph{delegation}. Delegation
was introduced in~\cite{CHY08} for multi-party sessions and is
directly inherited from the homonym feature of dyadic
sessions~\cite{honda.vasconcelos.kubo:language-primitives}. A
participant can delegate another agent to play his role in a session
in a way that is transparent for all the remaining participants of the
session. In the example above \ba\ delegates to \bb\ the task to
continue the conversation with \sel\ on $k$.  By allowing higher-order
channels, the concrete topology of communications may dynamically
evolve. To ensure projectability in the presence of such a feature,
further restrictions are required~\cite{CHY08}.

If we focus on semantically justified restrictions, the presence of
channels requires types to be ``well-threaded'' (to avoid that the use
of different channels disrupts the sequentiality constraints of the
specification) and message structures to be used ``coherently'' in
different threads (to assure that a fixed server offers the same
services to different clients), as discussed
in~\cite{carbone.honda.yoshida:esop07}.  We did not include such
features in our treatment since we wanted to study the problems of
sequentiality (which yielded Definition~\ref{def:wf} of well-formed
global type) and of coherence (which is embodied by the subsession
relation whose algorithmic counterpart is the merge operator) in the
simplest possible setting (a single multi-party session) without
further complexity induced by extra features.  As a consequence of
this choice, our merge between session types is a generalization of
the merge in~\cite{YDBH10,DY11} since we allow inputs from different
senders (this is the reason why our compatibility is more demanding
than the corresponding notion in~\cite{YDBH10}). Since our framework
does not include channels, we naturally disregarded any issue arising
from delegation.

Our crusade for simplification did not restrict itself to exclude
features that seemed inessential or too syntax dependent, but it also
used simpler forms of existing constructs. In particular an important
design choice was to use Kleene star instead of more expressive
recursive global types used
in~\cite{
CHY08,DY11}.
As an example, the global type describing an arbitrary long
interaction between participants $\roleP$ and $\roleQ$ that $\roleP$
may terminate at any time can be described as
\[
  (\gaction\roleP\roleQ\valA)\gstar \gseq \gaction\roleP\roleQ\valB
\]
in our calculus and as
\[
  \mu X.(\gaction\roleP\roleQ{k\langle\valA\rangle}.X
         \gor
         \gaction\roleP\roleQ{k\langle\valB\rangle}.\End)
\]
in~\cite{CHY08}.
The main advantage of the star over recursion is that it gives us a
fair implementation of the projected specification almost for
free. Fairness seems to us an important ---though mostly neglected by current
literature--- requirement for (multi-party) sessions.
In particular, it allows us to develop a theory where multi-party
sessions preserve a stronger liveness property, namely the potential
to successfully terminate (termination under fairness assumption). A
direct consequence of our choice is that we are capable of projecting
global types where the progress of some participants crucially relies
on the eventual termination of arbitrarily long interactions involving
other participants. For example, the global type
\[
  (\gaction\roleP\roleQ\valA)\gstar \gseq \gaction\roleP\roleQ\valB
  \gseq
  \gaction\roleQ\roleR\valC
\]
is projectable in our theory but its correspondent 
\[
  \mu X.(\gaction\roleP\roleQ\valA.X \gor
  \gaction\roleP\roleQ\valB.\gaction\roleQ\roleR\valC.\End)
\]
is not in~\cite{CHY08}. The point is that participant $\roleR$ is
waiting for a $\valC$ message that will be sent only if $\roleP$ stops
sending $\valA$ messages to $\roleQ$. This is guaranteed in our theory
but not in~\cite{CHY08} where, in principle, $\roleP$ may send $\valA$
messages to $\roleQ$ forever.


In general recursion is more expressive than iteration.  For example,
we cannot express non-terminating interactions such as $\mu
X.\gaction\roleP\roleQ\valA.X$. In the present work we regard this
global type as wrong and take the point of view that a session
eventually terminates, although there can be no upper bound to its
duration. Recursion is more flexible when it comes to specifying
iterations with multiple exit paths. For example, the global type
\[
  \mu X.(\gaction\role\varrole\handoverT.
        (\gaction\varrole\role\handoverT.X
        \gor
        \gaction\varrole\role\bailoutT.\End
        )
        \gor
        \gaction\role\varrole\bailoutT.\End)
\]
is a straightforward modeling of the global type that requires
$2$-exit iteration to be projected in our framework (Section
\ref{sec:multistar}).

The exploration of a whole palette of different paradigms for global
and local types and of variations thereof is another element that
distinguishes the research done in the Web service communities from
that in other communities. In particular, the Web service community
does not hesitate to borrow features from other communities and, in
this respect, a remarkable work is the one on dynamic multirole
session types by Deni\'elou and Yoshida~\cite{DY11}. Consider again the 
very first example (\ref{spec1}) of the introduction. It consists of
just a single seller and a single buyer. While it seems reasonable to
describe the protocol for a particular seller, it is restrictive to
think that it will handle just one buyer at the time. The idea is that
the seller will interact with a variable number of buyers, all
implementing the same protocol, that will dynamically join and leave
the session. Mutatis mutandis, Deni\'elou and Yoshida propose to
describe the protocol as follows:
\begin{equation}\label{spec11}
\begin{array}{@{}ll}
\forall x:\buy.&(\act{\sell}{\textit{descr}}{$x$}\gand\act{\sell}{\textit{price}}{$x$})\gseq\\
&(\act{$x$}{\textit{accept}}{\sell}\gor\act{$x$}{\textit{quit}}{\sell})
\end{array}\qquad\qquad
\end{equation}
Here \buy\ no longer denotes a single \emph{participant} but rather a
\emph{role} that can be played by different participants (or
processes) ranged over by $x$. The notion of role is extensively used
in the research on the verification of cryptographic protocols,
especially at a meta-linguistic level. Remarkably, Deni\'elou and
Yoshida have internalized it, making it possible to precisely express
the multi-role aspects of an interaction protocol both in global and
in local types. Indeed, the possible projections of the global type
above are:
\[
\begin{array}{rcl}
\sell &\mapsto & \forall x:\buy.\Out{x}{\textit{descr}}.\Out{x}{\textit{price}}.(\In{x}{\textit{accept}}\extsum\In{x}{\textit{quit}})\\
\buy  &\mapsto & \In{\sell}{\textit{descr}}.\In{\sell}{\textit{price}}.(\Out{\sell}{\textit{accept}}\intsum\Out{\sell}{\textit{quit}}) 
\end{array}
\]
and
\[
\begin{array}{rcl}
\sell &\mapsto & \forall x:\buy.\Out{x}{\textit{price}}.\Out{x}{\textit{descr}}.(\In{x}{\textit{accept}}\extsum\In{x}{\textit{quit}})\\
\buy  &\mapsto & \In{\sell}{\textit{price}}.\In{\sell}{\textit{descr}}.(\Out{\sell}{\textit{accept}}\intsum\Out{\sell}{\textit{quit}}) 
\end{array}
\]
Note that session types use participants instead of channels (global
types such as (\ref{spec11}) no longer specify channels). This yields
projections that, apart from the quantifications in \sel, are the same
as those we gave in the introduction for example (\ref{spec1}).
Deni\'elou and Yoshida develop a theory that ensures communication
safety (received messages are of the expected type) and progress
(communications do not get stuck) of sessions in the presence of
dynamically joining and leaving participants.

Finally, although we aimed at simplifying as much as possible, we
still imposed a few restrictions that seemed unavoidable. Foremost, the
sequentiality condition of Section~\ref{sec:projection}: 
any two actions that are bound by a semicolon must always appear in
the same order in all traces of (sound and complete) implementations.
Surprisingly, in all current literature of multi-party session types
we are aware of, just one work~\cite{carbone.honda.yoshida:esop07}
enforces the sequential semantics of
\quotesymbol\gseq. In~\cite{carbone.honda.yoshida:esop07} the
sequentiality condition, called \emph{connectedness}, is introduced
(albeit in a simplified setting since---as
in~\cite{honda.vasconcelos.kubo:language-primitives,CHY08}--- instead
of sequential composition the authors consider the simpler case of
prefixed actions) and identified as one of three basic principles for
global descriptions under which a sound and complete implementation
can be defined.  All other (even later) works admit to project, say,
$\gaction{\varrole}{\role}{\val}\gseq\gaction{\roleR}{\role}{\val}$ in
implementations in which $\role$ receives from $\roleR$ before having
received from $\varrole$.  While the technical interest of relaxing
the sequentiality constraint in the interpretation of the
\quotesymbol\gseq{} operator is clear ---it greatly simplifies
projectability--- we really cannot see any semantically plausible
reason to do it.

Our simpler setting allows us to give a semantic justification of the
formalism and of the restrictions and the operators we introduced in
it. For these reasons many restrictions that are present in other
formalisms are pointless in our framework. For instance, two global
types whose actions can be interleaved in an arbitrary way (\ie,
composed by \quotesymbol\gand{} in our calculus) can share common
participants in our global types, while in \cite{CHY08} (which use the
parallel operator for \quotesymbol\gand) this is forbidden. So these
works fail to project (actually, they reject) protocols as simple as
the first line of the example given in the specification (\ref{spec1})
in the introduction.  Likewise we can have different receiver
participants in a choice like, for example, the case in which two
cooperating buyers wait for a price from a given seller:
 \[\act{seller}{price}{buyer1}\gseq\act{buyer1}{price}{buyer2}\gor\act{seller}{price}{buyer2}\gseq\act{buyer2}{price}{buyer1}\]
while such a situation is forbidden in~\cite{CHY08}.

Another situation possible in our setting but forbidden
in~\cite{CHY08,DY11}
is to have different sets of participants for alternatives, such as in
the following case where a buyer is notified about a price by the
broker or directly by the seller, but in both cases gives an answer to
the broker:
 \begin{equation}\label{spec12}\begin{array}{l}
(~~\act{seller\!}{agency}{\!broker}\gseq\act{broker\!}{price}{\!buyer}
\;\gor\act{seller\!}{price}{\!buyer})\gseq\\
\act{buyer\!}{answer}{broker}
\end{array}
\end{equation}
A similar situation may arise when choosing between repeating or
exiting a loop:
\begin{equation}\label{spec13}\begin{array}{l}
\act{seller}{agency}{broker};(\act{broker}{offer}{buyer};\act{buyer}{counteroffer}{broker})\gstar;\\
(\act{broker}{result}{seller}\gand\act{broker}{result}{buyer})
\end{array}
\end{equation}
which is again forbidden
in~\cite{CHY08,DY11}.
Note that the interaction following {\quotesymbol\gseq}
in~\eqref{spec12} can be distributed on the two branches, yielding the 
global type
\[
\begin{aligned}
& \act{seller}{agency}{broker}\gseq\act{broker}{price}{buyer}
\gseq\act{buyer}{answer}{broker}
\\
\gor~
& \act{seller}{price}{buyer}\gseq\act{buyer}{answer}{broker}
\end{aligned}
\]
where the two branches involve exactly the same set of participants.
This form is compatible with respect to the notion of projection
in~\cite{CHY08,DY11}. However, the same transformation is not possible
for~\eqref{spec13} because in this case projectability relies on the
fairness assumption. Indeed while we can consider a Kleene star as an
infinite union of finite branches and thus, semantically, add the
continuation to each of these branches, the finiteness of each branch
is guaranteed in our framework but not in~\cite{CHY08,DY11}.

\subsubsection{Choreographies.}

Global types can be seen as choreographies~\cite{WSCDL} describing the
interaction of some distributed processes connected through a private
multi-party session.  Therefore, there is a close relationship between
our work and those by Zavattaro and his
colleagues~\cite{BZ07,LGMZ08,BZ08,BLZ08}, which concern the projection
of choreographies into the contracts of their participants.
The choreography language in these works coincides with our language
of global types (including the use of iteration instead of
recursion). Basically, the only difference at syntactic level is that
interactions have the form $\valA_{\roleP\to\roleQ}$ instead of
$\gaction\roleP\roleQ\valA$.
Just like in our case, a choreography is correct if it preserves the
possibility to reach a state where all of the involved Web services
have successfully terminated.
There are some relevant differences though, starting from
choreographic interactions that invariably involve exactly one sender
and one receiver, while in the present work we allow for multiple
senders.
%
Other differences concern the communication model and the projection
procedure. In particular, the communication model is synchronous in
\cite{BZ07}, based on FIFO buffers associated with each participant of
a choreography in~\cite{BZ08}, and partially asynchronous
in~\cite{BLZ08} (output actions can fire, and thus drive the choice of
an internal choice, also in the absence of a dual active receiving
action, but their continuation is blocked until the message is
consumed by the receiver).
Our model (Section~\ref{sec:sessions}) closely follows the ones
adopted for multi-party sessions, where there is a single buffer and
we consider the possibility for a receiver to specify the participant
from which a message is expected.
In~\cite{BZ07,LGMZ08,BZ08,BLZ08} the projection procedure is basically
an homomorphism from choreographies to the behavior of their
participants, which is described by a contract language equipped with
parallel composition, while our session types are purely sequential.
\cite{BZ07,BZ08} give no conditions to establish which choreographies
produce correct projections. In contrast, \cite{BLZ08,LGMZ08} define
three \emph{connectedness conditions} that guarantee correctness of
the projection for various (synchronous and asynchronous)
semantics. The interesting aspect is that these conditions are solely
stated on the syntax of the choreography, while we need the
combination of projectability (Table~\ref{tab:semantic_projection})
and well-formedness (Definition~\ref{def:wf}).
Depending on the communication semantics, which can be synchronous or
asynchronous in~\cite{BLZ08,LGMZ08}, the connectedness conditions may
impose different constraints if compared to our well-formedness.  For
example, the choreography
\[
\gaction\roleP\roleQ\valA;\gaction\roleR\roleP\valB
\]
is connected for sequence according to~\cite{BLZ08} but is not well
formed according to Definition~\ref{def:wf}.  This is a consequence of
the different communication models adopted in~\cite{BLZ08} and in the
present work. In~\cite{BLZ08} it is not possible for $\roleP$ to
receive the $\valB$ message from $\roleR$ before $\roleQ$ has received
the $\valA$ message from $\roleP$ because $\roleP$ will block on the
output of $\valA$ until $\roleQ$ receives the message. In our model,
output messages are inserted within the buffer associated with the
session, so the sender can immediately proceed. This corresponds to
the \emph{receiver semantics} in~\cite{LGMZ08}.

The connectedness conditions for alternative choreographies
in~\cite{BLZ08,LGMZ08} impose stricter constraints since they require
that the roles in both branches be the same. Therefore, the two global
types involving the \texttt{broker} participant described by
examples~(\ref{spec12}) and~(\ref{spec13}) are not connected.
Additionally, the fact that these conditions are stated by looking at
the syntax of choreographies may discriminate between equivalent
choreographies.
For example, the choreographies
\[
(\gaction\roleP\roleQ\valA \gand \gaction\roleR\roleS\valA)
\gor
(\gaction\roleP\roleQ\valA \gand \gaction\roleR\roleS\valB)
\text{\quad and\quad}
\gaction\roleP\roleQ\valA \gand (\gaction\roleR\roleS\valA \gor \gaction\roleR\roleS\valB)
\]
are equivalent (they generate the same set of traces), but only the
second one is connected. In the first one, the fact that both branches
emit actions where the sender can be either $\roleP$ or $\roleR$ seems
to suggest the absence of a decision maker, while in fact there is one
($\roleR$). Our definition of well-formedness, being
based on the set of traces generated by a global type rather than its
syntax, does not distinguish between the two choreographies.
As we have shown, a careful projection procedure does not need these
requirements for the projection to respect the choreography.

In~\cite{BZ07} the projection of choreographies with iteration is
taken into account and in~\cite{LGMZ08} it is argued that the
connected conditions scale without problems to this more general
scenario. The authors do not address the limited expressiveness of
single-exit iterations. For example, the first global type at the
beginning of Section~\ref{sec:multistar} yields a deadlocking
projection also for~\cite{BZ07}. Given the similarities between
choreographies and global types it is reasonable to expect that the
adoption of $k$-exit iterations might resolve the issue in their
setting as well.

While discussing MSGs we argued that requiring the specification and
its projection produce the same set of traces (called \emph{standard
  implementation} in~\cite{GMP03}) seemed overly constraining and
advocated a more flexible solution such as the definitions of
soundness and completeness introduced in the present work.
Interestingly, Bravetti, Lanese and Zavattaro~\cite{BLZ08} take the
opposite viewpoint, and make this relation even stricter by
requiring the relation between a choreography and its projection to be
a strong bisimulation.

The problem of analyzing choreographies and characterizing their
properties has been addressed also by the community studying
multiagent systems. In particular, Baldoni \emph{et
  al}.~\cite{BBCDPS09} propose a notion of interoperable choreography
which basically coincides with our notion of liveness: the interaction
between the parties must preserve the ability to reach a state in
which every party has successfully completed its
task. Interoperability induces a notion of conformance between parties
that is similar to our implementation pre-order and to other
refinement relations. The main difference with respect to our work and
those cited above is that in~\cite{BBCDPS09} a choreography is
directly represented as the composition of its participants and their
behavior is described by means of finite-state automata rather than
terms of a process algebra. It appears that the techniques of
choreography projection described in the present paper can be easily
adapted to the context of~\cite{BBCDPS09} and that multiagent systems
might provide an additional playground to further explore and validate
the whole approach.

\subsubsection{Other calculi.}

In this brief overview we focused on works that study the relation
between global specifications and local machine-oriented
implementations. However in the literature there is an important
effort to devise new description paradigms for either global
descriptions or local descriptions. In the latter category we wish to
cite \cite{honda.vasconcelos.kubo:language-primitives,BBDL08},
while~\cite{CP09} seems a natural candidate in which to project an
eventual higher order extension of our global types. For what concerns
global descriptions, the Conversation Calculus \cite{CV09} stands out
for the originality of its approach.


\section{Conclusion}
\label{sec:conclusion}

We think that the design-by-contract approach advocated
in~\cite{carbone.honda.yoshida:esop07,CHY08} and expanded in later works is a very reasonable way
to implement distributed systems that are correct by construction.
In this work we have presented a theory of global types in an
attempt of better understanding their properties and their
relationship with multi-party session types. We summarize the results
of our investigations in the remaining few lines.
First of all, we have defined a proper algebra of global types whose
operators have a clear meaning. In particular, we distinguish between
sequential composition, which models a strictly sequential execution
of interactions, and unconstrained composition, which allows the
designer to underspecify the order of possibly dependent interactions. 
The semantics of global types is expressed in terms of regular
languages.  Aside from providing an accessible intuition on the
behavior of the system being specified, the most significant
consequence is to induce a \emph{fair} theory of multi-party session
types where correct sessions preserve the ability to reach a state in
which all the participants have successfully terminated. This property
is stronger than the usual progress property within the same session
that is guaranteed in other works. We claim that eventual termination
is both desirable in practice and also technically convenient, because
it allows us to easily express the fact that every participant of a
session makes progress (this is non-trivial, especially in an
asynchronous setting).
We have defined two projection methods from global to session types, a
semantic and an algorithmic one. The former allows us to reason about \emph{which}
are the global types that can be projected, the latter about \emph{how} these types are projected.  This
allowed us to define three classes of flawed global types and to
suggest if and how they can be amended. Most notably, we have
characterized the absence of sequentiality solely in terms of the
traces of global types, while we have not been able to provide similar
trace-based characterizations for the other flaws.
Finally, we have defined a notion of completeness relating a global
type and its implementation which is original to the best of our
knowledge. In other theories we are aware of, this property is either completely
neglected or it is stricter, by requiring the equivalence between the
traces of the global type and those of the corresponding
implementation.

\section*{Acknowledgments.}
We are indebted to several members of the LIAFA laboratory: Ahmed
Bouajjani introduced us to Parikh's equivalence, Olivier Carton
explained us subtle aspects of the shuffle operator, Mihaela
Sighireanu pointed us several references to global specification
formalisms, while Wies{\l}aw Zielonka helped us with references on trace
semantics. Anca Muscholl helped us on surveying MSCs and Mart{\'\i}n
Abadi and Roberto Amadio with the literature on security
protocols. Finally, Nobuko Yoshida, Roberto Bruni, Ivan Lanese and the
anonymous referees gave us several useful suggestions to improve the
final version of this work. This work was partially supported by the
ANR TYPEX project n.\ ANR-11-BS02-007, by the MIUR Project IPODS, by a
visiting researcher grant of the ``Fondation Sciences Math\'ematiques
de Paris'', and by a visiting professor position of the Universit\'e
Paris Diderot.


\bibliographystyle{alpha}
\bibliography{session,survey}

\newcommand{\etalchar}[1]{$^{#1}$}
\begin{thebibliography}{BBDNL08}

\bibitem[AEY00]{Alur00}
Rajeev Alur, Kousha Etessami, and Mihalis Yannakakis.
\newblock Inference of message sequence charts.
\newblock In {\em Proceedings of ICSE'00}, pages 304--313. ACM Press, 2000.

\bibitem[AEY01]{Alur01}
Rajeev Alur, Kousha Etessami, and Mihalis Yannakakis.
\newblock Realizability and verification of {MSC} graphs.
\newblock In {\em Proceedings of ICALP'01}, LNCS 2076, pages 797--808.
  Springer, 2001.

\bibitem[AG99]{spi99}
Mart\'{\i}n Abadi and Andrew~D. Gordon.
\newblock A calculus for cryptographic protocols: The spi calculus.
\newblock {\em Information and Computation}, 148:36--47, 1999.

\bibitem[AY99]{AY99}
Rajeev Alur and Mihalis Yannakakis.
\newblock Model checking of message sequence charts.
\newblock In {\em Proceedings of CONCUR'99}, LNCS 1664, pages 114--129.
  Springer, 1999.

\bibitem[BB11]{BasuBultan11}
Samik Basu and Tevfik Bultan.
\newblock Choreography conformance via synchronizability.
\newblock In {\em Proceedings of WWW'11}, pages 795--804. ACM Press, 2011.

\bibitem[BBC{\etalchar{+}}09]{BBCDPS09}
Matteo Baldoni, Cristina Baroglio, Amit~K. Chopra, Nirmit Desai, Viviana Patti,
  and Munindar~P. Singh.
\newblock Choice, interoperability, and conformance in interaction protocols
  and service choreographies.
\newblock In {\em Proceedings of AAMAS'09}, pages 843--850. International
  Foundation for Autonomous Agents and Multiagent Systems, 2009.

\bibitem[BBDNL08]{BBDL08}
Michele Boreale, Roberto Bruni, Rocco De~Nicola, and Michele Loreti.
\newblock Sessions and pipelines for structured service programming.
\newblock In {\em Proceedings of FMOODS'08}, LNCS 5051, pages 19--38. Springer,
  2008.

\bibitem[BBP93]{BergstraBethkePonse93}
Jan~A. Bergstra, Inge Bethke, and Alban Ponse.
\newblock Process algebra with iteration.
\newblock Technical Report Report CS-R9314, Programming Research Group,
  University of Amsterdam, 1993.

\bibitem[BCD{\etalchar{+}}08]{BCDDDY08}
Lorenzo Bettini, Mario Coppo, Loris D'Antoni, Marco De~Luca, Mariangiola
  Dezani-Ciancaglini, and Nobuko Yoshida.
\newblock Global progress in dynamically interleaved multiparty sessions.
\newblock In {\em Proceedings of CONCUR'08}, LNCS 5201, pages 418--433.
  Springer, 2008.

\bibitem[BCD{\etalchar{+}}09]{BCDFL09}
Karthikeyan Bhargavan, Ricardo Corin, Pierre-Malo Deni\'elou, C\'edric Fournet,
  and James~J. Leifer.
\newblock {Cryptographic Protocol Synthesis and Verification for Multiparty
  Sessions}.
\newblock In {\em Proceedings of CSF'09}, pages 124--140. IEEE Computer
  Society, 2009.

\bibitem[BLZ08]{BLZ08}
Mario Bravetti, Ivan Lanese, and Gianluigi Zavattaro.
\newblock Contract-driven implementation of choreographies.
\newblock In {\em Proceedings of TGC'08}, LNCS 5474, pages 1--18. Springer,
  2008.

\bibitem[BZ83]{BZ93}
Daniel Brand and Pitro Zafiropulo.
\newblock On communicating finite-state machines.
\newblock {\em Journal of the ACM}, 30:323--342, 1983.

\bibitem[BZ07]{BZ07}
Mario Bravetti and Gianluigi Zavattaro.
\newblock Towards a unifying theory for choreography conformance and contract
  compliance.
\newblock In {\em Proceedings of SC'07}, LNCS 4829, pages 34--50. Springer,
  2007.

\bibitem[BZ08]{BZ08}
Mario Bravetti and Gianluigi Zavattaro.
\newblock Contract compliance and choreography conformance in the presence of
  message queues.
\newblock In {\em Proceedings of WS-FM'08}, LNCS 5387, pages 37--54. Springer,
  2008.

\bibitem[Car94]{Car94}
Ulf Carlsen.
\newblock Generating formal cryptographic protocol specifications.
\newblock In {\em Proceedings of the IEEE Symposium on Security and Privacy},
  pages 137--146. IEEE Computer Society, 1994.

\bibitem[CF05]{CeceFinkel05}
G\'{e}rard C\'{e}c\'{e} and Alain Finkel.
\newblock Verification of programs with half-duplex communication.
\newblock {\em Information and Computation}, 202:166--190, 2005.

\bibitem[CHY07]{carbone.honda.yoshida:esop07}
Marco Carbone, Kohei Honda, and Nobuko Yoshida.
\newblock Structured communication-centred programming for web services.
\newblock In {\em Proceedings of ESOP'07}, LNCS 4421, pages 2--17. Springer,
  2007.

\bibitem[CP09]{CP09}
Giuseppe Castagna and Luca Padovani.
\newblock Contracts for mobile processes.
\newblock In {\em Proceedings of CONCUR'09}, LNCS 5710, pages 211--228.
  Springer, 2009.

\bibitem[CR10]{CR10}
Yannick Chevalier and Micha\"{e}l Rusinowitch.
\newblock Compiling and securing cryptographic protocols.
\newblock {\em Information Processing Letters}, 110:116--122, 2010.

\bibitem[CV09]{CV09}
Lu\'{\i}s Caires and Hugo~Torres Vieira.
\newblock Conversation types.
\newblock In {\em Proceedings of ESOP'09}, LNCS 5502, pages 285--300. Springer,
  2009.

\bibitem[CVB06]{Cal06}
Carlos Caleiro, Luca Vigan\`{o}, and David Basin.
\newblock On the semantics of {A}lice\&{B}ob specifications of security
  protocols.
\newblock {\em Theoretical Computer Science}, 367:88--122, 2006.

\bibitem[CZ08]{CZ08}
Cai Chao and Qiu Zongyan.
\newblock An approach to check choreography with channel passing in {WS-CDL}.
\newblock In {\em Proceedings of ICWS'08}, pages 700--707. IEEE Computer
  Society, 2008.

\bibitem[DCdLY08]{DLY07}
Mariangiola Dezani-Ciancaglini, Ugo de' Liguoro, and Nobuko Yoshida.
\newblock On progress for structured communications.
\newblock In {\em Proceedings of TGC'07}, LNCS 4912, pages 257--275. Springer,
  2008.

\bibitem[DY11]{DY11}
Pierre-Malo Deni{\'e}lou and Nobuko Yoshida.
\newblock Dynamic multirole session types.
\newblock In {\em Proceedings of POPL'11}, pages 435--446. ACM Press, 2011.

\bibitem[FG96]{FournetGonthier96}
C\'{e}dric Fournet and Georges Gonthier.
\newblock The reflexive {CHAM} and the join-calculus.
\newblock In {\em Proceedings of POPL'96}, pages 372--385. ACM Press, 1996.

\bibitem[GM05]{GM05}
Blaise Genest and Anca Muscholl.
\newblock Message sequence charts: A survey.
\newblock In {\em Proceedings of ACSD'05}, pages 2--4. IEEE Computer Society,
  2005.

\bibitem[GMP03]{GMP03}
Blaise Genest, Anca Muscholl, and Doron Peled.
\newblock Message sequence charts.
\newblock In {\em Lectures on Concurrency and Petri Nets}, LNCS 3098, pages
  537--558. Springer, 2003.

\bibitem[HJ06]{HJ04}
Christian Haack and Alan Jeffrey.
\newblock Pattern-matching spi-calculus.
\newblock {\em Information and Computation}, 204:1195--1263, 2006.

\bibitem[HVK98]{honda.vasconcelos.kubo:language-primitives}
Kohei Honda, Vasco Vasconcelos, and Makoto Kubo.
\newblock Language primitives and type disciplines for structured
  communication-based programming.
\newblock In {\em Proceedings of ESOP'98}, LNCS 1381, pages 22--138. Springer,
  1998.

\bibitem[HYC08]{CHY08}
Kohei Honda, Nobuko Yoshida, and Marco Carbone.
\newblock Multiparty asynchronous session types.
\newblock In {\em Proceedings of POPL'08}, pages 273--284. ACM Press, 2008.

\bibitem[JRV00]{casrl}
Florent Jacquemard, Micha\"{e}l Rusinowitch, and Laurent Vigneron.
\newblock Compiling and verifying security protocols.
\newblock In {\em Proceedings of LPAR'00}, LNCS 1955, pages 131--160. Springer,
  2000.

\bibitem[LGMZ08]{LGMZ08}
Ivan Lanese, Claudio Guidi, Fabrizio Montesi, and Gianluigi Zavattaro.
\newblock Bridging the gap between interaction- and process-oriented
  choreographies.
\newblock In {\em Proceedings of SEFM'08}, pages 323--332. IEEE Computer
  Society, 2008.

\bibitem[Low98]{casper}
Gavin Lowe.
\newblock Casper: A compiler for the analysis of security protocols.
\newblock {\em Journal of Computer Security}, 6:53--84, 1998.

\bibitem[MD02]{MD02}
Jonathan~K. Millen and Grit Denker.
\newblock {CAPSL} and {MuCAPSL}.
\newblock {\em Journal of Telecommunications and Information Technology},
  4:16--27, 2002.

\bibitem[Mil84]{Milner84}
Robin Milner.
\newblock A complete inference system for a class of regular behaviours.
\newblock {\em Journal of Computer and System Sciences}, 28(3):439--466, 1984.

\bibitem[MK08]{McK08}
Jay~A. McCarthy and Shriram Krishnamurthi.
\newblock Cryptographic protocol explication and end-point projection.
\newblock In {\em Proceedings of ESORICS'08}, LNCS 5283, pages 533--547.
  Springer, 2008.

\bibitem[MPS98]{MPS98}
Anca Muscholl, Doron Peled, and Zhendong Su.
\newblock Deciding properties for message sequence charts.
\newblock In {\em Proceedings of FOSSACS'98}, LNCS 1378, pages 226--242.
  Springer, 1998.

\bibitem[MR97]{MR97}
Sjouke Mauw and Michel~A. Reniers.
\newblock High-level message sequence charts.
\newblock In {\em Proceedings of SDL'97}, pages 291--306. Elsevier, 1997.

\bibitem[Par66]{Parikh66}
Rohit~J. Parikh.
\newblock On context-free languages.
\newblock {\em Journal of the Association for Computing Machinery},
  13(4):570--581, 1966.

\bibitem[QZCY07]{QZ07}
Zongyan Qiu, Xiangpeng Zhao, Chao Cai, and Hongli Yang.
\newblock Towards the theoretical foundation of choreography.
\newblock In {\em Proceedings of WWW'07}, pages 973--982. ACM Press, 2007.

\bibitem[Sch04]{Sch04}
Philippe Schnoebelen.
\newblock The verification of probabilistic lossy channel systems.
\newblock In {\em Validation of Stochastic Systems}, LNCS 2925, pages 445--465.
  Springer, 2004.

\bibitem[THK94]{THK}
Kaku Takeuchi, Kohei Honda, and Makoto Kubo.
\newblock An interaction-based language and its typing system.
\newblock In {\em Proceedings of PARLE'94}, LNCS 817, pages 398--413. Springer,
  1994.

\bibitem[WSC05]{WSCDL}
Web services choreography description language version 1.0.
\newblock W3C Candidate Recommendation, available at
  \url{http://www.w3.org/TR/ws-cdl-10/}, 2005.

\bibitem[YDBH10]{YDBH10}
Nobuko Yoshida, Pierre-Malo Deni{\'e}lou, Andi Bejleri, and Raymond Hu.
\newblock Parameterised multiparty session types.
\newblock In {\em Proceedings of FOSSACS'10}, LNCS 6014, pages 128--145.
  Springer, 2010.

\end{thebibliography}
\clearpage
\appendix
\section{Proof of Theorem  \ref{scT}}\label{ptscT}

For the sake of readability we recall some definitions which will be
largely used.

\begin{definition}$\;$\label{os}
  \begin{enumerate}[(1)]
  \item\label{os1} $
    \perm\lang \eqdef \{
  \gact_1\cdots\gact_n
  \mid
  \text{there exists a permutation $\permutation$ such that $\gact_{\permutation(1)}\cdots\gact_{\permutation(n)}\in\lang$}\}.
$
\item\label{os2} $\closure\lang$ is the smallest
well-formed set such that $\lang\subseteq\closure\lang$.
\end{enumerate}
\end{definition}

The properties stated in the following lemma are easily shown from
Definitions \ref{os}, \ref{pog} and \ref{st}.

\begin{lemma}\label{aux}
The following properties hold:
\begin{enumerate}[\em(1)]
\item\label{aux0} $\closure{(\lang_1\cup{\lang_2})}=\closure{\lang_1}\cup\closure{\lang_2}$.
\item\label{aux1} $\closure{(\lang_1\closure{\lang_2})}=\closure{(\lang_1\lang_2)}$.
\item\label{aux1B} $\closure{(\lang_1\perm{\lang_2})}\subseteq\perm{(\lang_1\lang_2)}$.
\item\label{aux2} $\closure{\lang_1}\subseteq \perm{\lang_2}$ implies $\perm{\lang_1}\subseteq \perm{\lang_2}$.
\item\label{aux3}If $\gless{\lang_2}{\lang_1}$ then
\begin{enumerate}[(a)]
\item\label{aux31}$\gless{\lang_3}{\lang_2}$ implies $\gless{\lang_3}{\lang_1}$;
\item\label{aux32}$\gless{\closure{(\lang_4\lang_1)}}{\lang_3}$ implies $\gless{\closure{(\lang_4\lang_2)}}{\lang_3}$;
\item\label{aux33}$\gless{\closure{(\lang_1\lang_4)}}{\lang_3}$ implies $\gless{\closure{(\lang_2\lang_4)}}{\lang_3}$;
\item\label{aux34}$\gless{\lang_4}{\lang_3}$ implies $\gless{\lang_2\cup\lang_4}{\lang_1\cup\lang_3}$.
\end{enumerate}
\item\label{aux4}$\traces(\{ \role : \sesst_1 \intsum \sesst_2 \}  \scup \cont)
=\traces(\{ \role : \sesst_1  \}  \scup \cont) \cup \traces(\{ \role :  \sesst_2 \}  \scup \cont).$
\end{enumerate}
\end{lemma}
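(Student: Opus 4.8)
The proof goes through the items in the stated order, reading each off the definitions of $\closure{\cdot}$ and $\perm{\cdot}$ (Definition~\ref{os}), of well-formedness (Definition~\ref{def:wf}) and of $\gless{}{}$ (Definition~\ref{pog}). First I would record three elementary facts used throughout. (i) $\closure{\cdot}$ is monotone: if $\lang_1\subseteq\varlang$ then $\closure{\varlang}$ is a well-formed superset of $\lang_1$, hence $\closure{\lang_1}\subseteq\closure{\varlang}$. (ii) $\closure{\lang}$ is obtained from $\lang$ by iterating \emph{independence swaps}: whenever a string $\str\gact\vargact\varstr$ is already present and the receiver of $\gact$ occurs neither as a sender nor as the receiver of $\vargact$, one adds $\str\vargact\gact\varstr$; writing $\lang^{(0)}=\lang$ and $\lang^{(n+1)}$ for one round of such additions, $\closure{\lang}=\bigcup_n\lang^{(n)}$, and dually every well-formed superset of $\lang$ is closed under the same operation, hence contains every $\lang^{(n)}$. (iii) $\perm{\lang}$ is itself well-formed and $\perm{\perm{\lang}}=\perm{\lang}$; moreover $\lang\subseteq\closure{\lang}\subseteq\perm{\lang}$, because each independence swap is in particular a transposition. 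In particular $\closure{\perm{\lang}}=\perm{\lang}$.

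Items~\ref{aux0}, \ref{aux1}, \ref{aux1B} are then short. For \ref{aux0}: a union of well-formed sets is well-formed (the swapped string required by an element of the union lies in the same member), so $\closure{\lang_1}\cup\closure{\lang_2}$ is a well-formed superset of $\lang_1\cup\lang_2$ and hence contains $\closure{(\lang_1\cup\lang_2)}$, while the converse inclusion is (i). For \ref{aux1}: $\supseteq$ is (i); for $\subseteq$ it suffices to show $\lang_1\closure{\lang_2}\subseteq\closure{(\lang_1\lang_2)}$, which I would prove by induction on the number of independence swaps producing a given element of $\closure{\lang_2}$ from an element of $\lang_2$, the inductive step using that $\closure{(\lang_1\lang_2)}$ is well-formed to propagate the swap inside the suffix. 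For \ref{aux1B}: if $u\in\lang_1$ and $v$ is a reordering of $v_0\in\lang_2$, then $uv$ is a reordering of $uv_0\in\lang_1\lang_2$, so $\lang_1\perm{\lang_2}\subseteq\perm{(\lang_1\lang_2)}$; the right-hand side is well-formed, so by (i) and $\closure{\perm{\cdot}}=\perm{\cdot}$ applying $\closure{\cdot}$ to the left-hand side stays inside it.

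Items~\ref{aux2} and \ref{aux3} are pure preorder bookkeeping. \ref{aux2} is immediate from $\lang_1\subseteq\closure{\lang_1}\subseteq\perm{\lang_2}$ and $\perm{\perm{\lang_2}}=\perm{\lang_2}$. For \ref{aux3}, assume $\lang_1\subseteq\lang_2\subseteq\perm{\lang_1}$. Then \ref{aux31} is transitivity of $\gless{}{}$ (unfold the definitions and use monotonicity and idempotence of $\perm{\cdot}$). For \ref{aux32} and \ref{aux33}: $\lang_1\subseteq\lang_2$ and (i) give $\closure{(\lang_4\lang_1)}\subseteq\closure{(\lang_4\lang_2)}$, respectively $\closure{(\lang_1\lang_4)}\subseteq\closure{(\lang_2\lang_4)}$, which handles the lower-bound half of $\gless{}{}$; and $\lang_2\subseteq\perm{\lang_1}$ together with \ref{aux1B} (and its mirror with the permuted language on the left, proved identically) and $\perm{\perm{\cdot}}=\perm{\cdot}$ give $\closure{(\lang_4\lang_2)}\subseteq\perm{(\lang_4\lang_1)}\subseteq\perm{\lang_3}$, respectively the symmetric statement, handling the upper-bound half. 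Finally \ref{aux34} is checked element-wise, using monotonicity of $\perm{\cdot}$ and $\perm{\lang}\cup\perm{\varlang}\subseteq\perm{(\lang\cup\varlang)}$.

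The only item with genuine operational content, and the step I would expect to need the most care, is \ref{aux4}. It is the expected unfolding of an internal choice over a shared continuation: from $\emptybuffer\bsep\{\role:\sesst_1\intsum\sesst_2\}\scup\cont$ the only moves available to $\role$ are the outputs offered by $\sesst_1\intsum\sesst_2$, and the first output $\role$ performs commits it to the side $\sesst_i$ that output comes from; every move made by a participant of $\cont$ before that instant is equally available from $\emptybuffer\bsep\{\role:\sesst_i\}\scup\cont$, and after it the two computations are literally the same. Hence the complete computations of $\{\role:\sesst_1\intsum\sesst_2\}\scup\cont$ split as the disjoint union of those of $\{\role:\sesst_1\}\scup\cont$ and of $\{\role:\sesst_2\}\scup\cont$, giving the stated equality of trace sets. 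The point requiring attention is that Definition~\ref{st}'s convention — the traces of a non-live session are empty — must not disturb this: one has to observe that the choice session is live precisely when both $\{\role:\sesst_i\}\scup\cont$ are, since committing the choice at $\role$'s first move to a branch that can reach a stuck configuration already exposes a stuck configuration reachable from the choice session, and conversely a stuck configuration of the choice either still offers the choice (and is then reachable, with the choice already resolved, from one of the branches) or has already resolved it. Establishing this liveness equivalence is where I would put the effort; everything else is routine manipulation of $\closure{\cdot}$ and $\perm{\cdot}$.
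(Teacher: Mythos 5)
The paper offers no proof of this lemma at all (it is dismissed with ``easily shown from Definitions~\ref{os}, \ref{pog} and \ref{st}''), so there is no official route to compare against. For items~(1)--(5) your arguments are correct and are exactly the routine manipulations the authors must have had in mind: the characterization of $\closure{\lang}$ as the closure of $\lang$ under independence swaps, the facts that $\perm{\lang}$ is well formed, that $\lang\subseteq\closure{\lang}\subseteq\perm{\lang}$, and monotonicity/idempotence do all the work, and your reduction of (5b)/(5c) to items (2), (3) and (4) (plus the mirror of (3)) is the right bookkeeping.

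Item~(6) is where there is a genuine gap, and you put your finger on exactly the right spot without closing it. The liveness equivalence you propose to prove --- the choice session is live iff \emph{both} $\{\role:\sesst_i\}\scup\cont$ are --- is compatible with exactly one of the two branch sessions being live, and in that case the stated equality fails: the left-hand side is $\emptyset$ by the guard in Definition~\ref{st}, while the right-hand side contains the trace set of the live branch, which is nonempty because a live session always admits at least one successful computation. This case really occurs. Take $\cont=\{\varrole:\In{\role}{\val}.\End\}$, $\sesst_1=\Out{\varrole}{\val}.\End$ and $\sesst_2=\Out{\varrole}{\valB}.\End$: then $\{\role:\sesst_1\}\scup\cont$ is live with trace set $\{\gaction{\role}{\varrole}{\val}\}$, $\{\role:\sesst_2\}\scup\cont$ is not live, and $\{\role:\sesst_1\intsum\sesst_2\}\scup\cont$ is not live either, since $\role$ may emit $\valB$ and strand it in the buffer; hence the left-hand side is $\emptyset$ and the right-hand side is not. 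So the identity as literally stated needs the additional hypothesis that the two branch sessions are both live or both non-live (equivalently, by your equivalence, that all three liveness statuses agree). That hypothesis does hold wherever the paper invokes the lemma --- in the \rulename{SP-Alternative} and \rulename{SP-Iteration} cases the inductive invariant relates both branch projections to the same continuation $\cont$, which forces their trace sets to be empty or nonempty together --- but your proof must either assume it or derive it where the lemma is applied. A secondary imprecision: ``the first output $\role$ performs commits it to the side it comes from'' is not quite true, because of the identification $\Out{\role}{\val}.\sesst\intsum\Out{\role}{\val}.\varsesst=\Out{\role}{\val}.(\sesst\intsum\varsesst)$; when $\sesst_1$ and $\sesst_2$ share an initial output prefix the residual is again a choice, so the splitting of successful computations should be argued by induction along the computation rather than by a single commitment step.
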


\begin{proof}[Proof of Theorem \ref{scT}] We show:
\begin{center}
{\em If $\project{\cont}{\gtype}{\cont'}$, then $\gless{\closure{(\traces(\gtype)\traces(\cont))}}{\traces(\cont')}$.}
\end{center}
The theorem follows immediately, since by definition if $\gtype$ is well formed, then $\traces(\gtype)=\closure{\traces(\gtype)}$.

The proof is
by induction on the deduction of $\project{\cont}{\gtype}{\cont'}$ and by cases on the last applied rule.\\

Rule \rulename{SP-Skip}:
\inferrule[]{}{
    \project{\cont}{\gend}{\cont}
  } Immediate.\\
  
Rule \rulename{SP-Action}:
\[
 \inferrule[]{}{
    \project{
      \{ \role_i : \sesst_i \}_{i\in I}
      \scup
      \{ \role : \sesst \}
      \scup
      \cont
    }{
      \gaction{\roles}{\role}{\val}
    }{
      \{ \role_i : \Out\role\val.\sesst_i \}_{i\in I}
      \scup
      \{ \role : \In{\roles}\val.\sesst \}
      \scup
      \cont
    }
  }
\]
where $\roles = \{ \role_i \mid i \in I \}$.  We get
$\traces(\gaction{\roles}{\role}{\val})=\set{\gaction{\roles}{\role}{\val}}$
by definition, and
\[
  \traces(\{ \role_i : \Out\role\val.\sesst_i \}_{i\in I}
      \scup
      \{ \role : \In\roles\val.\sesst \}
      \scup
      \cont)
      \subseteq
      \closure{(\set{\gaction{\roles}{\role}{\val}}\traces(\{ \role_i : \sesst_i \}_{i\in I}
      \scup
      \{ \role : \sesst \}
      \scup
      \cont))}
\]
since all actions not involving $\role$ commute with
$\gaction{\roles}{\role}{\val}$, and
\[
\closure{(\set{\gaction{\roles}{\role}{\val}}\traces(\{ \role_i : \sesst_i \}_{i\in I}
      \scup
      \{ \role : \sesst \}
      \scup
      \cont))}\subseteq\perm{\traces(\{ \role_i : \Out\role\val.\sesst_i \}_{i\in I}
      \scup
      \{ \role : \In\roles\val.\sesst \}
      \scup
      \cont)}
\]
by Definition \ref{os}.
      
Rule \rulename{SP-Sequence}:  \inferrule[]{
    \project{\cont}{\gtype_2}{\cont'}
    \\
    \project{\cont'}{\gtype_1}{\cont''}
  }{
    \project{\cont}{\gtype_1\gseq\gtype_2}{\cont''}
  }\\
  By induction
  $\gless{\closure{(\traces(\gtype_1)\traces(\cont'))}}{\traces(\cont'')}$
  and
  $\gless{\closure{(\traces(\gtype_2)\traces(\cont))}}{\traces(\cont')}$,
  which imply
  \begin{iteMize}{$\bullet$}
  \item $\gless{\closure{(\traces(\gtype_1)\closure{(\traces(\gtype_2)\traces(\cont))})}}{\traces(\cont'')}$ by Lemma \ref{aux}(\ref{aux32});
  \item $\gless{\closure{(\traces(\gtype_1)\traces(\gtype_2)\traces(\cont))}}{\traces(\cont'')}$ by Lemma \ref{aux}(\ref{aux1});
  \item $\gless{\closure{(\traces(\gtype_1\gseq\gtype_2)\traces(\cont))}}{\traces(\cont'')}$ by Definition \ref{def:traces}.
   \end{iteMize} 
  
Rule \rulename{SP-Alternative}: \inferrule[]{
    \project{\cont}{\gtype_1}{
      \{ \role : \sesst_1 \} \scup \cont'
    }
    \\
    \project{\cont}{\gtype_2}{
      \{ \role : \sesst_2 \} \scup \cont'
    }
  }{
    \project{\cont}{\gtype_1 \gor \gtype_2}{
      \{ \role : \sesst_1 \intsum \sesst_2 \}
      \scup
      \cont'
    }
  }\\
  By induction
  $\gless{\closure{(\traces(\gtype_1)\traces(\cont))}}{\traces(\{
    \role : \sesst_1 \} \scup\cont')}$ and
  $\gless{\closure{(\traces(\gtype_2)\traces(\cont))}}{\traces(\{
    \role : \sesst_2 \} \scup\cont')}$, which imply
  \begin{iteMize}{$\bullet$}
  \item $\gless{\closure{(\traces(\gtype_1)\traces(\cont))}\cup\closure{(\traces(\gtype_2)\traces(\cont))}}{\traces(\{ \role : \sesst_1 \}\scup\cont')\cup\traces(\{ \role : \sesst_2 \}\scup\cont')}$ by Lemma \ref{aux}(\ref{aux34});
  \item $\gless{\closure{(\traces(\gtype_1)\traces(\cont)\cup\traces(\gtype_2)\traces(\cont))}}{\traces(\{ \role : \sesst_1 \}\scup\cont')\cup\traces(\{ \role : \sesst_2 \}\scup\cont')}$ by Lemma \ref{aux}(\ref{aux0});
   \item $\gless{\closure{(\traces(\gtype_1)\traces(\cont)\cup\traces(\gtype_2)\traces(\cont))}}{\traces(\{ \role : \sesst_1 \intsum\sesst_2 \} \scup\cont')}$ by Lemma \ref{aux}(\ref{aux4});
  \item $\gless{\closure{(\traces(\gtype_1\gor\gtype_2)\traces(\cont))}}{\traces(\{ \role : \sesst_1 \intsum\sesst_2 \} \scup\cont')}$ by Definition \ref{def:traces}.
   \end{iteMize}
   
Rule \rulename{SP-Iteration}:  \inferrule[]{
    \project{
      \{ \role : \sesst_1 \intsum \sesst_2 \} \scup \cont
    }{\gtype}{
      \{ \role : \sesst_1 \} \scup \cont
    }
  }{
    \project{
      \{ \role : \sesst_2 \} \scup \cont
    }{\gtype\gstar}{
      \{ \role : \sesst_1 \intsum \sesst_2 \} \scup \cont
    }
  }
\\
By induction $\gless{\closure{(\traces(\gtype)\traces(\{ \role :
    \sesst_1 \intsum \sesst_2 \} \scup \cont))}}{\traces(\{ \role :
  \sesst_1 \} \scup\cont)}$, \ie:
\begin{enumerate}[1.]
\item $\traces(\{ \role : \sesst_1 \} \scup\cont)\subseteq\closure{(\traces(\gtype)\traces(\{ \role : \sesst_1 \intsum \sesst_2 \}  \scup \cont))}$
\item   
  $\closure{(\traces(\gtype)\traces(\{ \role : \sesst_1 \intsum \sesst_2 \}  \scup \cont))}\subseteq\perm{\traces(\{ \role : \sesst_1 \} \scup\cont)}$.
\end{enumerate}
Notice that by Definition \ref{def:traces} and Lemma \ref{aux}(\ref{aux0}):
\[\closure{(\traces(\gtype^*)\traces(\{ \role : \sesst_2 \} \scup
  \cont))}=\bigcup_{m\geq 0}\closure{(\traces(\gtype^m)\traces(\{
  \role : \sesst_2 \} \scup \cont))}\]
We get: 
\[
\begin{array}{l@{\qquad}l}
\multicolumn{2}{l}{\traces(\{ \role : \sesst_1 \intsum \sesst_2 \}  \scup \cont)}\\
&=\traces(\{ \role : \sesst_1  \}  \scup \cont)\cup \traces(\{ \role :  \sesst_2 \}  \scup \cont)\hfill\text{by Lemma \ref{aux}(\ref{aux4})}\\
&\subseteq\closure{(\traces(\gtype)\traces(\{ \role : \sesst_1 \intsum \sesst_2 \}  \scup \cont))}\cup \traces(\{ \role :  \sesst_2 \}  \scup \cont)\hfill\text{by 1.}\\
&=\closure{(\traces(\gtype)(\traces(\{ \role : \sesst_1 \}  \scup \cont)\cup\traces(\{ \role :  \sesst_2 \}  \scup \cont))}
\cup \traces(\{ \role :  \sesst_2 \}  \scup \cont)\hfill\text{by Lemma \ref{aux}(\ref{aux4})}\\
&=\closure{(\traces(\gtype)\traces(\{ \role : \sesst_1 \}  \scup \cont))}\cup\closure{(\traces(\gtype)\traces(\{ \role :  \sesst_2 \}  \scup \cont))}
\cup \traces(\{ \role :  \sesst_2 \}  \scup \cont)\\
&\hfill\text{by Lemma \ref{aux}(\ref{aux0})}\\
&\subseteq\closure{(\traces(\gtype)\closure{(\traces(\gtype)\traces(\{ \role : \sesst_1 \intsum \sesst_2 \}  \scup \cont))})}
\cup\closure{(\traces(\gtype)\traces(\{ \role :  \sesst_2 \}  \scup \cont))}
\cup \traces(\{ \role :  \sesst_2 \}  \scup \cont)\\
&\hfill\text{by 1.}\\
&=\closure{(\traces(\gtype)\traces(\gtype)\traces(\{ \role : \sesst_1 \intsum \sesst_2 \}  \scup \cont))}
\cup\closure{(\traces(\gtype)\traces(\{ \role :  \sesst_2 \}  \scup \cont))}
\cup \traces(\{ \role :  \sesst_2 \}  \scup \cont)\\
&\hfill\text{by Lemma \ref{aux}(\ref{aux1})}\\
&=\closure{(\traces(\gtype^2)\traces(\{ \role : \sesst_1 \intsum \sesst_2 \}  \scup \cont))}
\cup\closure{(\traces(\gtype)\traces(\{ \role :  \sesst_2 \}  \scup \cont))}
\cup \traces(\{ \role :  \sesst_2 \}  \scup \cont)\\
&\hfill\text{by Definition \ref{def:traces}}
\end{array}
\]
and then by iterating:
\[
\begin{array}{l@{\qquad}l}
\multicolumn{2}{l}{\traces(\{ \role : \sesst_1 \intsum \sesst_2 \}  \scup \cont)}\\
&\subseteq\closure{(\traces(\gtype^{m+1})\traces(\{ \role : \sesst_1 \intsum \sesst_2 \}  \scup \cont))}\cup
\closure{(\traces(\gtype^{m})\traces(\{ \role : \sesst_2 \}  \scup \cont))}\cup\\
&\ldots\cup\closure{(\traces(\gtype)\traces(\{ \role :  \sesst_2 \}  \scup \cont))}\cup \traces(\{ \role :  \sesst_2 \}  \scup \cont)\\
&\subseteq\closure{(\traces(\gtype^*)\traces(\{ \role : \sesst_2 \}  \scup \cont))}.
\end{array}
\]

We show by induction on $m$ that
$\closure{(\traces(\gtype^{m+1})\traces(\{ \role : \sesst_2 \} \scup
  \cont))}\subseteq\perm{(\traces(\{ \role : \sesst_1 \intsum \sesst_2
  \} \scup \cont))}$.
For $m=0$: \[
\begin{array}{llll}\closure{(\traces(\{ \role : \sesst_2 \}  \scup \cont))}&\subseteq&\closure{(\traces(\{ \role : \sesst_1 \intsum \sesst_2 \}  \scup \cont))}&\text{by Lemma \ref{aux}(\ref{aux4}) and Definition \ref{os}(\ref{os2})}\\
&\subseteq&\perm{(\traces(\{ \role : \sesst_1 \intsum \sesst_2 \}  \scup \cont))}&\text{by Definition \ref{os}.}
\end{array}
\]
For $m+1$:
\[
\begin{array}{llll}
  \closure{(\traces(\gtype^{m+1})\traces(\{ \role : \sesst_2 \}  \scup \cont))}&=&\closure{(\traces(\gtype)\traces(\gtype^{m})\traces(\{ \role : \sesst_2 \}  \scup \cont))}&\text{by Definition \ref{def:traces}}\\
  &\subseteq&\closure{(\traces(\gtype)\perm{(\traces(\{ \role : \sesst_1 \intsum
    \sesst_2 \} \scup \cont))})}
  &\text{by induction}\\
  &\subseteq&\perm{(\traces(\gtype)\traces(\{ \role : \sesst_1 \intsum
    \sesst_2 \} \scup \cont))}
  &\text{by Lemma \ref{aux}(\ref{aux1B})}\\
  &\subseteq&\perm{(\traces(\{ \role : \sesst_1 \} \scup \cont))}
  &\text{by 2. and Lemma \ref{aux}(\ref{aux2})}\\
  &\subseteq&\perm{(\traces(\{ \role : \sesst_1 \intsum \sesst_2 \}
    \scup \cont))}
  &\text{by Lemma \ref{aux}(\ref{aux4}).}\\ \\
\end{array}
\]

Rule \rulename{SP-Subsumption}: \inferrule[]{
  \project{\cont}{\gtype'}{\cont'}
  \\
  \gless {\gtype}{\gtype'}
  \\
  \gless {\cont' }{ \cont''} }{ \project{\cont}{\gtype}{\cont''}
}\\
By induction
$\gless{\closure{(\traces(\gtype')\traces(\cont))}}{\traces(\cont')}$,
so by Lemma \ref{aux}(\ref{aux31})
$\gless{\closure{(\traces(\gtype')\traces(\cont))}}{\traces(\cont'')}$. From
$\gless\gtype {\gtype'}$ we conclude
$\gless{\closure{(\traces(\gtype)\traces(\cont))}}{\traces(\cont'')}$
by Lemma \ref{aux}(\ref{aux33}) and (\ref{aux31}).
\end{proof}

\begin{corollary}\label{sc}
If $\cont$ is live and $\project{\cont}{\gtype}{\cont'}$,  then  $\cont'$ is live.
\end{corollary}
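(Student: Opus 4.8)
The plan is to read the corollary off the strengthened claim that is actually proved by induction in Appendix~\ref{ptscT}, namely that $\project{\cont}{\gtype}{\cont'}$ implies $\gless{\closure{(\traces(\gtype)\traces(\cont))}}{\traces(\cont')}$. Unfolding Definition~\ref{pog}, this contains in particular the inclusion
\[
\closure{(\traces(\gtype)\traces(\cont))} \subseteq \perm{\traces(\cont')} .
\]
The whole point is then to argue that the left-hand side is non-empty, which forces $\traces(\cont')$ to be non-empty, which by Definition~\ref{st} is possible only when $\cont'$ is live.

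First I would check that $\traces(\cont)\neq\emptyset$. Since $\cont$ is live, instantiating Definition~\ref{def:liveness} with the trivial computation $\emptybuffer\bsep\cont\wlred{\varepsilon}\emptybuffer\bsep\cont$ produces some $\varstr$ with $\emptybuffer\bsep\cont\wlred{\varstr}\emptybuffer\bsep\{\role_i:\End\}_{i\in I}$, so $\varstr\in\traces(\cont)$. Next, $\traces(\gtype)\neq\emptyset$ holds for every global type $\gtype$ (observed just before Definition~\ref{st}). Hence $\traces(\gtype)\traces(\cont)\neq\emptyset$, and since $\lang\subseteq\closure{\lang}$ for every $\lang$, also $\closure{(\traces(\gtype)\traces(\cont))}\neq\emptyset$.

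Combining this with the displayed inclusion gives $\perm{\traces(\cont')}\neq\emptyset$; since $\lang\subseteq\perm{\lang}$ and $\perm{\lang}=\emptyset$ whenever $\lang=\emptyset$, I conclude $\traces(\cont')\neq\emptyset$. Finally, Definition~\ref{st} sets $\traces(\cont')=\emptyset$ unless $\cont'$ is a live session, so $\cont'$ must be live, which is the statement.

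There is no real obstacle here: all the mathematical work is already carried out by the induction underlying Theorem~\ref{scT}. The only point requiring a little care is to invoke that induction in its general form — the one quantified over an arbitrary continuation $\cont$ and not assuming well-formedness — rather than the polished statement of Theorem~\ref{scT} itself, which is stated only for empty continuations and well-formed $\gtype$ and so would not directly yield the claim.
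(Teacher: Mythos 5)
Your argument is correct and is essentially the proof the paper intends: Corollary~\ref{sc} is stated immediately after the induction in Appendix~\ref{ptscT} precisely so that it can be read off the strengthened, continuation-parametric claim $\gless{\closure{(\traces(\gtype)\traces(\cont))}}{\traces(\cont')}$, exactly as you do. Your care in noting that one must use this generalized form (not the well-formedness/empty-continuation statement of Theorem~\ref{scT}) and in checking non-emptiness of $\traces(\cont)$ via the zero-step computation is exactly the right bookkeeping.
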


\section{More on merge and compatibility}\label{mmc}

We start with an example showing the utility of the compatibility
condition. Let
$\cont_1=\set{\varrole:\In{\role}{\val}. \Out{\roleR}{\varval}.\End,\roleR:\In{\varrole}{\varval}.\End}$
and $\cont_2=\set{
  \varrole:\In{\role}{\valC}. \Out{\role}{\valD}.\Out{\roleR}{\varval}.\End,
  \roleR:\In{\role}{\valE}.\In{\varrole}{\varval}.\End}$. The merge of
$\cont_1$ and $\cont_2$ is undefined, since the session types of
$\roleR$ in $\cont_1$ and $\cont_2$ are not compatible: the problem is
that the input $\In{\varrole}{\varval}$ is not compatible with the
session type $\In{\role}{\valE}.\In{\varrole}{\varval}.\End$.  Let
$\cont$ be the session obtained by adding role $\role$ with the
expected session type to the merge of $\cont_1$ and $\cont_2$
(ignoring the compatibility condition), that is, $\cont=\set{\role:
  \Out{\varrole}{\val}.\End\intsum\Out{\varrole}{\valC}.\In{\varrole}{\valD}.\Out{\roleR}{\valE}.\End,
  \varrole:\In{\role}{\val}. \Out{\roleR}{\varval}.\End\extsum\In{\role}{\valC}. \Out{\role}{\valD}.\Out{\roleR}{\varval}.\End,
  \roleR:\In{\varrole}{\varval}.\End\extsum\In{\role}{\valE}.\In{\varrole}{\varval}.\End}$. Starting
from the empty buffer and $\cont$ we can reach the stuck configuration
in which the buffer contains the action
$\gaction{\role}{\roleR}{\valE}$ and all roles in the session are
typed by $\End$. More precisely if
$\str=\gaction{\role}{\varrole}{\valC}\gseq\gaction{\varrole}{\role}{\valD}$:
\[
\begin{array}{lll}
   \emptybuffer\bsep
 \cont &\wlred{\str}& \gaction{\varrole}{\roleR}{\varval}::\gaction{\role}{\roleR}{\valE}\bsep \set{\role:\End,\varrole:\End,\roleR:\In{\varrole}{\varval}.\End\extsum\In{\role}{\valE}.\In{\varrole}{\varval}.\End}\\
 &\wlred{\gaction{\varrole}{\roleR}{\varval}}& \gaction{\role}{\roleR}{\valE}\bsep \set{\role:\End,\varrole:\End,\roleR:\End}
\end{array}
\]
\ie, participant $\roleR$ chooses the wrong session type, since he is
not aware in which branch he is. Notice that $\cont_1\scup\set{\role:
  \Out{\varrole}{\val}.\End}$ and $\cont_2\scup\set{\role:
  \Out{\varrole}{\valC}.\In{\varrole}{\valD}.\Out{\roleR}{\valE}.\End}$
can be obtained as algorithmic projections of the well-formed global
types
$\gtype_1=\gaction{\role}{\varrole}{\val}\gseq\gaction{\varrole}{\roleR}{\varval}$
and
$\gtype_2=\gaction{\role}{\varrole}{\valC}\gseq(\gaction{\varrole}{\role}{\valD}\gseq\gaction{\role}{\roleR}{\valE}\gand\gaction{\varrole}{\roleR}{\varval})$,
when to project $\gtype_2$ we use the ill-formed global type
$\gaction{\role}{\varrole}{\valC}\gseq\gaction{\varrole}{\role}{\valD}\gseq\gaction{\role}{\roleR}{\valE}\gseq\gaction{\varrole}{\roleR}{\varval}$
(see Subsection \ref{gts}). Using
$\gaction{\role}{\varrole}{\valC}\gseq\gaction{\varrole}{\roleR}{\varval}\gseq\gaction{\varrole}{\role}{\valD}\gseq\gaction{\role}{\roleR}{\valE}$
to project $\gtype_2$ and reasoning as before we get
$\cont'=\set{\role:
  \Out{\varrole}{\val}.\End\intsum\Out{\varrole}{\valC}.\In{\varrole}{\valD}.\Out{\roleR}{\valE}.\End,
  \varrole:\In{\role}{\val}. \Out{\roleR}{\varval}.\End\extsum\In{\role}{\valC}. \Out{\roleR}{\varval}.\Out{\role}{\valD}.\End,
  \roleR:\In{\varrole}{\varval}.\End\extsum\In{\role}{\valE}.\In{\varrole}{\varval}.\End}$. Also
$\cont'$ is not a live session, and since we eliminated $\gand$ from
$\gtype_2$ in all possible ways we see no way to semantically
project $\gtype_1\gor\gtype_2$.

We can semantically but not algorithmically project a slight variation
of the previous example. Let $\cont_3=\set{
  \varrole:\In{\role}{\valC}. \Out{\role}{\valD}.\In{\roleR}{\valF}.\Out{\roleR}{\varval}.\End,
  \roleR:\In{\role}{\valE}.\Out{\varrole}{\valF}.\In{\varrole}{\varval}.\End}$. Notice
that the session types of $\roleR$ in $\cont_1$ and $\cont_3$ are not
compatible. It is easy to verify that choosing $\cont''=\set{
  \varrole:\In{\role}{\val}. \Out{\roleR}{\varval}.\End\extsum\In{\role}{\valC}. \Out{\role}{\valD}.\In{\roleR}{\valF}.\Out{\roleR}{\varval}.\End,
  \roleR:\In{\varrole}{\varval}.\End\extsum\In{\role}{\valE}.\Out{\varrole}{\valF}.\In{\varrole}{\varval}.\End}$
we get
\[
\begin{array}{c}
\gless{\set{\role: \Out{\varrole}{\val}.\End}\scup\cont_1}{\set{\role: \Out{\varrole}{\val}.\End}\scup\cont''}\text{ and }\\
   \gless{\set{\role: \Out{\varrole}{\valC}.\In{\varrole}{\valD}.\Out{\roleR}{\valE}.\End}\scup\cont_3}{\set{\role: \Out{\varrole}{\valC}.\In{\varrole}{\valD}.\Out{\roleR}{\valE}.\End}\scup\cont''}
\end{array}
\]
Notice that $\cont_3\scup\set{\role:
  \Out{\varrole}{\valC}.\In{\varrole}{\valD}.\Out{\roleR}{\valE}.\End}$
can be obtained as the algorithmic projection of the well-formed
global type
$\gtype_3=\gaction{\role}{\varrole}{\valC}\gseq\gaction{\varrole}{\role}{\valD}\gseq\gaction{\role}{\roleR}{\valE}\gseq\gaction{\roleR}{\varrole}{\valF}\gseq\gaction{\varrole}{\roleR}{\varval}$.
Then the global type $\gtype_1\gor\gtype_3$ can be semantically but
not algorithmically projected. It is interesting to observe that in
one branch participant $\roleR$ receives the message $\valB$ from
$\varrole$, in the other branch participant $\roleR$ receives first
the messages $\valE$ from $\role$ and then the message $\valB$ from
$\varrole$. This assures that $\roleR$ always chooses the right
session type.  Comparing $\gtype_2$ and $\gtype_3$ of previous
examples one can see how the addition of the action
$\gaction{\roleR}{\varrole}{\valF}$ introduces a sequentialization
which is the key of projectability.

\section{More on the elimination of $\gand$}\label{moreee}

We conjecture that the following rewriting rules (together with the
symmetric ones) are necessary and sufficient in order to eliminate
\quotesymbol\gand{} from global types:
\[\begin{array}{lll@{\qquad\qquad\qquad\qquad}lll}
\gtype\gand\gtype'&\mapsto&\gtype\gseq\gtype'&(\gtype_1\gor\gtype_2)\gand\gtype&\mapsto&(\gtype_1\gand\gtype)\gor(\gtype_2\gand\gtype)\\
(\gtype_1\gseq\gtype_2)\gand\gtype&\mapsto&(\gtype_1\gand\gtype)\gseq\gtype_2&\gtype\gstar\gand\gtype'&\mapsto&(\gtype\gand\gtype')\gseq\gtype\gstar\gor\gtype'\\
(\gtype_1\gseq\gtype_2)\gand\gtype&\mapsto&\gtype_1\gseq(\gtype_2\gand\gtype)&
\gtype\gstar\gand\gtype'&\mapsto&\gtype\gstar\gseq(\gtype\gand\gtype')\gor\gtype'
\end{array}\]
Sometimes \quotesymbol\gand{} with stars can be dealt with using the first rule in the right way. 
 The global type
 $(\gaction{\role}{\varrole}{\val})^*\gand\gaction{\role}{\varrole}{\varval}$
 sequentialized as
 $(\gaction{\role}{\varrole}{\val})^*\gseq\gaction{\role}{\varrole}{\varval}$
 is algorithmically projected from $\set{\role:\End,\varrole:\End}$,
 while
 $\gaction{\role}{\varrole}{\varval}\gseq(\gaction{\role}{\varrole}{\val})^*$
 is not algorithmically projected from
 $\set{\role:\End,\varrole:\End}$.  Vice versa the global type
 $(\gaction{\role}{\varrole}{\val})^*\gand\gaction{\roleR}{\roleS}{\varval}$
 sequentialized as
 $(\gaction{\role}{\varrole}{\val})^*\gseq\gaction{\roleR}{\roleS}{\varval}$
 is not algorithmically projected from $\set{\role:\Out{\varrole}{\valC}.\End,\varrole:\In{\role}{\valC}.\End}$,
 while
 $\gaction{\roleR}{\roleS}{\varval}\gseq(\gaction{\role}{\varrole}{\val})^*$
 is algorithmically projected from
 $\set{\role:\Out{\varrole}{\valC}.\End,\varrole:\In{\role}{\valC}.\End}$.
 
 The following example shows the utility of the last two rewriting rules to project stars. Let $\cont=\set{\roleS: \Out{\varrole_1}{\valD}.\Out{\roleR_1}{\valD}.\Out{\varrole_2}{\valD}.\Out{\roleR_2}{\valD}.\End, \varrole_1:\In{\roleS}{\valD}, \roleR_1:\In{\roleS}{\valD}, \varrole_2:\In{\roleS}{\valD}, \roleR_2:\In{\roleS}{\valD}}$, ${\mathcal A}_i=\gaction{\role_i}{\varrole_i}{\val}\gor\gaction{\role_i}{\roleR_i}{\varval}$ for $i=1,2$ and $\gtype_1=\gaction{\set{\varrole_1,\roleR_1}}{\roleS}{\valC}\gseq{\mathcal A}_1$, $\gtype_2={\mathcal A}_2\gseq\gaction{\roleS}{\varrole_1}{\valE}\gseq\gaction {\roleS}{\roleR_1}{\valE}$, $\gtype=\gaction{\set{\varrole_2,\roleR_2}}{\roleS}{\valF}\gseq{\mathcal A}_1\gseq{\mathcal A}_2$. The only way to eliminate \quotesymbol\gand{} from $(\gtype_1\gseq\gtype_2)\gstar\gand\gtype$ and obtain a global type projectable  with the continuation $\cont$ is $[(\gtype_1\gseq\gtype\gseq\gtype_2)\gseq(\gtype_1\gseq\gtype_2)\gstar]\gor\gtype$.
 
\section{Proof of Theorem \ref{thm:ap}}\label{ptthm:ap}

\begin{lemma}\label{dl}
The following properties hold:
\begin{enumerate}[\em(1)]
\item\label{dl1} If $\set{\role:\sesst} \scup\cont$ is live, then $\traces(\set{\role:\sesst} \scup\cont)= \traces(\set{\role:\sesst} \scup(\cont\asup\cont'))$ for all $\cont'$ such that $\cont\asup\cont'$ is defined.
\item\label{dl2} If $\set{\role:\sesst_1} \scup\cont_1$ and $\set{\role:\sesst_2} \scup\cont_2$ are live, then $\{ \role : \sesst_1 \intsum \sesst_2 \}  \scup ({\cont_1}\asup{\cont_2})$ is live if defined.
\end{enumerate}
\end{lemma}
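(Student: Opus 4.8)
The plan is to prove the two items with the proof of Theorem~\ref{thm:ap} in mind: they are precisely the facts about $\asup$ needed to replay rules \rulename{AP-Alternative} and \rulename{AP-Iteration} through their semantic counterparts together with \rulename{SP-Subsumption}, and since the relation $\gless{}{}$ between sessions is only informative when both sides are live (a non-live session has empty trace set by Definition~\ref{st}), the \emph{liveness} claims --- not the bare trace equality --- are the real content of item~(1). I keep the distinguished role $\role$ fixed throughout, and I recall that $\cont\asup\cont'$, when defined, is computed participant-wise, so $\cont$ and $\cont'$ share a domain and, for each participant, the merge either identifies two equal output choices, recursively merges the common input branches, or grafts an input branch of one side onto the other \emph{provided that branch is compatible with the other side's type}; in particular $\asup$ is commutative.

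For item~(1) I would set up a reduction-by-reduction correspondence between $\emptybuffer\bsep\{\role:\sesst\}\scup\cont$ and $\emptybuffer\bsep\{\role:\sesst\}\scup(\cont\asup\cont')$, organised as a relation $\mathcal{R}$ on configurations that pairs a $\cont$-configuration with the merged configuration obtained by grafting the $\cont'$-branches (in the sense of $\asup$) onto it, subject to the two configurations carrying the \emph{same buffer}. The inclusion $\traces(\{\role:\sesst\}\scup\cont)\subseteq\traces(\{\role:\sesst\}\scup(\cont\asup\cont'))$ is then the easy half: since $\asup$ leaves every output choice untouched and only enlarges external choices, the merged session can mimic each reduction of the $\cont$-session while staying in $\mathcal{R}$; moreover $\cont\asup\cont'$ being defined forces a participant's $\cont'$-type to be $\End$ whenever its $\cont$-type is $\End$, so a terminated configuration $\emptybuffer\bsep\{\role_i:\End\}_{i\in I}$ on the $\cont$-side is matched by the very same terminated configuration on the merged side, along the same trace. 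For the converse I would show that every reduction of $\{\role:\sesst\}\scup(\cont\asup\cont')$ also stays inside $\mathcal{R}$ and is therefore shadowed by a reduction of $\{\role:\sesst\}\scup\cont$; the two driving observations are that outputs are copied verbatim by $\asup$, so along a merged run the buffer only ever holds messages a $\cont$-run would produce, and that the \emph{compatibility} condition folded into the definition of $\asup$, together with the per-pair FIFO discipline on buffers, prevents the extra input branches contributed by $\cont'$ from ever becoming enabled in such a run. Since $\{\role:\sesst\}\scup\cont$ is live its trace set is non-empty, so the resulting trace equality also forces $\{\role:\sesst\}\scup(\cont\asup\cont')$ to be live --- a consequence I reuse below.

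For item~(2) I would show directly that $\{\role:\sesst_1\intsum\sesst_2\}\scup(\cont_1\asup\cont_2)$ is live, by a case analysis on whether, in a given reachable configuration $\buffer\bsep\cont''$, the role $\role$ has already performed its first output and thereby resolved its internal choice. First, $\cont_1\asup\cont_2=\cont_2\asup\cont_1$ is defined and each $\{\role:\sesst_k\}\scup\cont_k$ is live, so applying item~(1) with the pair $(\cont_k,\cont_{3-k})$ shows $\{\role:\sesst_k\}\scup(\cont_1\asup\cont_2)$ is live for $k=1,2$. If in $\buffer\bsep\cont''$ the role $\role$ has committed to branch $k$, then $\buffer\bsep\cont''$ is reachable also from $\emptybuffer\bsep\{\role:\sesst_k\}\scup(\cont_1\asup\cont_2)$ --- replay the same reductions, $\role$'s first output being a guard of $\sesst_k$ --- so liveness of that session extends the computation to $\emptybuffer\bsep\{\role_i:\End\}_{i\in I}$. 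If instead $\role$ is still at $\sesst_1\intsum\sesst_2$, the output rule lets $\role$ fire a guard of, say, $\sesst_1$, yielding a configuration of the previous kind, and one finishes as before. Combining the two cases gives liveness of $\{\role:\sesst_1\intsum\sesst_2\}\scup(\cont_1\asup\cont_2)$.

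I expect the bottleneck to be the second half of item~(1), that is, making precise that no $\cont'$-branch is ever enabled along a merged run: this is exactly where the somewhat delicate definition of compatibility earns its keep, and the examples in Appendix~\ref{mmc} suggest that the invariant supporting the induction must track not only the control state of each participant but the entire buffer, so that the clauses of the compatibility definition can be applied branch by branch. Everything else --- the two easy inclusions, the transfer of liveness, and the commitment analysis for item~(2) --- is the bookkeeping sketched above.
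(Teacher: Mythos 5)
Your argument is correct and follows essentially the same route as the paper, which compresses it into two sentences: for item~(1), liveness of the original session means every output has a matching input, so along any run of the merged session the buffer only ever contains messages the original would consume, and compatibility together with the per-pair FIFO discipline keeps the grafted input branches from ever firing; for item~(2), the merged types differ only on inputs, so the combined session inherits liveness from the two components. The one detail to watch in your item~(2) is that $\sesst_1$ and $\sesst_2$ may share an output guard (pre-session types are quotiented by $\Out{\role}{\val}.\sesst\intsum\Out{\role}{\val}.\varsesst=\Out{\role}{\val}.(\sesst\intsum\varsesst)$), so ``the first output resolves the choice'' should be replaced by an induction on the residual internal choice, which does not change the structure of the argument.
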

\begin{proof}(\ref{dl1}) If $\set{\role:\sesst}\scup\cont$ is live,
  then each output in a session type of $\set{\role:\sesst}\scup\cont$
  has a dual input and therefore the addition of compatible inputs
  cannot change the set of traces.

  (\ref{dl2}) If $\cont_1\asup\cont_2$ is defined, then the types in
  $\cont_1$ and $\cont_2$ for the same participant can only differ on
  inputs, so no new trace can arise in
  $\set{\role:\sesst}\scup(\cont_1\asup\cont_2)$ which was not already
  in $\set{\role:\sesst_1}\scup\cont_1$.
\end{proof}

We use $\substitu$ to range over substitutions of session type variables with closed session types. We extend $\substitu$ to session types and environments in the expected way. 

\begin{lemma}\label{sf}
If $\substitution\cont$ is live and $ \projecta{\cont}{\gtype}{\cont'}$, then  $\substitution{\cont'}$ is live.
\end{lemma}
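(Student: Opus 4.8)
The plan is to prove the lemma by induction on the derivation of $\projecta{\cont}{\gtype}{\cont'}$, with a case analysis on the last rule used; since $\gand$ carries no algorithmic projection rule, only the five rules of Table~\ref{tab:algorithmic_projection} can occur. The recurring tool will be the elementary remark that \emph{a session $\cont$ is live iff $\traces(\cont)\neq\EMPTY$}: if $\cont$ is live then $\emptybuffer\bsep\cont\wlred{}\emptybuffer\bsep\cont$ forces a terminating computation out of $\emptybuffer\bsep\cont$, hence a trace; conversely $\traces(\cont)=\EMPTY$ whenever $\cont$ is not live, by Definition~\ref{st}. I will also use throughout that $\substitu$ commutes with $\intsum$, $\extsum$, prefixing, recursion (up to fold/unfold, after $\alpha$-renaming bound recursion variables away from $\dom\substitu$) and, where it is defined, with $\asup$; in particular $\substitution{\cont_1\asup\cont_2}=\substitution{\cont_1}\asup\substitution{\cont_2}$, and definedness/compatibility of merges is preserved by $\substitu$.

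\rulename{AP-Skip} is immediate ($\cont'=\cont$), and \rulename{AP-Sequence} follows from two applications of the induction hypothesis along $\projecta{\cont}{\gtype_2}{\cont_1}$ and $\projecta{\cont_1}{\gtype_1}{\cont'}$, starting from the hypothesis that $\substitution{\cont}$ is live. For \rulename{AP-Action} the claim reduces to a prefixing fact: putting the matching prefixes $\Out\role\val$ in front of each $\role_i$ and $\In{\set{\role_i}_{i\in I}}\val$ in front of $\role$ preserves liveness. This holds because the interaction $\gaction{\set{\role_i}_{i\in I}}\role\val$ is always completable from every reachable configuration (each $\Out\role\val$ fires unconditionally and, by the per-pair FIFO discipline, produces the first $\val$-message that $\role_i$ sends to $\role$, so once all of them have fired $\role$'s input is enabled) and commutes with every other transition; hence every computation of the prefixed session can be rearranged so that this interaction is completed first, after which the configuration is $\emptybuffer\bsep\substitution{\cont}$ with $\cont$ the source of the rule — live by hypothesis — and the terminated state is reachable. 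For \rulename{AP-Alternative}, since $\substitution{\cont}$ is live the induction hypothesis on the two premises gives that $\substitution{\set{\role:\sesst_1}\scup\cont_1}$ and $\substitution{\set{\role:\sesst_2}\scup\cont_2}$ are live, and Lemma~\ref{dl}(\ref{dl2}) then yields that $\set{\role:\substitution{\sesst_1}\intsum\substitution{\sesst_2}}\scup(\substitution{\cont_1}\asup\substitution{\cont_2})=\substitution{\set{\role:\sesst_1\intsum\sesst_2}\scup(\cont_1\asup\cont_2)}$ is live (the two "rest'' components coincide, so the merge with itself is itself).

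The real difficulty is \rulename{AP-Iteration}, projecting $\gtype\gstar$ with source $\set{\role:\sesst}\scup\set{\role_i:\sesst_i}_{i\in I}\scup\cont_0$, result $\cont'=\set{\role:\rec{\recvar}{(\sesst\intsum\varsesst)}}\scup\set{\role_i:\rec{\recvar_i}{(\sesst_i\asup\varsesst_i)}}_{i\in I}\scup\cont_0$, and premise $\projecta{\set{\role:\recvar}\scup\set{\role_i:\recvar_i}_{i\in I}\scup\cont_0}{\gtype}{\set{\role:\varsesst}\scup\set{\role_i:\varsesst_i}_{i\in I}\scup\cont_0}$; the hypothesis is that $\substitution{\set{\role:\sesst}\scup\set{\role_i:\sesst_i}_{i\in I}\scup\cont_0}$ is live. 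Write $U=\substitution{\rec{\recvar}{(\sesst\intsum\varsesst)}}$ and $U_i=\substitution{\rec{\recvar_i}{(\sesst_i\asup\varsesst_i)}}$; one unfolding gives $U=\substitution{\sesst}\intsum V$ and $U_i=\substitution{\sesst_i}\asup V_i$, where $V$ (resp.\ $V_i$) is one execution of the projection of $\gtype$ for $\role$ (resp.\ $\role_i$) followed by $U$ (resp.\ $U_i$), i.e.\ $V=\substitution{\varsesst}[U/\recvar]$ and $V_i=\substitution{\varsesst_i}[U_i/\recvar_i]$, and the goal is that $\substitution{\cont'}=\set{\role:U}\scup\set{\role_i:U_i}_{i\in I}\scup\substitution{\cont_0}$ is live. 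A first non-circular fact: $\cont_A:=\set{\role:\substitution{\sesst}}\scup\set{\role_i:U_i}_{i\in I}\scup\substitution{\cont_0}$ is live, by Lemma~\ref{dl}(\ref{dl1}) applied to the given live session, merging each $\role_i$-component with $V_i$ (defined, since $\sesst_i\asup\varsesst_i$ is) — the trace set, hence non-emptiness, is unchanged. The naive attempt to close the case by one further appeal to the main induction hypothesis is circular, since the natural instantiation of $\substitu$ would have $\substitution{\cont'}$ itself as precondition; the plan is to break this with an inner induction on the number $n$ of loop traversals. Define $\cont^{(n)}=\set{\role:\sesst^{(n)}}\scup\set{\role_i:U_i}_{i\in I}\scup\substitution{\cont_0}$, where $\sesst^{(0)}=\substitution{\sesst}$ and $\sesst^{(n+1)}=\substitution{\sesst}\intsum\substitution{\varsesst}[\sesst^{(n)}/\recvar]$ is the behavior forcing $\role$ out of the loop within its next $n+1$ visits; then $\cont^{(0)}=\cont_A$ is live, and for the step one feeds the main induction hypothesis the substitution $\substitu\cup\set{\recvar\mapsto\sesst^{(n)},\ \recvar_i\mapsto U_i}$, whose precondition on the premise's source is exactly ``$\cont^{(n)}$ live'' (the inner hypothesis), obtaining that $\set{\role:\substitution{\varsesst}[\sesst^{(n)}/\recvar]}\scup\set{\role_i:V_i}_{i\in I}\scup\substitution{\cont_0}$ is live; Lemma~\ref{dl}(\ref{dl1}) then replaces each $V_i$ by $V_i\asup\substitution{\sesst_i}=U_i$, and Lemma~\ref{dl}(\ref{dl2}) recombines the result with $\cont_A$ to give $\cont^{(n+1)}$ live. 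Finally, since $\sesst^{(n)}$ simulates $U$ for its first $n$ loop traversals while the $\role_i$-components and $\cont_0$ are unchanged, every configuration reachable from $\emptybuffer\bsep\substitution{\cont'}$ is reachable from some $\emptybuffer\bsep\cont^{(n)}$, so liveness of all the $\cont^{(n)}$ entails liveness of $\substitution{\cont'}$. The principal obstacle is precisely this reconciliation of the genuinely recursive types $U$, $U_i$ produced by \rulename{AP-Iteration} with the finite approximants that the induction hypothesis can actually be applied to; the remaining points — commutation (and preserved definedness) of $\substitu$ with $\asup$, and the transfer of liveness, not merely of traces, through Lemma~\ref{dl} via the equivalence above — are routine.
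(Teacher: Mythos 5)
Your proof is correct and takes essentially the same route as the paper's: induction on the algorithmic derivation, Lemma~\ref{dl}(\ref{dl2}) for \rulename{AP-Alternative}, and, for \rulename{AP-Iteration}, iterated applications of the outer induction hypothesis to a family of substitutions that yield finite approximants of the recursive types, followed by a passage to the limit. The only differences are organizational: the paper approximates the $\role_i$-components too (combining the ``exactly $\ell$ iterations'' behaviors afterwards via Lemma~\ref{dl}(\ref{dl2})) and takes the limit by a finite-subtree argument, whereas you keep the $U_i$ exact via Lemma~\ref{dl}(\ref{dl1}) and take the limit by a simulation argument --- both equally (in)formal at that final step.
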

\begin{proof}
By induction on the derivation of $ \projecta{\cont}{\gtype}{\cont'}$.  We only consider interesting cases.\\

\noindent
For rule \rulename{AP-Alternative}: \inferrule[]{
    \projecta{\cont}{\gtype_1}{
      \{ \role : \sesst_1 \} \scup \cont_1
    }
    \\
    \projecta{\cont}{\gtype_2}{
      \{ \role : \sesst_2 \} \scup \cont_2
    }
  }{
    \projecta{\cont}{\gtype_1 \gor \gtype_2}{
      \{ \role : \sesst_1 \intsum \sesst_2 \}
      \scup
      (\cont_1\asup\cont_2)
    }
  } we use Lemma \ref{dl}(\ref{dl2}).\\

\noindent
For rule \rulename{AP-Iteration}:\\ \inferrule[]{
    \projecta{
      \{ \role : \recvar \}
      \scup \{ \role_i : \recvar_i \}_{i\in I}
     \scup \cont}{\gtype}{
      \{ \role : \varsesst \}
      \scup \{ \role_i : \varsesst_i \}_{i\in I}
     \scup \cont}
  }{
    \projecta{
      \{ \role : \sesst \}
      \scup \{ \role_i : \sesst_i \}_{i\in I}
      \scup \cont
    }{\gtype\gstar}{
      \{ \role : \rec{\recvar}{(\sesst \intsum \varsesst)} \}
      \scup \{ \role_i : \rec{\recvar_i}{(\sesst_i \asup \varsesst_i)} \}_{i\in I}
      \scup \cont
    }
  }\\
 we define 
\[\begin{array}{lll@{\qquad}lll}
\substitutionZ\recvar&=&\substitution{\sesst}&\substitutionLP\recvar&=&\substitutionL\varsesst\\
\substitutionZ{\recvar_i}&=&\substitution{\sesst_i}&\substitutionLP{\recvar_i}&=&\substitutionL{\varsesst_i}\\
\substitutionZ{Y}&=&\substitution{\varY}\text{ for } \varY\not\in\set{\recvar,\recvar_i\mid i\in I}&\substitutionLP{Y}&=&\substitution{\varY}\text{ for } \varY\not\in\set{\recvar,\recvar_i\mid i\in I}\\
\end{array}\]
for $i\in I$ and $\ell\geq0$. Since $\substitution{\{ \role : \sesst \}
  \scup \{ \role_i : \sesst_i \}_{i\in I}
  \scup \cont}=\substitutionZ{\{ \role : \recvar \}
  \scup \{ \role_i : \recvar_i \}_{i\in I} \scup\cont}$ is live by hypothesis and $\projecta{
  \{ \role : \recvar \}
  \scup \{ \role_i : \recvar_i \}_{i\in I}
}{\gtype}{
  \{ \role : \varsesst \}
  \scup \{ \role_i : \varsesst_i \}_{i\in I}
}
$, by induction we get that $\substitutionZ{\{ \role : \varsesst \}
  \scup \{ \role_i : \varsesst_i \}_{i\in I}       \scup \cont}=\substitutionU{\{ \role : \recvar \}
  \scup \{ \role_i : \recvar_i \}_{i\in I} \scup\cont}$ is live. By iterating this argument we get the liveness of  $\substitutionLP{\{ \role : \recvar \}
  \scup \{ \role_i : \recvar_i \}_{i\in I} \scup\cont}$ from the liveness of  $\substitutionL{\{ \role : \recvar \}
  \scup \{ \role_i : \recvar_i \}_{i\in I} \scup\cont}$ for all $\ell\geq0$. By Lemma \ref{dl}(\ref{dl2})  $\{ \role : \substitutionZ{\recvar}\intsum\cdots\intsum\substitutionL{\recvar} \}
\scup \{ \role_i : \substitutionZ{\recvar_i}\asup\cdots\asup\substitutionL{\recvar_i} \}_{i\in I}
\scup \substitution{\cont}$ is live for all $\ell\geq0$. By construction every finite subtree of $\rec{\recvar}{(\substitution{\sesst \intsum \varsesst})}$ is a subtree of $\substitutionZ{\recvar}\intsum\cdots\intsum\substitutionL{\recvar}$ for some $\ell\geq0$ and  every finite subtree of $\rec{\recvar}{(\substitution{\rec{\recvar_i}{(\sesst_i \asup \varsesst_i)}})}$ is a subtree of $\substitutionZ{\recvar_i}\asup\cdots\asup\substitutionL{\recvar_i}$ for some $\ell\geq0$. We can conclude that $\substitution{\{ \role : \rec{\recvar}{(\sesst \intsum \varsesst)} \}
  \scup \{ \role_i : \rec{\recvar_i}{(\sesst_i \asup \varsesst_i)} \}_{i\in I}
  \scup \cont}$ is live. 
\end{proof}

\begin{lemma}\label{en}
If $ \project{\cont}{\gtype}{\cont'}$ and $\traces(\cont'')=\traces(\cont)$, then  $ \project{\cont''}{\gtype}{\cont'}$.
\end{lemma}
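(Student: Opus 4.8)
The plan is to prove the lemma \emph{without} induction, by a short direct derivation: from $\project{\cont}{\gtype}{\cont'}$ we build $\project{\cont''}{\gtype}{\cont'}$ using only \rulename{SP-Skip}, \rulename{SP-Sequence} and two instances of \rulename{SP-Subsumption}. The point to keep in mind is that no rule of Table~\ref{tab:semantic_projection} rewrites the continuation on the left of a judgment --- in particular \rulename{SP-Subsumption} may change $\gtype$ and the produced session but leaves the continuation untouched --- so the only way to ``replace'' the continuation $\cont$ by the trace-equivalent $\cont''$ is to route the derivation through a sequential composition with the unit $\gend$, exploiting that $\traces(\gtype\gseq\gend)=\traces(\gtype)\traces(\gend)=\traces(\gtype)$ by Definition~\ref{def:traces} (recall $\traces(\gend)=\{\varepsilon\}$).

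First I would record the few instances of $\gless{}{}$ that are needed. From $\traces(\cont'')=\traces(\cont)$ and Definition~\ref{pos}, we get $\gless{\cont''}{\cont}$, because $\traces(\cont)\subseteq\traces(\cont'')\subseteq\perm{\traces(\cont)}$ holds trivially when the two trace sets coincide (the second inclusion since every language is contained in its permutation closure). For the same reason $\gless{\cont'}{\cont'}$ and $\gless{\gend}{\gend}$ hold, and, using the identity $\traces(\gtype\gseq\gend)=\traces(\gtype)$, also $\gless{\gtype}{\gtype\gseq\gend}$ (which unfolds to $\traces(\gtype)\subseteq\traces(\gtype)\subseteq\perm{\traces(\gtype)}$).

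The derivation is then: (1) $\project{\cont''}{\gend}{\cont''}$ by \rulename{SP-Skip}; (2) $\project{\cont''}{\gend}{\cont}$, from (1) by \rulename{SP-Subsumption} with side conditions $\gless{\gend}{\gend}$ and $\gless{\cont''}{\cont}$; (3) $\project{\cont''}{\gtype\gseq\gend}{\cont'}$, from (2) and the hypothesis $\project{\cont}{\gtype}{\cont'}$ by \rulename{SP-Sequence} (taking $\gtype_1=\gtype$, $\gtype_2=\gend$, with intermediate continuation $\cont$); (4) $\project{\cont''}{\gtype}{\cont'}$, from (3) by \rulename{SP-Subsumption} with side conditions $\gless{\gtype}{\gtype\gseq\gend}$ and $\gless{\cont'}{\cont'}$. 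There is no genuinely hard step here, and in particular the argument never inspects the structure of $\cont$, $\cont''$ or $\gtype$, so it applies verbatim in full generality. What deserves a moment's care is the direction of the containments hidden inside the asymmetric relation $\gless{}{}$ --- replacing a continuation by a trace-\emph{equivalent} one is licit precisely because trace equivalence is the symmetric instance of $\gless{}{}$ --- and the fact that $\gend$ is a unit for $\gseq$ only at the level of trace sets and not syntactically, which is exactly what the final \rulename{SP-Subsumption} step is there to absorb.
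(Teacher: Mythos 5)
Your derivation is correct and is essentially the paper's own proof: both route the judgment through \rulename{SP-Skip} and \rulename{SP-Subsumption} to obtain $\project{\cont''}{\gend}{\cont}$, then apply \rulename{SP-Sequence} with the hypothesis and a final \rulename{SP-Subsumption} using that $\gend$ is a unit for $\gseq$ at the level of traces. If anything, you are slightly more careful than the paper, which writes the intermediate global type as $\gend\gseq\gtype$ where the stated form of \rulename{SP-Sequence} actually yields $\gtype\gseq\gend$ (immaterial, since the two have the same traces).
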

\begin{proof}
  We can derive $ \project{\cont''}{\gend}{\cont''}$, which implies
  $\project{\cont''}{\gend}{\cont}$. Then
  $\project{\cont''}{\gend\gseq\gtype}{\cont'}$, so we conclude $
  \project{\cont''}{\gtype}{\cont'}$.
\end{proof}


\begin{proof}[Proof of Theorem \ref{thm:ap}]
We show
\begin{center}{\em
If $\substitution\cont$ is live and $ \projecta{\cont}{\gtype}{\cont'}$, then  $\project{\substitution\cont}{\gtype}{\substitution{\cont'}}$}
\end{center}
by induction on the derivation of $ \projecta{\cont}{\gtype}{\cont'}$.
  
  If the last applied rule is \rulename{AP-Alternative}:
\[ \inferrule{
    \projecta{\cont}{\gtype_1}{
      \{ \role : \sesst_1 \}  \scup \cont_1
    }
    \\
    \projecta{\cont}{\gtype_2}{
      \{ \role : \sesst_2 \}  \scup \cont_2
    }
  }{
    \projecta{\cont}{\gtype_1 \gor \gtype_2}{
      \{ \role : \sesst_1 \intsum \sesst_2 \}
       \scup
      (\cont_1\asup\cont_2)
    }
  }
\]  by induction $\project{\substitution\cont}{\gtype_1}{\substitution{\{ \role : \sesst_1 \}  \scup\cont_1}}$ and    $\project{\substitution\cont}{\gtype_2}{
     \substitution{ \{ \role : \sesst_2 \}  \scup \cont_2}}$.  By Lemma \ref{sf} 
     $\substitution{\{ \role : \sesst_1 \}  \scup\cont_1}$ and $\substitution{\{ \role : \sesst_2 \}  \scup \cont_2}$ are live. 
     By Lemma \ref{dl}(\ref{dl1}) we get $\gless {\substitution{\{ \role : \sesst_1 \}  \scup\cont_1}}{\substitution{\{ \role : \sesst_1 \}  \scup(\cont_1\asup\cont_2)}}$ and 
     $\gless{\substitution{\{ \role : \sesst_2 \}  \scup\cont_2}}{ \substitution{\{ \role : \sesst_2 \}  \scup(\cont_1\asup\cont_2)}}$.
      We can then derive $\project{\substitution\cont}{\gtype_1}{\substitution{\{ \role : \sesst_1 \}  \scup(\cont_1\asup\cont_2)}}$ and    
      $\project{\substitution\cont}{\gtype_2}{ \substitution{\{ \role : \sesst_2 \}  \scup (\cont_1\asup\cont_2)}}$ by rule \rulename{SP-Subsumption}, so we conclude $\project{\substitution\cont}{\gtype_1 \gor \gtype_2}{
     \substitution{ \{ \role : \sesst_1 \intsum \sesst_2 \}
       \scup
      (\cont_1\asup\cont_2)
    }}$ by rule \rulename{SP-Alternative}.
    
   Let the last applied rule be \rulename{AP-Iteration}:
   \[ \inferrule{
    \projecta{
      \{ \role : \recvar \}
       \scup \{ \role_i : \recvar_i \}_{i\in I}
    \scup \cont}{\gtype}{
      \{ \role : \varsesst \}
       \scup \{ \role_i : \varsesst_i \}_{i\in I}
    \scup \cont}
  }{
    \projecta{
      \{ \role : \sesst \}
       \scup \{ \role_i : \sesst_i \}_{i\in I}
       \scup \cont
    }{\gtype\gstar}{
      \cont'       \scup \cont
    }
  }
\]
 where $\cont'=\{ \role : \rec{\recvar}{(\sesst \intsum \varsesst)} \}
       \scup \{ \role_i : \rec{\recvar_i}{(\sesst_i \asup \varsesst_i)} \}_{i\in I}
$.
    If $\substitution{\{ \role : \sesst \}
       \scup \{ \role_i : \sesst_i \}_{i\in I}
       \scup \cont}$ is live, then $\substitution{\cont'       \scup \cont
}$ is live by Lemma \ref{sf}.   
We define 
\[\begin{array}{lll}
\substitutionZ\recvar&=&\substitution{\rec{\recvar}{(\sesst \intsum \varsesst)}}\\
\substitutionZ{\recvar_i}&=&\substitution{\rec{\recvar_i}{(\sesst_i \asup \varsesst_i)}}\\
\substitutionZ{Y}&=&\substitution{\varY}\text{ for } \varY\not\in\set{\recvar,\recvar_i\mid i\in I}
\end{array}\]
Since $\substitutionZ{ \{ \role : \recvar \}
       \scup \{ \role_i : \recvar_i \}_{i\in I} \scup \cont}=\substitution{\cont'       \scup \cont
}$ we get by induction $\project{
      \substitution{\cont'       \scup \cont
}
    }{\gtype}{\substitutionZ{
      \{ \role : \varsesst \}
       \scup \{ \role_i : \varsesst_i \}_{i\in I}  \scup \cont
    }}$. This implies that $\substitutionZ{
      \{ \role : \varsesst \}
       \scup \{ \role_i : \varsesst_i \}_{i\in I}  \scup \cont
    }$ is live by Corollary \ref{sc}.
    We define:
    \[\begin{array}{llllllll}
    \sesst'&=&\substitution\sesst&\qquad&\sesst'_i&=&\substitution{\sesst_i}\\
    \varsesst'&=&\substitutionZ\varsesst&\qquad&\varsesst'_i&=&\substitutionZ{\varsesst_i}
    \end{array}\]
      \[\begin{array}{lll}
      \cont_0&=&\{ \role_i : \sesst'_i \asup \varsesst'_i \}_{i\in I} \scup \substitution{\cont}
      \end{array}\]
      Since $\substitution{\cont'       \scup \cont}=\{ \role :\sesst' \intsum \varsesst' \} \scup\cont_0$ and by Lemma \ref{dl}(\ref{dl1}) 
      $\gless{\{ \role :\varsesst' \}  \scup \{ \role_i : \varsesst'_i \}_{i\in I}  \scup \substitution\cont}{\{ \role :\varsesst' \}  \scup\cont_0}$ we derive
      $\project{
\{ \role :\sesst' \intsum \varsesst' \} \scup\cont_0
    }{\gtype}{\{ \role :\varsesst' \}  \scup\cont_0}$ by rule \rulename{SP-Subsumption}, which implies $\project{
\{ \role :\sesst'  \} \scup\cont_0
    }{\gtype\gstar}{\{ \role :\sesst' \intsum \varsesst'\}  \scup\cont_0}$ by rule \rulename{SP-Iteration}.
    By Lemma \ref{dl}(\ref{dl1}) $\traces(\{ \role :\sesst'  \} \scup\cont_0)=\traces(\{ \role :\sesst' \}  \scup \{ \role_i : \sesst'_i \}_{i\in I}  \scup \substitution\cont)$, so we conclude by Lemma \ref{en} 
    $\project{
\{ \role :\sesst' \}  \scup \{ \role_i : \sesst'_i \}_{i\in I}  \scup \substitution\cont
    }{\gtype\gstar}{\substitution{\cont'       \scup \cont}}$.
\end{proof}
  

\end{document}